\documentclass[11pt]{article}
\usepackage{tikz}
\usepackage{subcaption}
\usepackage{eurosym}
\usepackage{amsthm}
\newtheorem{assumption}{Assumption}
\newtheorem{proposition}{Proposition}
\newtheorem{remark}{Remark}
\usepackage{amsmath}
\usepackage{amssymb}
\usepackage{mathtools}
\usepackage{graphicx}
\usepackage{setspace}
\usepackage{sectsty}
\usepackage[authoryear]{natbib}
\usepackage[margin=1in]{geometry}
\usepackage{color}
\IfFileExists{bbm.sty}{\usepackage{bbm}}{}
\usepackage{booktabs}
\usepackage{pdflscape}
\usepackage{dcolumn}
\usepackage{threeparttable}
\IfFileExists{siunitx.sty}{\usepackage{siunitx}}{}%
\usepackage{array}
\usepackage{appendix}
\usepackage{enumitem}
\usepackage{placeins}
\usepackage{fancyhdr}
\usepackage{xr-hyper}
\usepackage[breaklinks=true,hidelinks]{hyperref}
\hypersetup{pdftitle={Tolerable Inflation, Intolerable Uncertainty},pdfauthor={Eric Vansteenberghe}}
\usepackage{cases}
\usepackage{etoolbox}
\usepackage{titlesec}
\usepackage{ragged2e}
\usepackage{lmodern,textcomp}
\usepackage[utf8]{inputenc}
\usepackage{longtable} %
\usepackage{import}

\theoremstyle{plain}                     %
  \newtheorem{theorem}{Theorem}          %
  \newtheorem{lemma}[theorem]{Lemma}
  \newtheorem{corollary}[theorem]{Corollary}

\theoremstyle{definition}                %
  \newtheorem{definition}[theorem]{Definition}

\newcommand{\Var}{\operatorname{Var}}
\newcommand{\E}{\mathbb{E}}
\newcommand{\Cov}{\operatorname{Cov}}
\newcommand{\Iset}{\mathcal{I}}

\newcommand{\lstar}{\lambda^{*}}
\newcommand{\lcb}{\hat\lambda^{CB}}
\newcommand{\lstare}{\lambda^{*}_{e}}   %
\newcommand{\lcbe}{\lambda^{CB}_{e}}    %
\newcommand{\lB}{\lambda_{B}}
\newcommand{\leff}{\lambda_{\mathrm{eff}}}
\newcommand{\de}{d_t^{e}}              %
\newcommand{\dpl}{(\de)_{+}}           %
\newcommand{\sw}{\sigma^{2}_{\omega}}
\newcommand{\ep}{\epsilon_{p}}         %

\newcommand{\rsub}{\rho_{\mathrm{sub}}}
\newcommand{\robj}{\rho_{\mathrm{obj}}}

\begin{document}

\hypersetup{pageanchor=false}
\begin{titlepage}
\renewcommand{\baselinestretch}{1}
	\title{
		\sc{
			\Large{
            Tolerable Inflation, Intolerable Uncertainty
			}%
		}%
	\thanks{I am grateful to Marco Airaudo, Jean Barth\'{e}lemy, Agn\`{e}s B\'{e}nassy-Qu\'{e}r\'{e}, Efrem Castelnuovo, Laurent Clerc, Olena Havrylchyk, Magali Marx, Adrian Penalver, Guido Traficante and Elias Wolf for helpful comments and discussions. I also thank participants at Banque de France seminars and at the JIR, NEUR and PIEC conferences and workshops. The views expressed in this paper are solely those of the author and do not necessarily reflect the views of his past, present, or future employers. Replication code for every figure and table, and for the measures themselves, is available at \url{https://github.com/skimeur/normalized-uncertainty}.}
		\\
	}
\author{
\textbf{Eric Vansteenberghe}%
\thanks{
Banque de France and Universit\'{e} Paris~1 Panth\'{e}on-Sorbonne. E-mail: \href{mailto:eric.vansteenberghe@banque-france.fr}{\nolinkurl{eric.vansteenberghe@banque-france.fr}}
}
}

	\date{September 2026}

	\maketitle

\abstract{\noindent
Numerical inflation targets anchor beliefs. Across euro-area and US professional forecasts,
inflation swaps and options, and realized inflation, uncertainty about inflation is compressed at
the announced number and kinks exactly there. This paper identifies a cost of the same design that,
to our knowledge, has not been shown before, and that appears in second moments only. Within the
workhorse New Keynesian model, tolerating part of the inflation a supply shock produces is optimal,
yet the optimal tolerated share is not identified: optimal look-through and an unwarranted drift of
the effective target are observationally equivalent in the inflation history, so a central bank
cannot demonstrate that a warranted deviation is warranted, ex post as much as in real time. Agents
holding finite, heterogeneous patience then generate predictive variance that is flat below the
target and rises linearly with the expected overshoot above it---an observational-equivalence bill,
zero at the announced number and accumulating with the point-years inflation spends above it. The
first-order benefit of the number stands; what the cost changes is how inflation uncertainty must be
measured. The distance from target explains $70\%$ of the variation in professional forecast
variance, and the bill lies within that component; Normalized Uncertainty---to our knowledge the
first such correction---removes it. The purge changes inference substantially: on French loan-level
data, raw dispersion is unrelated to corporate loan rates, while one standard deviation of the
purged measure is associated with rates $89$ basis points higher, and cross-country growth and
time-series results shift similarly. Conventional measures of inflation uncertainty partly record
inflation's distance from its anchor rather than uncertainty about the outlook.
\label{abstract}
}

\textit{Keywords:} Uncertainty, Tolerance, Observational Equivalence, Identification, Monetary Policy, Inflation.

\noindent
\footnotesize \textit{JEL codes:} C18, D81, E31, E52, E58.

\thispagestyle{empty}
\end{titlepage}
\hypersetup{pageanchor=true}
    \section{Introduction}
\label{sec:intro}

The Federal Reserve has ``no tolerance for persistently elevated inflation'' \citep{warsh2026testimony}: however right and credible its Chairman, it cannot demonstrate it has no such tolerance---no central bank can---and this impossibility is paid for not in inflation but in uncertainty. A central bank does not, and should not, react to every inflation deviation in full. When a deviation is judged transitory and supply-driven, looking through part of it is optimal; when it signals a persistent, demand-driven drift, leaning against it is. How much of a given deviation to tolerate therefore depends on its nature---how persistent it is---and on the nature of the shock behind it---how much of it is supply. It is a deliberate, state-contingent choice, not a loss of resolve. We write that choice into the bank's target: it responds to the deviation of inflation from an \emph{effective} target---the announced number plus the part of the current impulse it chooses to accommodate---rather than from the announced number itself. To an observer of the policy rule this reads as a response to only a fraction $1-\lambda$ of the inflation gap, $\varphi_\pi(1-\lambda)$, and two distinct objects meet in that product. The coefficient $\varphi_\pi$ is the \emph{rule}: how aggressively this bank responds to inflation as a standing matter. The share $\lambda$ is the \emph{dial}: how much of \emph{this} deviation the bank's target absorbs, an episode-specific, state-contingent setting. Only the product is ever observed---a hawk tolerating half the gap ($\varphi_\pi=2$, $\lambda=\tfrac12$) responds exactly like a dove tolerating nothing ($\varphi_\pi=1$, $\lambda=0$)---yet the two configurations mean opposite things: deliberate, reversible look-through in the first case, plain softness in the second. Everything normative in this paragraph is model-relative: it is the optimum of the dominant New-Keynesian framework, which this paper adopts.

What is \emph{not} optimal, and cannot be made so, is knowing how much to tolerate. The optimal accommodation is a well-defined object---the supply share of the shock times a structural look-through intensity---and it is unidentifiable, not for want of data but by construction. It inherits both of the great identification failures of empirical macroeconomics: the intensity $\lstar=1/(1+\ep\kappa)$ combines a conventionally calibrated demand elasticity $\ep$ with the Phillips slope $\kappa$, an object seventy years of academic research have not pinned; and the dial is revealed only through $\varphi_\pi(1-\lambda)$, one layer behind a policy-rule coefficient whose own identification is still in dispute. It then adds walls of its own: the supply share $s_t$ is latent in real time and carries most of the optimum's irreducible variance; parameter uncertainty manufactures tolerance that no preference generated; the configurations that would separate the dial from the rule differ only in \emph{counterfactual menus} that equilibrium histories never play; and the state that fixes the optimum resolves only after the moment in which the bank must act. The consequence is a statement in the oldest register of the identification literature: a tolerated gap that optimal policy prescribes and a gap that unwarranted drift would produce are \emph{observationally equivalent} in the sense of \citet{sargent1976observational}---two readings of one history that no accumulation of equilibrium data separates.\footnote{The phrase is \citeauthor{koopmans1949identification}'s: ``We shall call two structures $S$ and $S'$ (observationally) equivalent (or indistinguishable) if the two conditional distributions of endogenous variables generated by $S$ and $S'$ are identical for all possible values of the exogenous variables'' \citep[p.~133]{koopmans1949identification}. It entered macroeconomics under the title of \citet{sargent1976observational}, applied there not to structural parameters but to the invariance assumption itself.} The verdict on a central bank's stance can be argued; it cannot be settled.

The contract behind these episodes is asymmetric in its two coordinates, and the asymmetry organizes everything that follows. The announced target is a point in the \emph{level}---two per cent, printed and indexed---but an interval in \emph{time}: the return is promised ``over the medium term.'' The vagueness leaves the bank free to tolerate the part of a deviation it would be optimal to tolerate. The ECB says so in its own words: its definition of the medium term ``is flexible because the appropriate monetary policy response to a deviation of inflation from the target depends on the origin, magnitude and persistence of the deviation,'' and the medium-term perspective ``allows us to be patient when confronted with temporary shocks that may dissipate on their own'' \citep{ecb2021orientation}. The three words that fix the window---origin, magnitude, persistence---name the three objects this paper is about: the composition the public cannot learn, the magnitude that resolves only after the moment of action, and the persistence whose reading the equivalence leaves open. But it transfers a decision to the public: an agent who is never told \emph{when} must decide for herself how long is long enough, and a population of such agents holds a distribution of private deadlines. Each deadline is extended over a bet its holder can never verify---the deviation may be tolerance at its optimum or drift wearing its clothes---and patience extended over an unverifiable bet is costly if the bet is wrong, so the deadlines are finite, and they are consumed as the overshoot accumulates, point-year by point-year. The bank's vagueness about timing purchases its flexibility; the public's finite patience prices it.

That impossibility is not free, and pricing it is this paper's main contribution. As inflation spends time above target, the unverifiable alternative---that the anchor itself is drifting---gains adherents one private threshold at a time, and the variance of inflation forecasts rises with the overshoot: flat below the target, linear above it, kinked at the announced number rather than at any estimated neutral level. We document that law in four distinct sources and show that the same one-sided shape appears in the variance of realized inflation. We call the resulting cost the \emph{observational-equivalence bill}. It is proportional to the point-years inflation spends above target, it is paid in the second moments rather than the first, and it cannot be engineered away---because what charges it is precisely the impossibility of demonstrating that the tolerance is optimal.

The cost is stated beside the benefits of announcing, not against them. Announcing a numerical objective has first-order benefits that the academic literature has demonstrated: an announced number coordinates expectations, compresses their dispersion and holds them through shocks \citep{levin2004macroeconomic,gurkaynak2010does}---the anchor the ECB names as the purpose of its own target \citep{ecb2024pricestab}. Nothing in this paper questions that case, and nothing in it argues for announcing less. What the paper adds is an entry on the other side of the ledger that had not, to our knowledge, been priced: the variance an announced number carries while inflation sits above it and the bank cannot demonstrate why. The reason to price it is measurement. The same footprint sits inside every measure of inflation uncertainty built from forecast densities or market prices, and it has to be purged before such measures are put to academic or policy use---a correction that, Section~\ref{sec:nu} shows, changes the answers those measures have been giving.

The bill's footprint is quantitative, and it contaminates measurement. In the ECB Survey of Professional Forecasters, the distance of expected inflation from the $2\%$ target explains $70\%$ of the variation in forecast variance, with the below-target side flat; realized euro-area inflation volatility rises with the same distance above target and is flat below it; and professional forecasters asked for conditional rate paths return, at every date we can measure, a \emph{band} of perceived reaction functions that never closes. The literature reads the widening as rising uncertainty; this paper shows that much of it is instead the imprint of the public's irreducible uncertainty about whether inflation tolerance is warranted. Residual belief imprecision---the part the anchor does not explain---is what remains once that structural component is removed, and, so measured, it is both smaller and differently behaved than the raw second moment implies. That removal is the paper's second contribution, and it inherits a property from the first: because the bill is by construction a function of the distance from target, whatever removes the distance-predictable component removes the bill along with it.

\paragraph{Contributions.} The paper makes two main contributions.

\emph{First, the observational-equivalence bill---and the announcement that gives it a number to be charged against.} We show that optimal tolerance and unwarranted drift are observationally equivalent, and then that the equivalence has a price the economy pays in variance. A tolerance-threshold model---agents holding dispersed, private limits on how long they will extend the benefit of the doubt---delivers a sharp prediction for second moments: conditional variance flat below the target, rising linearly with the expected overshoot above it, with the kink at the \emph{announced} target. We find that law in ECB surveyed professional density forecasts, in the daily inflation-swap market, in inflation options, and in the US Survey of Professional Forecasters; in the daily market data the kink is not imposed but emerges, and the anchor estimated freely comes back at the announced target. Within the same New-Keynesian block, expectations are an argument of the Phillips curve, so the belief-side law passes into realized inflation attenuated by the policy pass-through, and realized euro-area inflation variance displays the same one-sided shape. The bill is then priced by an accounting identity: it is proportional to the point-years spent above target.

What the impossibility result establishes is that the bank cannot demonstrate a deviation to be warranted. That inability is idle until there is a stated number the deviation is a deviation \emph{from}. Announcing one converts a matter of judgement into a visible, dated gap against a public commitment---and creates, at the same moment, an obligation to explain the gap that observational equivalence guarantees the bank can never discharge. The tolerance the public extends is finite; each quarter spent above the announced number draws it down; and because no explanation can restore it, the drawdown is priced.

The United States supplies a test, because it is the only sample in which the announcement happens inside the data. Before the Federal Reserve's January 2012 statement the above-target arm is flat or negative; after it, the arm steepens sharply---twice as steeply as the below-target arm on core CPI, and against a below-target arm that does not move at all on core PCE---and the hypothesis that the announcement shifted the two sides equally is rejected on both. Announcing also does what the literature says it does---dispersion at the target falls---so the transaction has two signs, and the overshoot at which they offset is small. Section~\ref{sec:announcecost} prices both sides. The bill therefore cannot be engineered away by better communication, but neither is it a fixed feature of having a target: it is a feature of having said a number, and it accrues in proportion to the time spent above it.

\emph{Second, the purge.} Normalized Uncertainty (NU) removes the predictable, target-distance component of measured uncertainty. Because the bill lives inside that component, the purge bounds it from above under one stated assumption---that no other charge the distance carries falls with the overshoot. That turns the corrected measure into a policy-relevant instrument: the component the bill lives in is removed. In the pricing of bank credit, on $71{,}069$ French overdraft facilities, raw inflation uncertainty carries no information once the policy stance and bank effects are held, while the purged measure is associated with loan rates nearly nine-tenths of a percentage point higher per standard deviation: what lenders price is the uncertainty the anchor does not explain. In the cross-country growth regressions of \citet{barro1995inflation} the correction answers his own suspicion: what realized inflation variability appears to carry belongs mostly to the \emph{level} of inflation---beside the level, raw variability adds nothing---yet once that level component is purged, residual inflation uncertainty weighs on growth in its own right, strongly enough to turn his null into a result. And in a vector autoregression in the tradition of \citet{baker2016measuring}, the industrial-production decline that follows a raw uncertainty shock before 2020 shrinks by about a quarter under the correction and all but disappears once the distance of inflation from the anchor is held: there too, the activity covariance rides largely on the distance of inflation from the anchor, not on uncertainty about it.

\subsection{Related literature}
\label{subsec:related_literature}

Three strands, in the order of Section~\ref{sec:results}: the observational equivalence and the policy-rule tradition it belongs to, the threshold mechanism that turns the equivalence into a law, and the measurement of inflation uncertainty that the law corrects.

\paragraph{Observational equivalence; policy rules, tolerance, and learning.}
The impossibility we establish is of the kind \citet{sargent1976observational} named: two structures that differ only in what would survive a change of regime fit one equilibrium history identically. For the tolerance parameter the equivalence rests on two coefficients, the Phillips slope and the rule's inflation response, and each has been the object of a seventy-year identification program in which every generation re-attributed what its predecessor had claimed to identify; the companion review \citep{vansteenberghe2026seventy} traces both. On the curve, a bank that stabilizes inflation erases the covariation the slope regression requires, so the estimable curve flattens exactly when policy is good \citep{mcleay2020optimal}, who also set out the routes---supply-shock controls, instrumental variables, regional variation---by which the slope can be recovered, each at the price of the restrictions it maintains. On the rule, the inflation coefficient may not be identified from equilibrium data even in principle, because the Taylor principle works through off-equilibrium threats that equilibrium histories never display \citep{cochrane2011determinacy}. The model adds no ingredient of its own to what the policy-rule tradition agrees on; it is a summary of it. The discretionary optimum offsets demand shocks and accommodates a fraction of a cost-push shock \citep{clarida1999science,woodford2003interest}; the rule reveals only the composite response, whose strength governs the persistence of the gap \citep{benati2008investigating,conrad2012explaining}; the tolerated share of a deviation is the classic delegation margin \citep{rogoff1985degree,lohmann1992optimal,orphanides2002opportunistic,hofmann2026targeted}; and the public that must read a persistent overshoot is the learner of the imperfect-credibility tradition \citep{erceg2003imperfect}, with uncertainty about the bank's type priced in forecast variance as in \citet{ball1992does} and chosen opacity about the objective as in \citet{cukierman1986ambiguity}. What the paper adds is the joint implication of these agreed pieces in Proposition~\ref{prop:oe}. In the imperfect-credibility family, households who cannot split the policy residual into a shifted objective and a transitory shock read accommodation as drift and learn its level; here the split itself is what the inflation history cannot deliver, so on that margin nothing is learned and the benefit of the doubt is extended, finitely.

\paragraph{Attention thresholds, and the law.}
Closest to the mechanism is the attention literature, in which households begin to track inflation only once it is high enough to be worth tracking: \citet{pfauti2026attention} estimates the threshold near $4\%$ on realized inflation, finds that the updating coefficient on forecast errors roughly doubles above it, and shows that identified supply shocks pass through about twice as strongly when attention is high; \citet{weber2025tell} confirm across countries and randomized information treatments that attention rises with inflation; and \citet{korenok2023inflation} document the cutoff form directly. That literature works on the first moment and estimates its threshold freely; this paper works on the second and indexes the thresholds to the announced number. Its contribution is what dispersed, target-indexed thresholds imply for the variance of beliefs: a population whose private limits on the benefit of the doubt are consumed only while inflation exceeds the announced target generates a forecast variance that is flat below the target, linear above it, and kinked at the number---the law of Proposition~\ref{prop:law}.

\paragraph{Measuring inflation uncertainty, and what it does to the economy.}
Inflation uncertainty is not observed, and the literature proxies it: news-based indices \citep{baker2016measuring}, the expected volatility of the unforecastable component of many series \citep{jurado2015measuring}, conditional variances from ARCH models of inflation \citep{engle1982autoregressive} and from trend--cycle decompositions with time-varying volatility \citep{stock2007has}, and---closest to this paper---survey histograms, which since \citet{zarnowitz1987consensus} separate average individual uncertainty from disagreement \citep{lahiri2010measuring}. Every one of these measures co-moves with the level of inflation, a confound noted since \citet{gordon1971steady}, \citet{okun1971mirage} and \citet{cukierman1979differential}; the one survey-level test of the link on euro-area data found none \citep{abel2016measurement}, on a sample that, we show, almost never left the flat side of the target (Appendix~\ref{app:abel_replication}). On the consequences, the tradition that runs from \citet{friedman1977nobel} associates inflation uncertainty with lower activity \citep{haa2024economic}, with lower growth forecasts at the level of the individual forecaster \citep{paloviita2014inflation}, and, in the cross-country regressions of \citet{barro1995inflation}, with nothing at all once the level of inflation is held---a null Barro himself attributed to realized variability failing to measure uncertainty. The paper's contribution to both literatures is a single claim: the co-movement of measured uncertainty with the inflation level is not a nuisance to be controlled away, but the observable footprint of the equivalence. Distance from the announced target predicts the arrival rate of forecast re-basings; consequently, a regression using a raw uncertainty measure conflates two distinct objects---priced, state-dependent doubt about the anchor and genuine imprecision about the outlook. Purging the first (Section~\ref{sec:nu}) moves the literature's answers in both directions. Where the level was doing the work, the raw coefficient falls: the activity response to an uncertainty shock in a vector autoregression in the tradition of \citet{baker2016measuring}, estimated before 2020, shrinks by about a quarter, and all but vanishes once the distance from the anchor is held. Where genuine imprecision matters, a coefficient appears that the raw measure hid: Barro's null becomes a growth effect significant at five percent, and loan pricing, indifferent to raw uncertainty once the policy stance is held, prices the purged measure at nearly nine-tenths of a percentage point per standard deviation.
    \section{The model and the purge}
\label{sec:results}

The evidence documents a one-sided law: the subjective uncertainty of an
inflation forecast is flat while the expected gap sits below the announced
target and rises with the gap above it, with the kink at the announced number
(Section~\ref{sec:fitlaw}). This section
provides a model of that law. A central bank that behaves optimally tolerates
part of every supply shock; because the public cannot observe the composition
of an inflation impulse, the tolerated, intentional part of inflation is
\emph{observationally equivalent} to an unintended drift in the bank's effective
target, in the sense of \citet{sargent1976observational}: two readings of one
history that no accumulation of equilibrium data separates. The share of
a persistent overshoot that is drift rather than optimal tolerance is therefore
\emph{set-identified}, and the likelihood is flat over the identified set,
so a rational public never learns it \citep{poirier1998revising}.

An agent extends the anchored reading a benefit of the
doubt that is finite, abandoning it once cumulated above-target exposure crosses a
private threshold. Aggregated over a population of dispersed thresholds, that finite
benefit of the doubt delivers the one-sided law, with the kink at the announced target;
absent an announced target the same economy delivers symmetric,
level-independent uncertainty. The distinctive step is the identification result:
where \citet{cukierman1986ambiguity} let the public confuse deliberate policy
shifts with control error and \emph{learn} the difference through signal
extraction, here the difference is provably unlearnable from the inflation history, so the object the public
cannot price is an intention, not a signal-to-noise ratio.

The model develops in two steps, each closing on a proposition: the first
establishes the identification failure (Proposition~\ref{prop:oe}), the second
turns it into a law of forecast uncertainty (Proposition~\ref{prop:law}). A
closing subsection defines the bill the equivalence charges and why no design
removes it, and then builds the paper's measurement correction: a purge that
strips from measured uncertainty the bill (Section~\ref{sec:purge}).

\subsection{The model}
\label{sec:model}

\subsubsection{Optimal tolerance and observational equivalence}\label{sec:model-oe}

\paragraph{The model.}
The real block is the textbook New-Keynesian economy (\citealp[ch.~3]{Gali2015}; \citealp[ch.~4]{woodford2003interest}), closed by a policy rule that responds to the deviation of inflation from the bank's \emph{effective} target and by the equation the public observes:
\begin{align}
  d_t &= \beta\,\E_t d_{t+1} + \kappa\,y_t + u_t,
        && \text{(forward-looking Phillips curve)} \label{eq:nkpc}\\
  y_t &= \E_t y_{t+1} - \sigma\bigl(\hat\imath_t - \E_t d_{t+1}\bigr) + v_t,
        && \text{(dynamic IS)} \label{eq:is}\\
  \hat\imath_t &= \varphi_\pi\,(\pi_t-\pi^{*}_t) + \varphi_y\,y_t,
        && \text{(policy rule, } \varphi_\pi>1\text{)} \label{eq:taylor}\\
  \pi^{*}_t &= \bar{\pi} + m_t + \xi_t,
        && \text{(effective target)} \label{eq:target}\\
  d_t &= \xi_t + m_t + \varepsilon_t ,
        && \text{(observation equation)} \label{eq:obs}
\end{align}
with $d_t=\pi_t-\bar{\pi}$ the realized inflation gap to the \emph{announced} target $\bar{\pi}$, $y_t$ the output gap, $\hat\imath_t\equiv i_t-r^\star-\bar{\pi}$ the policy-rate gap, $\kappa>0$ the Phillips slope, $\sigma>0$ the intertemporal elasticity, $u_t$ a cost-push shock, $v_t$ a demand disturbance, and $\ep>1$ the demand elasticity that sets the welfare weight $\vartheta=\kappa/\ep$ in the micro-founded loss; $\E_t[\cdot]$ and $\Var_t(\cdot)$ are conditional on the public information set $\Iset_t$. The bank's effective target \eqref{eq:target}---the inflation it aims to deliver in period $t$---is the announced number, plus the component $m_t$ of the current cost-push impulse it chooses to accommodate, plus a drift $\xi_t$ it does not announce nor control; the rule \eqref{eq:taylor} responds to the deviation from it, not from the announced number. The bank implements its target up to a control error $\varepsilon_t$, $\pi_t=\pi^{*}_t+\varepsilon_t$, which is the observation equation \eqref{eq:obs}: the gap the public measures against the announced number, with unobserved state $(\xi_t,m_t)$ and scalar observable $d_t$. Under discretion, with a forward-looking block and serially uncorrelated
shocks, the bank's problem is a sequence of static problems, and it accommodates $m_t\equiv\lstar u_t$, with $\lstar$ what this paper calls the bank's \emph{optimal tolerance}\footnote{Tolerance is not credibility: under Proposition~\ref{prop:oe} a tolerant bank cannot show, even to itself, that its accommodation is optimal, whereas an imperfectly credible bank knows it is acting optimally and is not believed. Equilibrium data do not separate the two; their usable trace is the persistence wedge---forecasters price a return to target faster than realized inflation delivers, $\rsub=0.752<\robj=0.929$ in the euro area and $0.681<0.866$ in the United States---which the model reads as the data selecting the configuration in which the drift outlasts the tolerated component, not as identifying it (Appendix~\ref{app:tolnotcred}).} (derivation, calibration and the equilibrium concept: Appendix~\ref{app:optlambda}):
\begin{equation}\label{eq:lstar-model}
   \lstar=\frac{1}{1+\ep\kappa}=\frac{\vartheta}{\vartheta+\kappa^2}\in(0,1)
\end{equation}

\begin{assumption}[Impulse and composition]\label{ass:comp}
Each period an inflation impulse $D_t$ arrives and divides into a supply
(cost-push) component $u_t\equiv s_tD_t$, which shifts the Phillips curve, and a
demand component $(1-s_t)D_t$, which shifts the natural rate. The pair
$(s_t,D_t)$ follows a stationary process, with $\sigma_u^2\equiv\Var(u_t)$ and
autocorrelation $r_k\equiv\Cov(u_t,u_{t-k})/\sigma_u^2$; the benchmark is an
AR(1), $r_k=\rho_u^{|k|}$ (the cost-push shock's own persistence). The public observes realized inflation but not the
composition $(s_t,D_t)$, and in particular does not observe $u_t$.
\end{assumption}

\begin{assumption}[Preferences and conduct]\label{ass:loss}
The bank has the micro-founded quadratic loss
$\tfrac12 d^2+\tfrac12\vartheta y^2$, with $d$ the inflation gap, $y$ the output
gap, and weight $\vartheta=\kappa/\ep$, where $\kappa$ is the Phillips slope and
$\ep$ the elasticity of substitution across varieties. It acts under discretion,
re-optimizing each period with no commitment and no payoff-relevant state.
\end{assumption}

\begin{assumption}[Innovations, the target drift, measurement error, and scale]\label{ass:innov}
Realized inflation carries an implementation error $\varepsilon_t$---serially
uncorrelated, mean zero, the part of inflation nothing forecasts, Ball's control
error \citep{ball1992does,cukierman1986ambiguity}. The bank's \emph{effective}
target \eqref{eq:target} carries, beside the announced number and the
accommodated component, a drift $\xi_t$ the bank does not announce and may or
may not incur, which the public entertains as one of two readings of a
persistent overshoot: unobserved, mean zero and
stationary, in the benchmark an AR(1) with persistence $\rho_\xi\in[0,1)$, the
random walk being the limiting case $\rho_\xi\to1$; no ordering of $\rho_\xi$ against
the shock persistence $\rho_u$ is imposed. The processes $\{u_t\}$,
$\{\xi_t\}$ and $\{\varepsilon_t\}$ are mutually independent and stationary,
with mean-zero, finite-variance innovations; no distributional form is imposed
on any of them. The supply scale carries an a priori floor
$\sigma_u\ge\underline{\sigma}_u\ge0$---some cost-push variance may be known
to exist---with $\underline{\sigma}_u=0$ imposing nothing.
\end{assumption}

Timing within a period is as in Ball's protocol, with the composition in the
place of his political lottery: expectations are set first and are rational;
the bank then commits its accommodation---the state-contingent map
\eqref{eq:lstar-model} is distribution-free, so choosing the tolerated part
once the shock is known and committing the dial ex ante induce the same map;
the composition is revealed last, determining realized inflation, which the
public observes---the composition it never does. The observation equation \eqref{eq:obs} pins down the persistent conditional
mean of the gap, $\mu_t\equiv\xi_t+m_t$, but not its split between
accommodation and drift. The two components share the mean and may differ in
persistence---the tolerated component inherits the supply shock's own
$\rho_u$, the drift carries $\rho_\xi$.

\begin{definition}[Two candidate readings, and the attributions between them]\label{def:worlds}
Let $\mathbb{P}_T$ be the law of $\{d_t\}$ with $\xi_t\equiv0$ (the announced
anchor is unchanged and the effective target carries only the accommodated
supply component, $m_t=\lstar u_t$), and $\mathbb{P}_D$ the law with
$m_t\equiv0$ and a persistent drift of AR(1) persistence $\rho_D$ and variance
$\sigma_D^2$ (the effective target carries an unannounced drift). Between them
lie the \emph{attributions}: structures in which both components are present
and the persistent variance $\sigma_\mu^2\equiv\Var(m_t+\xi_t)$ splits as
$(\sigma_m^2,\sigma_\xi^2)$ with $\sigma_m^2+\sigma_\xi^2=\sigma_\mu^2$. The
\emph{drift share} $q\equiv\sigma_\xi^2/\sigma_\mu^2\in[0,1]$ indexes them, the
pure readings being its endpoints; it is the object the bound $\bar q$ constrains
(Appendix~\ref{app:oe}, Step~3). The share is a structural quantity, not a
subjective probability; the decision rule of Section~\ref{sec:model-law}
evaluates it at its worst admissible value.
\end{definition}

\begin{proposition}[Observational equivalence and the unidentified intention]\label{prop:oe}
Let Assumptions~\ref{ass:comp} and~\ref{ass:innov} hold under the
strict-implementation closure (Appendix~\ref{app:optlambda}), with
$\sigma_\mu^2>0$ and, in the benchmark, $\rho_u>0$; consider the readings and
attributions of Definition~\ref{def:worlds}, where the drift reading's
$(\rho_D,\sigma_D)$ are the \emph{public's} candidate parameters, unrestricted
by Assumption~\ref{ass:innov} and distinct from the maintained drift's own
$(\rho_\xi,\sigma_\xi)$. Write
$\bar q\equiv1-(\lstar\underline{\sigma}_u)^2/\sigma_\mu^2$, and let
$(\lstar\underline{\sigma}_u)^2\le\sigma_\mu^2$.
\begin{enumerate}[label=\emph{(\roman*)},leftmargin=2.2em,itemsep=2pt]
\item \emph{Scaling ridge.} The law of $\{d_t\}$ depends on $(\lstar,\sigma_u)$
only through the product $\lstar\sigma_u$: the tolerated share and the supply
scale are not separately identified from any inflation history.
\item \emph{Observational equivalence.} In the benchmark, the drift reading
reproduces the tolerance reading's mean and autocovariance function of
$\{d_t\}$ at every lag if and only if $(\sigma_D,\rho_D)=(\lstar\sigma_u,\rho_u)$,
and at that common root every attribution $(\sigma_m^2,\sigma_\xi^2)$ of
$\sigma_\mu^2$ generates the same second-order structure, whatever the
innovation law. Under Gaussian innovations the laws of the entire history
coincide along the whole segment of attributions; under a common non-Gaussian
innovation law they coincide at its two endpoints. The attribution, and with it
the bank's intention, is not identified from the law of $\{d_t\}$.
\item \emph{Set identification without learning.} The drift share is
set-identified, $q\in[0,\bar q]$, the pure-drift endpoint excluded whenever
$\underline{\sigma}_u>0$. The Gaussian state-space criterion is flat over the
set at every sample size under any innovation law, and is the exact
likelihood---flat over the whole set---under Gaussian innovations; conditional
on the identified reduced form $\varphi$ and the history $x_T=\{d_t\}_{t\le T}$,
the posterior over the share equals
the prior, $\pi_T(q\mid\varphi,x_T)=\pi_0(q\mid\varphi)$ for all $T$, and the
marginal posterior converges to the conditional prior at the true reduced form.
The set does not shrink as $T\to\infty$.
\end{enumerate}
\end{proposition}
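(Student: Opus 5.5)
The plan is to work entirely with the reduced form that the strict-implementation closure delivers. Under that closure the observable is $d_t=\mu_t+\varepsilon_t$, with persistent component $\mu_t=m_t+\xi_t$ and $m_t=\lstar u_t$. Since $\{u_t\}$, $\{\xi_t\}$ and $\{\varepsilon_t\}$ are mutually independent and mean zero (Assumption~\ref{ass:innov}), the mean of $d_t$ is zero in every structure. Its autocovariance function is therefore additive:
\begin{equation*}
\gamma_d(k)=(\lstar)^2\sigma_u^2\,r_k+\sigma_\xi^2\,r^\xi_k+\sigma_\varepsilon^2\,\mathbb{1}\{k=0\}.
\end{equation*}

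\emph{Part (i).} I would note that $\lstar$ and $u_t$ enter the observation equation \eqref{eq:obs} only through $m_t=\lstar u_t$. Write $u_t=\sigma_u\tilde u_t$, where $\tilde u_t$ is a unit-variance process whose law is fixed by the composition process. Then $m_t=(\lstar\sigma_u)\tilde u_t$, so the entire law of $\{d_t\}$, not just its second moments, is a function of the product $\lstar\sigma_u$. Any $(\lstar c,\sigma_u/c)$ generates the same history.

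\emph{Part (ii).} In the benchmark, the tolerance reading gives $\gamma_T(k)=(\lstar\sigma_u)^2\rho_u^{|k|}+\sigma_\varepsilon^2\mathbb{1}\{k=0\}$. The drift reading gives $\gamma_D(k)=\sigma_D^2\rho_D^{|k|}+\sigma_\varepsilon^2\mathbb{1}\{k=0\}$.

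For $k\ge1$, equality at $k=1,2$ gives $\rho_D=\gamma(2)/\gamma(1)=\rho_u$. This uses $\rho_u>0$ and $\sigma_\mu^2>0$, so that $\gamma(1)\ne0$. Then $\sigma_D^2=(\lstar\sigma_u)^2$. The converse is immediate.

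At the common root, any split $\sigma_m^2+\sigma_\xi^2=\sigma_\mu^2$ gives $\sigma_\mu^2\rho_u^{|k|}$ plus the noise term, so every attribution has the same second-order structure. No distributional assumption enters here.

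Under Gaussian innovations a mean-zero stationary process is characterized by its autocovariances, so the full laws coincide along the whole segment. Under a common non-Gaussian innovation law, the two endpoints are the same AR(1) driven by identically distributed innovations with equal scale and root. Their laws therefore coincide. Interior mixtures sum two independent non-Gaussian AR(1)s, which need not share higher cumulants, so the claim is stated only at the endpoints.

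\emph{Part (iii).} First, the floor $\sigma_u\ge\underline\sigma_u$ implies $\sigma_m^2=(\lstar\sigma_u)^2\ge(\lstar\underline\sigma_u)^2$, hence $q=1-\sigma_m^2/\sigma_\mu^2\le\bar q$. Every $q\in[0,\bar q]$ is attained by choosing $\sigma_u$, which is free by part (i). When $\underline\sigma_u>0$, the bound $\bar q<1$ excludes the pure-drift endpoint.

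Next, the Gaussian state-space criterion (the Kalman prediction-error likelihood) depends on the structure only through the autocovariances of $d_t$. It is therefore invariant in $q$ for every $T$ and every realized history, and under Gaussianity it is the exact likelihood.

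For the posterior, parametrize the structure as $(\varphi,q)$ with reduced form $\varphi=(\sigma_\mu^2,\rho_u,\sigma_\varepsilon^2)$. The likelihood factors as $p(x_T\mid\varphi,q)=p(x_T\mid\varphi)$. Bayes then gives $\pi_T(q\mid\varphi,x_T)=\pi_0(q\mid\varphi)$. The marginal posterior $\int\pi_0(q\mid\varphi)\,\pi_T(d\varphi\mid x_T)$ converges to $\pi_0(q\mid\varphi_0)$ by posterior consistency for the identified $\varphi$ (a stationary ARMA(1,1) reduced form), provided $\pi_0(q\mid\cdot)$ is continuous. This is Poirier's argument. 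Because the identified set depends only on $\varphi$ and $\underline\sigma_u$, it does not shrink as $T\to\infty$.

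The main obstacle I expect is the non-Gaussian clause of (ii), together with making the posterior convergence rigorous. I would handle the first by restricting the claim exactly as stated. I would handle the second by invoking Doob/Schwartz consistency for $\varphi$ and assuming a regular conditional prior.
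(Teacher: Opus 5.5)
Your proposal is correct and follows the paper's proof essentially step for step: the standardization $u_t=\sigma_u\tilde u_t$ for the ridge, matching the lag-one and lag-two autocovariances to pin $(\rho_D,\sigma_D)$ together with common-root additivity across attributions, the floor on $\sigma_m^2$ for the set $[0,\bar q]$, and Poirier-style non-updating from a criterion that does not depend on $q$. One small fix: what makes each $q\in[0,\bar q]$ consistent with the same history is the attribution invariance of part (ii), not the ridge of part (i) --- at fixed $\lstar$, moving $\sigma_u$ changes the product $\lstar\sigma_u$, and that change must be offset by $\sigma_\xi^2$ at the common root.
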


\noindent The proof is Appendix~\ref{app:oe}. The tools in (i) and (iii) are standard, and the model claims no novelty for them: the ridge is the inflation-side analogue of the normalization problem in misspecified-learning models of drifting policy, where a one-dimensional scaling of the belief and noise variances leaves the likelihood invariant \citep{sargent2006shocks}, and the non-updating is the partial-identification result that the posterior of a non-identified parameter equals its prior conditional on the identified set \citep{poirier1998revising}. What is new is the economic content of (ii): the object these standard tools leave unidentified is an \emph{intention}. Where \citet{cukierman1986ambiguity} let the public separate persistent shifts in the policymaker's emphasis from transitory control error by signal extraction, a second persistent component makes the unidentified object the \emph{attribution} $q$---whether an overshoot was intended---which no history revises. The result reads narrowly: the identified functionals of \eqref{eq:obs}, the level and persistence of $\mu_t$, are learned in the usual way; only $q$ is not.

\subsubsection{The uncertainty law}\label{sec:model-law}

Proposition~\ref{prop:oe} leaves the public unable to price whether a given
overshoot was intended. This second step works out how a rational public acts on
that unidentified margin, and what law of forecast uncertainty a population of such
agents generates.

\paragraph{Equilibrium.}
\begin{definition}[Equilibrium]\label{def:eq}
An equilibrium consists of (i) the rule \eqref{eq:taylor} around the effective
target \eqref{eq:target}, whose accommodated component is the discretionary
optimum \eqref{eq:lstar-model}; (ii) on the
identified margin, rational (Bayesian) filtering of $\mu_t$ from the history of
$\{d_t\}$; and (iii) on the unidentified margin, a robust decision rule over the
identified set $Q$ for the attribution $q$, developed in the threshold margin
below. Prices and expectations are consistent with (i)--(iii) each period.
\end{definition}

\paragraph{The unidentified margin: ambiguity and anchoring thresholds.}
While inflation sits above the announced number, an agent who maintains the
anchored reading places an unverifiable bet: that the overshoot is tolerated and
will close. Proposition~\ref{prop:oe} leaves the drift share
$q\in[0,\bar q]$ set-identified and the sample uninformative about it, so the agent cannot form
the posterior an expected loss would require. A rational agent who recognizes this
does not adopt an arbitrary prior; she evaluates the anchored plan by its worst
case over the identified set.

There are two readings, \emph{tolerance} and \emph{drift}, and two actions,
\emph{stay} anchored or \emph{abandon}. Staying mis-sets wages, prices and
hedges on the part of the overshoot that is drift: a flow loss of $c$ per
point of drift, with $\de\equiv\E_t d_{t+1}$ the expected gap and $\dpl$ its
positive part---a stake $c>0$ per unit of overshoot, the return to attention of
\citet{pfauti2026attention}, indexed here to the announced number. At a common
root the drift's expected part of the overshoot is its share,
$\E_t[\xi_{t+1}]=q\,\de$ (Appendix~\ref{app:clocks}), so the expected flow loss
of staying is $q\,c\,\dpl$. Abandoning while the overshoot is tolerated wastes a
one-off re-planning cost $k>0$. Below the target both readings prescribe the anchored plan
and the decision is dormant. The agent's cumulated stake is her \emph{exposure}
$X_t\equiv\int_0^{t}(d^{e}_{s})_{+}\,ds$, the point-years of overshoot she has
absorbed while extending the benefit of the doubt; the clock runs only above the
target and faster the larger the gap. Minimizing the worst-case expected loss over
$q\in[0,\bar q]$---the maxmin criterion of \citet{gilboa1989maxmin}; see also \citet{manski2013public} and \citet{ilut2014ambiguous}---she
stays while $\bar q\,c\,X_t<k$ and abandons when her exposure reaches the threshold
\begin{equation}\label{eq:threshold-model}
\tau\equiv\frac{k}{\bar q\,c}=\frac{B}{c},\qquad B\equiv\frac{k}{\bar q},
\end{equation}
where $B$ is the \emph{ambiguity budget}. Budget, threshold and cost triple are
one object in different units; the cost language carries the comparative statics,
$\partial\tau/\partial k>0$, $\partial\tau/\partial c<0$,
$\partial\tau/\partial\bar q<0$. The derivation and the budget--cost equivalence are
Appendix~\ref{app:clocks}. The two costs are the announcement's own
design---$c$ the flow cost of wages, prices and hedges mis-set per point-year
absorbed if the deviation is drift, $k$ the one-off re-planning cost if a
sound anchor is abandoned---and the announcement's timing vagueness (``over
the medium term,'' never a date)\footnote{The formulation is the institution's own: the Treaty ``does not give a precise definition of what is meant by price stability,'' and the Governing Council ``considers that price stability is best maintained by aiming for 2\% inflation over the medium term'' \citep{ecb2024pricestab}---precise in the level, open in time, in one sentence.} is what makes the thresholds private and
dispersed, hence the law's arm smooth rather than a cliff at a common date.

\paragraph{Aggregation.}
Agents differ in $(k_i,c_i)$, hence in thresholds $\tau_i$; let $\tau_i$ have
distribution $F$ with density $f$. When exposure crosses a threshold the agent
abandons the anchored forecast and re-bases it---a discrete forecast revision. The
object of interest is the variance of these revisions: the individual, subjective
forecast-revision (equivalently, forecast-error) variance that survey uncertainty
measures, not cross-sectional disagreement, which the same population also
generates and which the Giordani--S\"oderlind decomposition separates
\citep{sastry2026disagreement}. Two features of the thresholds are
assumptions, and the paper labels them as such. The first is that an agent
learns nothing about her own patience from having been patient, any more than
she learns the attribution from the inflation history
(Proposition~\ref{prop:oe}):
\begin{assumption}[No information about the patience left]\label{ass:memoryless}
For every $x,y\ge0$, $\Pr(\tau_i>x+y\mid\tau_i>x)=\Pr(\tau_i>y)$: conditional
on having absorbed exposure $x$ without re-basing, the threshold still ahead
has the unconditional law.
\end{assumption}
\noindent The second is re-anchoring: a re-based agent returns to the announced
number at a rate $r>0$, so an unvindicated departure is not absorbing. The
threshold's scale has two readings, the maxmin threshold
\eqref{eq:threshold-model} and the level of a one-sided detector of a permanent
drift, which Appendix~\ref{app:clocks} states side by side; the law uses only
its mean, $1/\eta$ (Lemma~\ref{lem:memoryless}).

\begin{lemma}[Flip hazard]\label{lem:hazard}
At exposure $X_t$ the still-anchored share is $1-F(X_t)$ and the flow of new
abandonments is $\tfrac{d}{dt}F(X_t)=f(X_t)\dpl$, so the hazard of abandonment
among the still anchored is $h_t=[f(X_t)/(1-F(X_t))]\,\dpl$. The general flip
intensity \emph{among the still anchored} is $\nu(\de)=\nu_0+h(X_t)\dpl$, with a baseline rate $\nu_0$ and
$h(\cdot)=f/(1-F)$ the hazard of the threshold distribution: whatever that
distribution, the intensity is proportional to the current overshoot, so any
sensitivity of uncertainty to inflation lives above the target.
Appendix~\ref{app:clocks} carries the rate from the anchored pool to the
population: departure is not absorbing---an unvindicated departure
re-anchors---and with re-entry the anchored share recovers, on the
professionals' own anchors within about a year of the overshoot closing
(Section~\ref{sec:benefit}).
\end{lemma}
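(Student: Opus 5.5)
The plan is to treat the lemma as a change-of-variables statement for a population of thresholds indexed by exposure, and to obtain the hazard by the chain rule. Each agent abandons the first time $X_t$ reaches her $\tau_i$. The exposure $X_t=\int_0^t(d^e_s)_+\,ds$ is nondecreasing and absolutely continuous in $t$, with derivative $\dot X_t=\dpl$ almost everywhere. So the event ``still anchored at $t$'' is exactly $\{\tau_i>X_t\}$, and under $F$ its population share is $1-F(X_t)$. Here I ignore re-entry; re-based agents are handled separately in Appendix~\ref{app:clocks}, and the lemma speaks only of the pool that has not yet flipped.

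The first step is to differentiate the cumulative share that has abandoned, $F(X_t)$. Using absolute continuity of $X_t$ and the existence of the density $f$, this gives
\begin{equation*}
\tfrac{d}{dt}F(X_t)=f(X_t)\,\dot X_t=f(X_t)\,\dpl .
\end{equation*}
This is the flow of new abandonments. On intervals where $\de\le0$, exposure is frozen and the flow is zero, which is the ``dormant below target'' clause. Dividing by the surviving mass $1-F(X_t)$ gives the hazard among the still-anchored, $h_t=[f(X_t)/(1-F(X_t))]\,\dpl$, that is, $h(X_t)\dpl$ with $h=f/(1-F)$.

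The second step adds the baseline. Flips that are unrelated to exposure (idiosyncratic re-plannings) arrive at an independent Poisson rate $\nu_0$. Competing independent clocks have additive hazards, so the total intensity among the still-anchored is $\nu(\de)=\nu_0+h(X_t)\dpl$. The qualitative conclusion then follows directly. For any $F$, the state-dependent part carries the factor $\dpl$, which vanishes identically for $\de\le0$. Any dependence of the flip rate on inflation, and hence of the revision variance built from it, therefore lives above the target. Under Assumption~\ref{ass:memoryless}, $F$ is exponential with mean $1/\eta$, so $h\equiv\eta$ and the intensity becomes exactly linear in $\dpl$. I would remark on this special case but not need it, since the lemma says ``whatever that distribution.''

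The main obstacle is regularity rather than algebra. I need to justify the chain rule where $X_t$ has kinks, namely at crossings of $\de=0$. I also need to check that $1-F(X_t)>0$ so the hazard is well defined. I would handle the first by working with the Lebesgue--Stieltjes form $F(X_t)-F(X_s)=\int_s^t f(X_u)(d^e_u)_+\,du$, which holds for absolutely continuous monotone $X$ and continuous $F$. That makes the derivative statement an almost-everywhere one. The second I would handle by stating the hazard only on $\{X_t<\sup\operatorname{supp}F\}$. The final clause about re-entry and recovery is deferred, as the statement itself says, to Appendix~\ref{app:clocks} and Section~\ref{sec:benefit}.
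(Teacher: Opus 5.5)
Your proposal is correct and follows the paper's proof essentially step for step: the still-anchored event $\{\tau_i>X_t\}$ with share $1-F(X_t)$, the chain rule $\tfrac{d}{dt}F(X_t)=f(X_t)\dpl$, division by the survivor share, additive baseline churn $\nu_0$, and the dormant-below-target clause from $\dpl=0$. Your almost-everywhere (Lebesgue--Stieltjes) chain rule is a slightly more careful version of the paper's ``differentiable wherever $\de$ is continuous''. One small wrinkle: the paper treats $\nu_0$ as revisions unrelated to the anchor, which are superposed on the pool rather than depleting it, and that is what keeps the share exactly $1-F(X_t)$. If you frame $\nu_0$ as a competing risk that removes agents, the survivor share picks up a factor $e^{-\nu_0 t}$, although the intensity $\nu_0+h(X_t)\dpl$ is unchanged.
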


\begin{lemma}[Memorylessness is the exposure clock]\label{lem:memoryless}
Under Assumption~\ref{ass:memoryless} the threshold distribution is
exponential, $F(\tau)=1-e^{-\eta\tau}$ for some $\eta>0$, the hazard is
constant, and the intensity is linear in the overshoot,
$\nu(\de)=\nu_0+\nu_1\dpl$ with $\nu_1=\eta$: proportional to the current
overshoot, memoryless in the past. The linearity is therefore not a
functional-form choice but the implication of non-learning about one's own
patience, on the same footing as non-learning about the attribution.
Conversely, a threshold distribution with declining hazard sorts the anchored
pool toward its patient tail and bends the upper arm down.
\end{lemma}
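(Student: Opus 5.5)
The plan is to treat Lemma~\ref{lem:memoryless} as three claims proved in sequence: the characterization of the threshold law, the linearity of the intensity, and the converse about declining hazard.

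\emph{Characterization.} Set $G(x)\equiv\Pr(\tau_i>x)=1-F(x)$. Assumption~\ref{ass:memoryless} is the Cauchy equation $G(x+y)=G(x)G(y)$ for all $x,y\ge0$, with $G(0)=1$ (the threshold is positive almost surely) and $G$ nonincreasing and right-continuous.
\begin{itemize}[leftmargin=2.2em,itemsep=2pt]
\item From the equation, $G(n/m)=G(1)^{n/m}$ for all positive rationals.
\item Monotonicity extends this to all reals, so $G(x)=e^{-\eta x}$ with $\eta\equiv-\ln G(1)$.
\item The degenerate cases are excluded. $G(1)=1$ would make the threshold infinite (no agent ever abandons), and $G(1)=0$ would put all mass at zero, contradicting $G(0^+)=1$. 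Both are ruled out by requiring $F$ to be a proper distribution with a density, so $\eta\in(0,\infty)$.
\end{itemize}
This is the step needing most care: the regularity (monotonicity and properness) must be stated explicitly, because the bare functional equation admits pathological solutions. I would phrase it as ``$F$ is a distribution function with density $f$,'' as already maintained in the aggregation paragraph.

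\emph{Linearity.} With $F$ exponential, $f(x)/(1-F(x))=\eta e^{-\eta x}/e^{-\eta x}=\eta$ for every $x$. Substituting into Lemma~\ref{lem:hazard} gives $h_t=\eta\,\dpl$ and
\[
\nu(\de)=\nu_0+\eta\,\dpl .
\]
So $\nu_1=\eta$, and the dependence on the cumulated exposure $X_t$ drops out entirely; this is the ``memoryless in the past'' statement. The interpretive sentence follows directly: linearity was derived, not assumed, and its only input is Assumption~\ref{ass:memoryless}.

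\emph{Converse.} Let $h$ be nonincreasing and strictly decreasing somewhere.
\begin{itemize}[leftmargin=2.2em,itemsep=2pt]
\item The anchored pool at exposure $x$ has threshold law $\Pr(\tau_i>x+y\mid\tau_i>x)=\exp\bigl(-\int_x^{x+y}h\bigr)$.
\item This law is stochastically increasing in $x$, so the survivors are increasingly drawn from the patient tail.
\item Along a sustained overshoot $X_t$ is increasing in $t$, so the coefficient $h(X_t)$ multiplying $\dpl$ falls over the episode.
\item Mapping episodes by $\dpl$: larger overshoots accumulate exposure faster, so they meet a lower $h(X_t)$. The flip intensity $\nu$ is therefore concave in $\dpl$, and the upper arm bends down relative to the linear benchmark $h(0)\dpl$.
\end{itemize}
The last step needs a concrete mapping from overshoot size to exposure. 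I would state it under a constant overshoot $\dpl=g$ held from exposure zero, so that $X_t=gt$ and $\nu=\nu_0+h(gt)\,g$. Concavity then follows from $h$ nonincreasing at any fixed elapsed time $t>0$.
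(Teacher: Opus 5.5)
Your characterization and linearity steps are the paper's proof. The paper sets $g=-\log S$, notes that it is additive and monotone on the half-line, lets monotonicity rule out the pathological Cauchy solutions, obtains $\eta>0$ from the finiteness of the threshold, and substitutes the constant hazard into Lemma~\ref{lem:hazard}. Your rationals-then-squeeze argument, and your exclusion of $G(1)\in\{0,1\}$ by properness and $G(0^+)=1$, are the same proof in different dress.

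The step that fails is the last one in the converse. Under a held overshoot $g$ from zero exposure, $\nu-\nu_0=g\,h(gt)$. Concavity in $g$ is therefore concavity of $x\mapsto x\,h(x)$, and a nonincreasing hazard does not imply that. Two counterexamples show it:
\begin{itemize}
\item A step hazard, $h=\eta_H$ on $[0,x_0)$ and $h=\eta_L<\eta_H$ beyond, makes $g\,h(gt)$ jump down at $g=x_0/t$, so it is not concave.
\item A smooth example is $h(x)=\underline\eta+(1+x)^{-2}$ with $\underline\eta>0$. This is a proper decreasing-hazard law with density positive at the origin, yet $\tfrac{d^2}{dx^2}\bigl[x\,h(x)\bigr]=(2x-4)/(1+x)^{4}>0$ for $x>2$.
\end{itemize}

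What a declining hazard does deliver is what the lemma and the paper's proof actually claim:
\begin{itemize}
\item $g\,h(gt)\le h(0)\,g$, so the upper arm lies below the linear form;
\item the intensity per unit of overshoot falls as exposure accumulates, whether read as $h(gt)$ across held overshoots or as $h(X_t)$ within an episode.
\end{itemize}
State the conclusion that way. Concavity would need the further restriction that $x\,h(x)$ be concave. That holds for Lomax thresholds, $h(x)=\alpha/(\sigma+x)$, but not for decreasing hazards in general. The concavity the paper does assert, that of the exact law \eqref{eq:lawexact}, comes from re-anchoring through $A^{*}=r/(r+\eta\dpl)$, not from the hazard.
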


\paragraph{The law.}
\begin{proposition}[The uncertainty law]\label{prop:law}
Each abandonment injects a forecast-revision jump $J$ with mean square
$q_J\equiv\E[J^2]$; jumps are independent of one another and of the arrival count,
which over a period is Poisson with mean $\Lambda=\int\nu\,dt$. The
forecast-revision variance is the compound-Poisson second moment $\Lambda q_J$.
Holding $\de$ over the period, under the renewal system of
Appendix~\ref{app:clocks} the forecast variance obeys the one-sided law
\begin{equation}\label{eq:law}
   V(\de)=\underbrace{\nu_0q_J}_{a}+\underbrace{\nu_1q_J}_{b_+}\dpl,
\end{equation}
flat below the announced target ($b_-=0$), rising above it with kink slope
$b_+=\nu_1q_J$ (where $\nu_1=\eta$ at the kink), kinked at the target. The
law is the first-order form, at the kink, of the exact renewal law
\eqref{eq:lawexact}, derived in Appendix~\ref{app:clocks}.
\end{proposition}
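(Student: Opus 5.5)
The plan has two parts: first compute the compound-Poisson second moment, then substitute the intensity already obtained in Lemma~\ref{lem:memoryless}.

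\emph{Step 1: the compound-Poisson variance.} Over the period, write the total forecast revision as $S=\sum_{j=1}^{N}J_j$. Here $N$ is Poisson with mean $\Lambda=\int\nu\,dt$, and the $J_j$ are i.i.d.\ with $\E[J^2]=q_J$ and independent of $N$. I would condition on $N$ and apply the law of total variance, which yields Wald's second-moment identity. With mean-zero jumps (the natural case for re-basings that may go either way), $\E[S^2\mid N]=Nq_J$, so $\E[S^2]=\E[N]q_J=\Lambda q_J$. With non-zero mean jumps the same computation gives $\Var(S)=\E[N]\Var(J)+\Var(N)(\E J)^2=\Lambda q_J$, because $\Var(N)=\E[N]$ for a Poisson count. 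Either way the forecast-revision variance is $\Lambda q_J$, which is the first claim of the proposition.

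\emph{Step 2: substitute the intensity.} By Lemma~\ref{lem:hazard}, the flip intensity among the still-anchored is $\nu=\nu_0+h(X_t)\dpl$. Under Assumption~\ref{ass:memoryless}, Lemma~\ref{lem:memoryless} makes $h\equiv\eta$ constant, so $\nu=\nu_0+\eta\dpl$. Holding $\de$ fixed over a unit period gives $\Lambda=\nu_0+\eta\dpl$, and multiplying by $q_J$ gives \eqref{eq:law} with $a=\nu_0q_J$ and $b_+=\eta q_J=\nu_1q_J$. Three properties of this expression then follow:
\begin{itemize}
\item it is flat below the target ($b_-=0$), because $\dpl=0$ whenever $\de\le0$, which is the dormancy of the decision below target;
\item its right derivative at $\de=0$ is $b_+$;
\item its left derivative there is $0$, so the kink sits at the announced number precisely because exposure is measured against $\bar\pi$.
\end{itemize}

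\emph{Main obstacle: passing from the anchored pool to the population.} Lemma~\ref{lem:hazard} gives the intensity among the still-anchored, but the variance concerns everyone. I would first write the renewal system of Appendix~\ref{app:clocks} for the anchored share $A_t$: outflow $\nu A_t$ and re-anchoring inflow $r(1-A_t)$. Its stationary share, holding $\de$ fixed, is $A^\ast=r/(r+\nu)$. The population arrival rate is then $\nu A^\ast$, a smooth function of $\dpl$, and this is the exact renewal law \eqref{eq:lawexact}. Expanding it to first order at $\dpl=0$ recovers the linear form \eqref{eq:law}, with the slope evaluated at the kink, where $\nu_1=\eta$.

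The delicate points are the normalisation of $\nu_0$ against $A^\ast(0)$ and the justification for holding $\de$ fixed over the period. The latter requires arguing that the quasi-stationary anchored share is reached quickly relative to movements in $\de$. I would handle it with a time-scale separation: take $r+\nu$ large relative to the rate at which $\de$ changes. I would also state explicitly that the exact law bends away from linearity as the anchored pool depletes, which is why the proposition claims only the first-order form at the kink.
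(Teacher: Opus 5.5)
Your Step 1 and the kink bookkeeping in Step 2 match the paper's proof: the law of total variance with $\E[N]=\Var(N)=\Lambda$, exact without $\E[J]=0$; a flat lower arm; one-sided derivatives $0$ and $b_+$; continuity at the target.

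\textbf{The gap is in the population step.} That is the step that actually fixes the coefficients. You write the renewal system with outflow $\nu A_t$, where $\nu=\nu_0+\eta\dpl$, so baseline churn depletes the anchored pool. Then $A^\ast=r/(r+\nu_0+\eta\dpl)$, so $A^\ast(0)<1$. Your population rate $\nu A^\ast=r\nu/(r+\nu)$ has value $r\nu_0/(r+\nu_0)$ at the kink and slope $\eta r^2/(r+\nu_0)^2$ there. Its linearization is therefore not \eqref{eq:law} with $a=\nu_0q_J$ and $b_+=\eta q_J$. The function itself is not the exact law \eqref{eq:lawexact} either. The ``normalisation of $\nu_0$ against $A^\ast(0)$'' that you flag as delicate is exactly where the argument, as written, fails. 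Your $\nu A^\ast$ also drops the baseline revisions of agents who have already re-based.

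\textbf{The missing idea.} Baseline churn consists of revisions unrelated to the anchor (Lemma~\ref{lem:hazard}). It injects a jump but changes nobody's anchoring status, and it runs for the whole population. Only threshold crossings move agents out of the pool. The renewal system is therefore $\dot A=r(1-A)-A\eta\dpl$, which gives $A^\ast=r/(r+\eta\dpl)$ and $A^\ast(0)=1$. The population intensity is $\nu_0+A^\ast\eta\dpl$. Substituting into $\Lambda q_J$ gives \eqref{eq:lawexact}, $V=q_J[\nu_0+r\eta\dpl/(r+\eta\dpl)]$. Its value at the kink is $\nu_0q_J$ and its right slope there is $\eta q_J$ exactly: $A^\ast(0)=1$, and any variation of $A^\ast$ multiplies $\dpl$, so it enters only at second order. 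With that correction, no normalisation issue remains.

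\textbf{The time-scale separation is unnecessary.} The proposition is stated holding $\de$ over the period and substituting the stationary intensity. The paper treats this as the law's convention, not as something to prove. It notes that a gap moving within the period would only lower the integrated intensity by an average-exposure factor.
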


\noindent The one-sidedness is a property of the exposure clock, which runs
only above the target (Lemma~\ref{lem:hazard}), and the linearity of the upper
arm is memorylessness (Lemma~\ref{lem:memoryless}): neither is a
functional-form choice. Nor is the one-sidedness the objective's: the ECB states that it
considers ``negative and positive deviations from our 2\% inflation target to
be equally undesirable'' \citep{ecb2024pricestab}, and a symmetric loss does
not produce a one-sided law; the equivalence does, because below the target
both readings prescribe the anchored plan and the decision is dormant, while
above it the two readings part. The exact renewal law behind \eqref{eq:law} is
concave; whether the data can tell the bent form from the linear one is a
question of domain, and on the overshoots the survey and the swap market
contain they cannot (Appendix~\ref{app:clocks}). Nor could a mean-side
mechanism have produced \eqref{eq:law}: in the linear block the conditional
variance of future inflation is independent of the current gap under every
rule linear in the state, and any policy that moves only the conditional mean
leaves it unchanged however nonlinear the move (Lemmas~\ref{lem:ce}
and~\ref{lem:invariance}, Appendix~\ref{app:meanside}); a gap-dependent
variance must reach the innovation side, and the flip mechanism does---the
state moves the \emph{arrival rate} of discrete forecast revisions.
Lemmas~\ref{lem:hazard} and~\ref{lem:memoryless} and the proposition are
proved in Appendix~\ref{app:clocks}.\footnote{Equation~\eqref{eq:law} refines \citet{ball1992does}: his conditional variance $\sigma^2+c(1-c)(\pi^{+})^2$ priced a lottery over policymakers; here $\nu_1 q_J\dpl$ prices a census of expiring patience, flat below the target, linear in the expected overshoot above it, and kinked at the announced number.}

The law is a strong prediction, and Section~\ref{sec:evidence} finds it
in four distinct sources: the ECB Survey of Professional Forecasters' density
forecasts, the daily euro-area inflation-linked swap market, the US Survey of
Professional Forecasters---where its arrival is dated by the 2012 target
announcement and it lives in the individual component---and euro-area inflation
options once the variance risk premium is removed. In the daily market data the
kink is not imposed: the two sides, fitted freely, meet at the announced target.

The one-sidedness is a property of a regime in which the public prices an
announced target from above; its direction is conditional on that regime, and the
paper's own below-target readings---the euro area's insignificant lower arm, the
United States' steeper lower arm before 2012---are the boundary of the claim, not a
contradiction of it. Relative to attention-threshold accounts whose cutoff is
\emph{estimated} and lies above target \citep{korenok2023inflation}, the kink here
is \emph{at} the announced number, a location the paper's breakpoint estimates can
test.

The same absence of a public signal that leaves the attribution unlearnable
also keeps the withdrawal of the benefit of the doubt a flow rather than a
run---a large inflation shock drains every threshold faster but moves the
identified level of the gap, not the split, so it cannot coordinate a
jump---and the feedback from those who have already re-based, whose
expectations raise the overshoot the rest still face, is measured and weak
(Appendix~\ref{app:clocks}).

\subsection{The purge: Normalized Uncertainty}
\label{sec:purge}

The law priced the public's expiring patience in beliefs. This subsection names
the cost the economy carries for it---the \emph{observational-equivalence
bill}---prices it, shows why no design removes it, and builds the measurement
correction it forces.

\paragraph{The bill.} Expectations are an argument of \eqref{eq:nkpc}, so the
law does not stay on the belief side. In the closure of
\eqref{eq:nkpc}--\eqref{eq:taylor} with $\varphi_y=0$ (restoring it rescales
the constants and changes no property), a
re-basing of average expectations by $J$ moves realized inflation by $\psi J$,
with
\begin{equation}\label{eq:passthrough}
\psi=\frac{\beta+\kappa\sigma}{1+\kappa\sigma\varphi_\pi}\;<\;1
\quad\text{whenever }\varphi_\pi>1,
\end{equation}
so the realized gap inherits the law of \eqref{eq:law}, attenuated by $\psi^2$
but not removed---no finite $\varphi_\pi$ drives $\psi$ to zero, and
$\psi\approx0.9$ at conventional calibrations:
\begin{equation}\label{eq:realizedlaw}
  \Var(d_{t+1}\mid\Iset_t)\;=\;\sigma_{0,\mathrm{real}}^2\;+\;\psi^2\,b_+\,(\de)_+ .
\end{equation}
As \eqref{eq:struct-bounds} below states for the belief side, the slope a
regression on the realized state recovers can only exceed this floor,
$b^{s,\mathrm{real}}_+\ge\psi^2 b_+$, by the
unlearnability of Proposition~\ref{prop:oe}---here unconditionally, since
realized outcomes embed the regime uncertainty directly
(Appendix~\ref{app:realized}). The \emph{observational-equivalence bill} is
the second-moment cost this mechanism imposes: the excess variance the economy
carries because optimal tolerance and unwarranted drift cannot be told apart,
charged whenever inflation sits above target. Its price is an accounting
identity. Let the realized gap follow a stationary first-order law
$d_{t+1}=r\,d_t+\varepsilon_{t+1}$ whose innovation variance obeys the
realized-state law $\Var(\varepsilon_{t+1}\mid\Iset_t)=a+b_+(d_t)_++c\,d_t^2$,
with $(a,b_+)$ now the realized-state pair that Section~\ref{sec:realized}
estimates and \eqref{eq:realizedlaw} floors, and $c\,d_t^2$ the opacity
channel's quadratic term (Section~\ref{sec:opacity}). Averaging, the variance
injected per period is $a+b_+\E[D^{+}]+c\,\Var(d_t)$, where
$\E[D^{+}]\equiv\E[(d_t)_+]$ is the expected \emph{point-years} inflation
spends above target per year (the stationary mean is zero, so
$\E[d_t^2]=\Var(d_t)$). Persistence does not let an injection die: by
stationarity, the variance the economy \emph{carries}---the unconditional
variance of the gap, the object a quadratic loss prices---is the flow
compounded,
\begin{equation}\label{eq:bill}
  \Var(d_t) \;=\; \frac{a \;+\; b_+\,\E\!\left[D^{+}\right]}{1-r^2-c}\,,
\end{equation}
an accounting identity requiring only $r^2+c<1$ and containing no preference
parameter (Appendix~\ref{app:bill}). The bill thus accrues at
$b_+$ per point-year---two years spent one point above target cost twice what
one year does, and peak inflation is the wrong statistic\footnote{The same
statistic counts the crossings that generate the charge: integrating the flip
intensity \eqref{eq:intensity} over an episode of length $T$, the expected
number of thresholds crossed is $\nu_0T+\nu_1\times(\text{point-years of
expected overshoot})$, and only the second term is the bank's, the baseline
accruing at the same rate whatever the bank does. A bank therefore does not
need two dashboards: point-years above target---size multiplied by duration,
so two years at one point count as one year at two---is at once what the bill
is charged on and what the excess crossings are counted by, and peak
inflation, the number that dominates public discussion, is the wrong statistic
for both.}---while the denominator prices what persistence does to the same
occupancy: at the realized persistence Section~\ref{sec:evidence} estimates,
each point-year is carried roughly sevenfold. In the data, the variance of
realized euro-area inflation displays the one-sided shape
\eqref{eq:realizedlaw} predicts, which Section~\ref{sec:realized} reads as
coherence with the belief side rather than as its consequence, since no
instrument moves beliefs about the anchor without moving inflation's
fundamentals.\footnote{The cost is also market-priced. The variance risk
premium extracted from euro-area inflation options peaks within a
quarter-point of $2\%$ and is near zero far from it, across policy regimes:
where physical uncertainty is smallest the price of protection is
highest---the market pays for the anchor, not for the distance from it.} The
euro area was paying the bill as this was written---HICP inflation reached
$3.00\%$ in 2026Q2, the largest overshoot since the 2021--23 surge---and every
input to its price is historical and realized, none of it a verdict: pricing
the bill leaves any episode's tolerance as unidentifiable as
Proposition~\ref{prop:oe} says, and we estimate the cost of the doubt, not a
judgment on the bank.

\paragraph{Why it cannot be engineered away.} The bill is charged by the equivalence itself, which is why no design removes it: were optimality demonstrable, no overshoot would carry a drift reading, no threshold would be crossed, and $\nu_1$---hence $b_+$---would be zero. The impossibility of Proposition~\ref{prop:oe} is not a caveat on the cost but its cause; communication, institutions and data may lower $\nu_1$, none sends it to zero. It becomes a charge only against an announced number a deviation is a deviation \emph{from}: the bill is not the cost of missing a target but of having stated one and being unable to prove why inflation is not at it. Section~\ref{sec:announcecost} prices that transaction at the one announcement inside our sample; the thresholds themselves derive from the announcement's design (Appendix~\ref{app:clocks}), and Section~\ref{sec:living} takes up what that commits the paper to.

\paragraph{The purge.} The bill cannot be removed from the economy. Its
footprint can, and should, be removed from our measurements---otherwise every
empirical use of inflation uncertainty confounds the imprecision of beliefs
with priced vigilance. Let $\Var_t$ denote \emph{measured uncertainty} at round
$t$: the average individual conditional variance of the one-year-ahead
inflation gap---the density variance in the surveys, the implied variance in
the market---the object the model prices as $V(\de)$ in \eqref{eq:law}.
The law itself dictates how measured uncertainty composes: \eqref{eq:law}
is a product---an arrival rate times a re-basing quantum,
$V=q_J\,\nu(\de)$---and the quantum is where genuine imprecision lives,
since a re-basing lands on the tracking forecast, whose dispersion \emph{is}
the imprecision of the moment. Genuine uncertainty therefore scales the
structural component rather than sitting beside it, and the measurement model
is multiplicative,
\begin{equation}\label{eq:var_decomp}
  \Var_t\;=\;V_{\mathrm{struct}}(\de)\;\times\;U_t,
  \qquad
  V_{\mathrm{struct}}(\de)\;=\;a^{s}+b^{s}_-\,(-\de)_+ + b^{s}_+\,(\de)_+ ,
\end{equation}
where $V_{\mathrm{struct}}$ is what the distance from the anchor predicts and
$U_t$---equal to one when beliefs are exactly as imprecise as the distance
predicts---is what it does not. Its coefficients
$a^{s},b^{s}_+,b^{s}_-$ are the \emph{fitted} ones---what a regression of measured
uncertainty on the distance recovers (Section~\ref{sec:evidence})---and they are
\emph{not} the model's $a,b_+$ of \eqref{eq:law}. The flip variance the
mechanism creates is unlearnable by Proposition~\ref{prop:oe}, hence a floor
that no data, no attention and no institutional design can erode; and under
one further assumption---that
every other charge the distance carries loads on the arms with the same sign,
so that nothing the distance predicts \emph{sharpens} beliefs on net---the
fitted coefficients bound the law's from above,
\begin{equation}\label{eq:struct-bounds}
  a^{s}\ge a,\qquad b^{s}_+\ge b_+,\qquad b^{s}_-\ge 0 ,
\end{equation}
an inequality between population projections: exact against the working law,
and against the exact renewal law on the survey's gaps holding for $0.69$ of
the kink slope.\footnote{The bound is an identity of projections: with
$V_{\mathrm{struct}}=V_{\mathrm{flip}}+C$ and projection linear, the fitted
coefficients are the flip law's plus the companions', and the latter are
nonnegative under the assumption. Against the exact law \eqref{eq:lawexact}
on the $109$ rounds' gaps at the rates of Appendix~\ref{app:clocks}, the flip
law projects on the arms with coefficients $(0.015,\,-0.02,\,0.69)\times b_+$.
The bound is on the projection, not on its estimate: the fitted $0.855$
carries a $95$ per cent interval of $[0.72,\,0.99]$.} The
bound does not make the bill strictly positive; its positivity is measured,
not assumed (Appendix~\ref{app:clocks}). \emph{Normalized Uncertainty} is
motivated by the \emph{variance-stabilizing transformation}, the classical
repair for dispersion that moves with an observable state
\citep{anscombe1948transformation,box1964analysis,nelder1972generalized}: for a
variance linear in the state---the compound-Poisson signature of
Proposition~\ref{prop:law}---the stabilizer is the square root, and because the
distance is observed the repair need not stop at first order. Dividing the
measured variance by the envelope at the observed distance removes the
dependence exactly under the multiplicative model \eqref{eq:var_decomp}, and
that form is the measure,
\begin{equation}\label{eq:NU_definition}
  \mathrm{NU}_t\;\equiv\; \sqrt{ \frac{\Var_t}{V_{\mathrm{struct}}(\de)}},
\end{equation}
the measured standard deviation in units of the standard deviation the
fitted envelope predicts at the observed distance. That is the operational
definition, and it needs no model: $\mathrm{NU}$ is distance-normalized
dispersion. Under the multiplicative model it reads $\mathrm{NU}_t=\sqrt{U_t}$:
one when beliefs are exactly as imprecise as the distance predicts, with
departures from one measuring imprecision that the level of expectations does
not explain---in a word, \emph{residual} uncertainty, the part of measured
uncertainty the anchor does not explain. The derivation, the checks of the
envelope on the survey, and why under the multiplicative model the division,
and not a subtraction, is the stabilization are in
\citet{vansteenberghe2026uncertain}; Section~\ref{sec:nu} takes the measure to
the data.

\paragraph{Two objects, stated once.} Where forecaster-level densities exist---the ECB survey---measured uncertainty is $W_t$, the average of the individual predictive variances, and the law is estimated on it (Section~\ref{sec:fitlaw}). Where the data are aggregate---the US survey as it is used around the 2012 announcement, the market rungs, the cross-country panel, the activity regressions---the object is the total mixture variance $T_t=W_t+D_t$, the individual variance plus the between-forecaster dispersion of means, or its realized counterpart, because that is what aggregate distributions and outcomes deliver. The law carries across (Table~\ref{tab:armsbysource}): the individual component is $73\%$ of the total and lends it its shape, so a purge built on the aggregate law removes the same component. The corrected series of Section~\ref{sec:nu} and the credit and activity regressions are built forecaster by forecaster and averaged; where the denominator is estimated rather than calibrated, its ratio is that of the survey law \eqref{eq:lawsurvey}, $b^{s}_+/a^{s}=2.23$, because what each division deflates is an individual density.

What the purge does to the ignorance is easy to overclaim in either direction, so we state it once. The correction removes the \emph{predictable footprint} that the bank's tolerance and the public's doubt leave on measured variance---not the ignorance that generates it: the walls stand, and the verdict is exactly as unidentified after the purge as before it. The purge relocates what can be measured, not what can be known; Section~\ref{sec:nu} states the formal ledger of what it takes and what it leaves.

\paragraph{An upper bound, twice over.} Normalized Uncertainty removes
\emph{at least} the bill. Under the same-sign assumption the envelope the
denominator removes contains the flip law's projection with room to
spare---removing it in full still leaves $a^{s}+b^{s}_-(-\de)_+\ge0$---an
inequality between population projections that holds wherever
\eqref{eq:struct-bounds} does and owes nothing to the sample. Equality would
require the denominator to contain nothing but the bill; it contains three
further charges, each with its own seat in the fitted law. The intercept
$a^{s}$ collects what widens forecasts at every distance---the baseline churn
$\nu_0q_J$ of the law's own mechanism, the filtering floor on an effective
target the public never observes (Assumption~\ref{ass:innov}), and the
ARCH-type persistence by which measured variance is high because it was high
\citep{engle1982autoregressive}---and measures an order of magnitude above
the flip floor alone (Appendix~\ref{app:clocks}). Whatever loads below the
target leaves through $b^{s}_-(-\de)_+$, no part of a bill charged only
above; the measured $b^{s}_-$ is indistinguishable from zero. On the upper
arm the bill's rate $\nu_1q_J$ shares its regressor with two companions of
the same sign: credibility-erosion learning, whose forecast-error variance
tracks the squared surprises that above-target episodes inflate
\citep{orphanides2005inflation}, and state-dependent caution about the
rule's effective response as the gap opens. Erosion has little room here: it
lives on a perceived anchor that moves, and every location estimate of
Section~\ref{sec:evidence} puts the kink at the announced number through the
surge itself, so its seat in $b^{s}_+$ is plausibly near empty for a modern,
target-anchored central bank. Each companion loads with a nonnegative weight
under the assumption, and none can touch the flip rate the public cannot
learn away (Proposition~\ref{prop:oe}), so $b_+=\nu_1q_J\le b^{s}_+$: the
correction is an upper bound on the bill twice over---across the components
it removes, and inside the arm the bill shares.\footnote{The assumption
excludes a charge that \emph{falls} with the overshoot. The natural candidate
is attention---forecasters sharpening their distributions as the stakes
rise---and the envelope's fit cannot detect it: a falling charge inside $U_t$
is absorbed into the fitted slope and leaves the binned means of the ratio at
one.}

The consequence is what makes the correction policy-relevant rather than
merely tidy. The containment bounds what the bill can have contributed to any
raw measure---a corrected result carries a known ceiling on that
contamination---but it does not fix the direction in which a corrected
coefficient moves: removing a component the distance predicts leaves a
residual cleaner as well as smaller, and a cleaner regressor can carry a
\emph{larger} coefficient than the contaminated one it replaces.
Section~\ref{sec:nu} re-reads three settings with the bill's contribution
bounded and the direction read off the data rather than assumed: the
transmission of uncertainty to credit, where the corrected coefficient is
\emph{larger}; survey growth uncertainty, where the corrected series agrees
more closely with an independent, text-based proxy; and the cross-country
growth regressions, where an ordinary-least-squares rebuild of
\citet{barro1995inflation}'s panel reproduces his own null on raw volatility
and the correction turns it into a coefficient significant at five percent.
What we ran and do not report is recorded there with the same discipline.
    \section{The uncertainty law, measured}
\label{sec:evidence}

Proposition~\ref{prop:law} predicts the variance of inflation beliefs flat below the announced target, rising linearly above it, kinked at the announced number itself---and \eqref{eq:struct-bounds} states exactly how much of it a regression can recover: the fitted coefficients of the envelope \eqref{eq:var_decomp} are \emph{upper bounds} on the law's under the same-sign assumption of Section~\ref{sec:purge}: the flip variance the mechanism prices is unlearnable by Proposition~\ref{prop:oe}, and every other charge the distance carries is assumed to load on the same regressor with the same sign. This section is the measurement built on that reading. It establishes the law's \emph{form} and the location of its kink on the highest-frequency data available, shows that both are indexed to the announcement rather than to inflation, then fits the law source by source---survey, realized inflation, options---and takes the correction the law defines to three literatures.

Two distances recur, and each estimate is stated on the object on which it is estimated. The \emph{realized} gap $d_t=\pi_t-\bar\pi$ is the deviation of realized inflation from the announced target; the \emph{forecast} gap $\de$ is the consensus one-year-ahead forecast net of the same target, the survey counterpart of the expected gap $\E_t d_{t+1}$ defined in Section~\ref{sec:results}. The ECB Survey of Professional Forecasters is the primary survey sample, used at the quarterly round level from 1999Q1 to the round conducted in 2026Q2, and the envelope is estimated throughout in the arms form of \eqref{eq:var_decomp}---the two sides of the target allowed to differ---for the reason this section begins by establishing: it is the form the data select, before it is the shape the law implies.

One warning governs every number below. \emph{Slopes are not comparable across samples whose regressors have different ranges.} A slope is a covariance over a variance, so a coefficient estimated on a narrow regressor can be large while the movement it generates over the observed range is small.

\subsection{The law in the data: form, anchor, sources}
\label{sec:law}

\paragraph{The form, from the daily market.}
The section opens on the source with the power to settle the law's form: the daily euro-area inflation-swap market---public, high-frequency, and priced continuously through every episode the survey sees four times a year. Uncertainty is read as the three-month (sixty-three-day) realized variance of the daily two-year swap rate and the distance as the rate's own gap to the target: $5{,}421$ trading days, of which roughly $1{,}600$ sit above it. Three properties of the law are decided on this sample. \emph{Linear, not quadratic}: within the overshoot arm a linear specification beats a quadratic one in levels, in logs and by maximum likelihood, and the state-dependence itself is not volatility clustering in disguise, surviving inside a GARCH specification at $p<10^{-4}$. \emph{One-sided, not symmetric}: fitted in the arms form, the above-target arm carries $b_+=+1.606$ while the below-target arm is $-0.078$---twenty times smaller and of the wrong sign for a symmetric reading.

\paragraph{The anchor, estimated twice.}
\emph{Kinked at the announced number, not at a fitted one}: estimated on the market's monthly non-overlapping realized variances with the kink left \emph{free}, the two sides meet at $\hat m = 1.96$
with a bootstrap median of $1.85$. We call this point the \emph{anchor}: the number at which the market's pricing of inflation uncertainty changes regime, estimated from the data rather than imposed on them---and it lands within four hundredths of a percentage point of the announced medium-term target. The law of Proposition~\ref{prop:law} is therefore visible before a single survey is opened: linear above the target, nothing below it, kinked where the announcement put the number.
The survey does not impose the market's answer; it reproduces it. Nothing in the arms specification forces the break to $2\%$---it is imposed there because the theory names the announced target as the point at which the benefit of the doubt is extended or withdrawn---and the ECB Survey of Professional Forecasters is where the break can be estimated at the quarterly frequency of the paper's headline fit. Profiling the breakpoint $c$ on that panel, over the central ninety percent of the consensus forecast's range, $1.10$ to $2.57$, the residual sum of squares is minimised at $\hat c=1.90$, a tenth of a percentage point below the announced target (Figure~\ref{fig:kinkloc}). Freeing the breakpoint buys $3.3\%$ of the residual sum of squares.

A daily derivatives market and a quarterly survey of professional forecasters thus place the anchor within six hundredths of a percentage point of each other---$1.96$ against $1.90$---and at or just below the number the ECB announced, which sits at the edge of the survey's $95\%$ set, $[1.76,\,2.00]$. The location is not an artifact of how the survey's open bins are closed: re-estimated with the bins entering the likelihood as censoring intervals---no closure anywhere, at the price of a normal-tail assumption---the free kink is $1.87$ with $95\%$ set $[1.76,\,1.99]$; on an interquartile measure that never touches the tails at all, $1.83$ with $[1.68,\,1.95]$. All three survey estimates sit at or slightly below two---where the announced formulation itself sat for most of the sample: ``below $2\%$'' from 1998, ``below, but close to, $2\%$'' from 2003, a symmetric $2\%$ only since the July 2021 review \citep{ECB2021strategy,ecb2024pricestab}. The break sits at an institutional number---the Governing Council's interpretation of a Treaty that gives no precise definition of price stability---not at a feature of the inflation process; that is what makes the law a mechanism rather than a description of a time series.

\paragraph{Asymmetry alone is not the finding.}
Three designs separate the announced number from every rival reference point, and the first is a placebo inside the same survey. Run the identical regression on the survey's one-year GDP growth densities, with the distance measured from potential growth, and the arms are strongly asymmetric in the \emph{opposite} direction---$b_-=2.64$ ($t=6.1$) against $b_+=1.04$ ($t=3.2$). Growth uncertainty rises steeply when forecasters expect a shortfall, which is what ordinary downside risk predicts and requires no anchor to explain. Inflation is the anomaly, and the anomalous half is the flat one. Growth forecasts have a benchmark but no announced target; inflation has both, and only inflation shows a side of the benchmark on which uncertainty does not respond to distance at all. That contrast is the sharpest available evidence that the flatness is about the target rather than about distance from any reference point.

\paragraph{The United States: the law appears when the target does.}
The Federal Reserve announced a numerical inflation objective of $2\%$ on 25 January 2012. The daily market gives the before-and-after at its cleanest (Figure~\ref{fig:usdaily}). TIPS breakevens price the CPI while the announced number is $2\%$ PCE---roughly $2.3$ to $2.5$ once the CPI--PCE wedge is applied---and on non-overlapping sixty-three-day windows of the daily five-year breakeven (2003--2026), the post-announcement free kink lands at $2.31$, with $90\%$ interval $[2.14,\,2.56]$: at the announced number's CPI image, and excluding two itself. Above that reference the realized-variance arm carries $t=3.1$ against $t=1.0$ below---the law's one-sided form---and the location is not the reference's doing, since the arms keep their asymmetry at $2.3$ and at $2.5$. Before the announcement the same fits, with the 2008--09 TIPS liquidity dislocation removed, return an above-target arm that is flat or negative at every reference in the band and a profiled kink that wanders ($2.03$, interval $[1.34,\,2.50]$): an appearance-with-the-announcement established on $2{,}268$ pre-announcement trading days, a quarter of them above the band. \footnote{Data: FRED series \texttt{T5YIE} and \texttt{T10YIE}, daily five- and ten-year CPI breakevens from January 2003; realized variance is the annualized mean squared daily change over the sixty-three trading days ahead of each date, the euro-area rung's convention; $36$ non-overlapping windows before the announcement, $57$ after. With the 2008--09 dislocation left in, the pre-announcement profile wanders to $1.32$ with interval $[0.62,\,2.33]$---liquidity, not pricing. The ten-year breakeven gives the same signs with the post-announcement kink at $2.20$ $[1.65,\,2.31]$. Daily overlapping fits overstate precision exactly as on the euro-area sample ($t\approx13$ on the above arm with sixty-three Newey--West lags, $29$ with one) and are not used for inference. Every estimate is in the results file of \texttt{fig02\_us\_daily\_breakeven.py}.}

The survey asks the same question in the announcement's own unit. The US Survey of Professional Forecasters has asked for probability densities over core PCE and core CPI inflation since 2007Q1 on a grid that has never changed, so the announcement falls inside a sample with twenty quarters before it, fifty-eight after, and one measurement convention throughout.\footnote{The survey's older density question, on the GDP price index, is unusable for this comparison: its histogram grid changed at 2014Q1 from ten one-point bins to ten half-point bins, which halves measured variance and enters a regression at $t=-10.9$, larger than any economic term.} Interacting the announcement date with each arm separates what announcing a number did to the level of forecast uncertainty from what it did to the price of exceeding it:
\begin{equation}\label{eq:usann}
\Var_t \;=\; a + b_-(-\de)_+ + b_+(\de)_+
\;+\; \underbrace{c_0\,\mathbf{1}_{t\ge 2012}}_{\text{level}}
\;+\; c_-\,\mathbf{1}_{t\ge 2012}(-\de)_+
\;+\; \underbrace{c_+\,\mathbf{1}_{t\ge 2012}(\de)_+}_{\text{slope above target}} .
\end{equation}
Before the announcement the above-target arm is flat or negative---$-0.630$ ($t=-2.9$) on core CPI and $-0.168$ ($t=-0.7$) on core PCE. After it:
\begin{center}
\begin{tabular}{lcccc}
\toprule
 & $\hat c_0$ (level) & $\hat c_+$ (above target) & $\hat c_-$ (below target) & $p$: $c_+=c_-$ \\
\midrule
core CPI & $-0.168$ $(-3.2)$ & $+0.924$ $(+4.2)$ & $+0.448$ $(+2.9)$ & $0.001$ \\
core PCE & $-0.035$ $(-0.8)$ & $+0.524$ $(+2.1)$ & $-0.032$ $(-0.2)$ & $0.013$ \\
\bottomrule
\end{tabular}
\end{center}
On both densities the announcement is followed by a sharply steeper above-target arm, and on both the hypothesis that it moved the two arms equally is rejected. Where the break sits after the announcement is the same question asked above of the euro area, and the answer is the same: profiling the breakpoint on the post-2012 core-CPI sample gives $\hat c=1.995$, and allowing it to move off the announced $2\%$ improves the fit by two hundredths of one percent. The two rounds of 2026, added after the post-2012 fit, sit within one residual standard deviation of the upper arm on both densities as the expected core-CPI gap opens from $+0.59$ to $+0.67$, and re-including them moves the post-announcement core-CPI slope from $0.303$ to $0.294$. The before half of the comparison carries a caveat the after half does not. The pre-announcement window is short and rarely above target---the consensus core-CPI forecast exceeds $2\%$ in only seven of its twenty rounds, all in 2007--08, and never reaches $2.4\%$---so the overshoot arm is barely identified there, and its flatness indicates that the kink is \emph{not visible} before the announcement rather than that it is \emph{absent}---the identification the market's nine pre-announcement years supply above. The two instruments even disagree in the right way: the survey, asked for CPI inflation, kinks at the numeral ($1.995$); the market, pricing CPI compensation against a PCE objective, kinks at the conversion ($2.31$)---each at the announced number in its own convention.

This is the paper's mechanism stated as a before-and-after within one country rather than as a comparison between two. Nothing in it depends on the euro area, on a cross-country calibration, or on the 2021--23 surge.\footnote{A cross-country complement: on the country--decade panel of \citet{barro1995inflation}---138 countries, the decades of the 1960s to the 1980s, not one announced numerical inflation target anywhere in it---the arms form cannot tell its two arms apart: above-target slope $6.66$ against $8.48$ below on country--decades with mean inflation below $20\%$, equality not rejected ($p=0.74$; $p=0.82$ below $10\%$). Corroboration rather than proof, since the below-target arm rests on $34$ country--decades; what it establishes is that the one-sidedness does not appear wherever inflation is merely far from $2\%$---it is a property of the announcement, as Section~\ref{sec:nu} also requires when it declines to give the growth denominator arms.}

\bigskip
\noindent With the form and the anchor established, the remaining question is magnitude---and there the reading changes: fitted coefficients are ceilings, not estimates of the mechanism alone.

\paragraph{The survey law.}\label{sec:fitlaw}
On the ECB-SPF survey panel, regressing the average individual predictive variance $W_t$ on the forecast gap in the arms form gives
\begin{equation}\label{eq:lawsurvey}
\Var_t \;=\; \underset{(0.063)}{0.383} \;+\; \underset{(0.097)}{0.004}\,(-\de)_+ \;+\; \underset{(0.070)}{0.855}\,(\de)_+ ,
\qquad R^2=0.70,\quad n=109 ,
\end{equation}
with HAC standard errors in parentheses. The fitted right-hand side is the estimated envelope $\widehat V_{\mathrm{struct}}$ of \eqref{eq:var_decomp}.\footnote{Within the model the floor is flip churn alone, $a=\nu_0q_J$; the measured intercept also collects every non-flip source of baseline dispersion, which is why it is estimated freely rather than read off $\nu_0$ (Appendix~\ref{app:clocks}).} The overshoot arm is estimated at twelve standard errors from zero; the below-target arm is four thousandths, with a $t$ of $0.05$. The symmetric restriction $b^{s}_-=b^{s}_+$ is rejected at $p<10^{-4}$.

\paragraph{Three objects, one mechanism.}
The survey reports each forecaster's own density, so the variance of the consensus distribution---the total mixture variance $T_t$---separates exactly into the average of the individual variances, $W_t$, and the between-forecaster dispersion of the density means, $D_t$ \citep{sastry2026disagreement}. The law is a statement about $W_t$: the model prices each forecaster's finite benefit of the doubt, and it is her own predictive density that widens when her budget runs out. Estimating the arms on each component separates the mechanism from its population echo:
\begin{center}
\begin{tabular}{lccccc}
\toprule
 & $b_-$ & $t$ & $b_+$ & $t$ & $R^2$ \\
\midrule
$W_t$, average individual variance (the law) & $+0.004$ & $0.05$ & $+0.855$ & $12.22$ & $0.705$ \\
$D_t$, disagreement                          & $+0.113$ & $2.36$ & $+1.028$ & $8.93$  & $0.719$ \\
$T_t$, total mixture variance                & $+0.117$ & $1.01$ & $+1.882$ & $13.69$ & $0.789$ \\
\bottomrule
\end{tabular}
\end{center}
Below the target the average individual variance does not move---four thousandths per point, $t=0.05$---while disagreement rises measurably, $+0.113$ per point with $t=2.4$. That is what the exposure clock predicts and what a common-shock account of the kink does not: were uncertainty to rise with the distance from the target in both directions because inflation is harder to forecast away from it, individual densities would widen on both sides; instead, below the target forecasters drift apart about where inflation is going while each remains as confident as before, because no patience is consumed there and the decision is dormant. Above the target both components rise, and the anchor is what is common to them. The total inherits the individual shape---its lower arm is $+0.117$ with $t=1.0$, its upper arm $1.882$ with $t=13.7$---because the individual component is $73\%$ of it. That inheritance licenses the rest of the paper: wherever only aggregate distributions or realized outcomes are available, the kink that drives the individual component also drives the observed total, and the purge can be applied without forecaster-level data.

\paragraph{The benefit of the doubt, read directly.}\label{sec:benefit}
Professional forecasters are sophisticated, but the benefit of the doubt they
extend to the central bank can be read off the survey: it is their longer-term
point---four to five years ahead---staying at the announced number or rising
above it. Figure~\ref{fig:benefit} plots the share of respondents whose
longer-term point sits at or above $2.2$ per cent---two tenths above the
target, so that the points that moved to exactly $2.0$ with the July 2021
strategy review do not count---against inflation's distance above the
target. The share averages $0.04$ over 2016--21, rises while the overshoot
accumulates, peaks at $0.35$ in 2023Q4---a year after inflation peaked, when it
had already fallen back to $4.3$ per cent---and then re-anchors, to $0.13$ by
2024Q3 and $0.08$ by 2026Q3. That is the anchored share of
Appendix~\ref{app:clocks} seen from outside: exhaustion re-bases the anchor at
a rate proportional to the current overshoot, so the de-anchored share is a
stock that lags the flow, and re-entry drains it once the overshoot closes;
the appendix fits both rates and checks them against the two earlier overshoot
episodes.

\paragraph{The realized shadow, for coherence.}\label{sec:realized}
The same shape appears in realized inflation, entirely outside the survey---and the model requires it to, since expectations enter the Phillips curve and a one-sided widening of the belief distribution passes into the outcome distribution scaled by the pass-through \eqref{eq:passthrough}. Fitting an AR(1) to the euro-area gap ($\hat\rho=0.93$) and projecting the squared innovation on the two arms of the lagged gap gives
\begin{equation}\label{eq:lawrealized}
\widehat{\Var}(\varepsilon_{t+1}\mid\Iset_t)\;=\;\hat\sigma_{0,\mathrm{real}}^2 \;+\; \underset{(0.065)}{0.074}\,(-d_t)_+ \;+\; \underset{(0.080)}{0.377}\,(d_t)_+ ,
\qquad R^2=0.40,\quad n=109 ,
\end{equation}
robust to using absolute rather than squared innovations ($\hat g=0.202$, $t=8.1$, on the symmetric specification). The overshoot coefficient is the realized-side envelope $\hat b^{s,\mathrm{real}}_+=0.377$, and here the bound of Section~\ref{sec:results} is unconditional: realized outcomes embed the regime uncertainty directly, so $b^{s,\mathrm{real}}_+\ge\psi^2b_+$ with no auxiliary assumption about forecaster psychology \eqref{eq:realizedlaw}. Euro-area inflation is not merely volatile; it is volatile when it is \emph{above} target.\footnote{The agreement is read as coherence, not causation---no instrument moves beliefs about the anchor without moving inflation's fundamentals---and one law on both sides is what the pass-through predicts. Only the \emph{shape} is claimed on the realized side: the belief and realized arms live at different horizons, and matching them exposes a forecaster-overconfidence confound in the intercept. The squared innovation proxies the unobservable conditional variance without bias in the slope, at a cost in precision \citep{patton2011volatility}; the slope is a reduced-form object whose micro-origin is not separately identified.}

\paragraph{Inflation options, once the premium is removed.}
The fourth source is the option-implied distribution, and it carries the paper's clearest illustration of why raw second moments cannot be used untreated. On raw option-implied variance the arms come out backwards: the below-target arm is strongly \emph{negative} ($-1.173$ on the days of the adjusted fit; $-1.109$ with $t=-7.8$ on the full raw sample) and the above-target arm is near zero. Removing the variance risk premium---the same premium whose peak at the target is reported in Section~\ref{sec:results}---reverses it to
\[
b_-=-0.089\ (t=-0.78), \qquad b_+=+0.680\ (t=+7.85), \qquad R^2=0.36, \qquad n=3{,}941 ,
\]
flat below and sharply positive above. The premium is identified without ever touching the law it is then shown to recover. Implied variance is physical variance plus a variance risk premium, $V^{Q}_t=V^{P}_t+\varpi_t$. The physical variance is unobservable in the moment, but its \emph{realized} counterpart is not: the integrated variance actually delivered over each option's life, computed after the fact from the daily swap series. The premium estimate is a one-year centred rolling median of $V^{Q}_t$ minus that realized variance---a median so that single crisis days do not drag it, rolling so that the premium may drift across eras---and subtracting it from the implied series is the whole adjustment. The estimator never uses the fitted physical law $V^{P}(\de)$; it does use the swap rate's realized variance---the object the swap market's own row of Table~\ref{tab:armsbysource} fits---which carries the law itself, so what the adjusted surface can show is that an option-implied variance, net of a premium measured against that realized variance, keeps the law's form rather than the raw surface's inverse of it. The raw series prices insurance, not beliefs, and insurance is most expensive exactly where physical uncertainty is smallest. The option evidence is therefore read as corroboration of the law's form, not as a source independent of the swap market; the two surveys and the swap market carry the independent weight.

Figure~\ref{fig:lawfour} puts the four sources on one pair of axes, each normalised by its own intercept, so that what is compared is the shape and not the magnitude; Table~\ref{tab:armsbysource} collects each source's fitted arms with its own inference. One law, four sources: a quarterly survey of forecasters, a daily swap market, an option surface net of its premium, and the US survey, where the law arrives with the announcement; realized inflation, the coherence check above, carries the same shape.

\paragraph{The null that was a flat arm.}
A natural objection is that survey-level evidence once found no such link at all.
\citet{abel2016measurement}, on ECB-SPF data through 2013, regress aggregate uncertainty on the
aggregate point prediction and report near-zero slopes on both of their measures. On their own
sample we reproduce that null---$\hat\beta=+0.038$ $(0.066)$ on the variance-based measure and
$-0.071$ $(0.079)$ on the interquartile-range measure, their published estimate to three
decimals---while on the sample extended to the round conducted in 2026Q2 the same regression
returns $\hat\beta=+0.299$ $(0.063)$ with $R^{2}=0.51$. Read through \eqref{eq:lawsurvey} the two
results are one result: their null is the flat arm, measured on a sample that almost never left
it, the consensus forecast having exceeded the target in $6$ of their $60$ rounds and never by
more than $0.39$ points. Nor is the reversal an artifact of the histogram grid the ECB-SPF
adopted in 2024Q4, which post-dates the surge by two years; the replication, the joint
specification with disagreement, and the sensitivity to the open-interval convention are in
Appendix~\ref{app:abel_replication}.

\paragraph{The threshold distribution is not observed.}
The slope of the law prices a population's finite benefit of the doubt---the
distribution of private tolerance thresholds and the displacement their
crossing produces---and that distribution is never observed. It belongs to a
population at a time and is no constant of nature: it can differ across
jurisdictions and markets and drift from one episode to the next. The theory
therefore delivers the \emph{form} of the law, not its level: for any
distribution of thresholds the predictive variance is flat below the announced
number, rises with the expected overshoot above it, and kinks where the number
sits. The form is what the evidence carries from source to source; the level
is each population's own, as the euro-area and US arms show, $0.855$ against
$0.063$ for the same form. Each sample
identifies its own arm---its own ceiling on its own mechanism, in the reading
of Section~\ref{sec:law}---and the distribution behind the arms remains an
object the model characterizes and the data summarize one population at a
time, not one any dataset yet lets us calibrate and carry across populations
or episodes.

One further difference is an important finding. Decomposed as above, the US law lives in a different component: after the announcement, US \emph{disagreement} rises with distance roughly symmetrically---on core CPI $b_-=0.480$ against $b_+=0.220$, steeper \emph{below} target---while US individual uncertainty, each forecaster's density on her own gap (Table~\ref{tab:armsbysource}), is flat below and rising above, as the euro-area law is. Aggregate US variance is dominated by the disagreement component and therefore looks symmetric; the individual component, which is what \eqref{eq:law} is a statement about, does not. The euro area is the reverse: there the individual component dominates and the aggregate inherits its shape. The insistence that the flatness is a property of individual uncertainty rather than disagreement is not a technicality, and the United States is where ignoring it changes the answer.

\subsection{Normalized Uncertainty, applied: three literatures re-read}
\label{sec:nu}

Section~\ref{sec:results} defined the correction---Normalized Uncertainty, the square root of the raw second moment's ratio to the fitted envelope \eqref{eq:NU_definition}, anchor-adjusted uncertainty in a word---and its denominator carries the arms of Section~\ref{sec:law}: the form is the data's selection before it is the theory's. Nothing here restates the construction. What has to be said before using it is the ledger. Measured uncertainty at a date can be high because agents genuinely face a wider distribution of outcomes, or because the distance from target mechanically widens the distribution they report through the flip mechanism of Section~\ref{sec:results}. The first is information; the second is the bill. What the denominator removes is exactly the part the distance predicts: everything correlated with $\dpl$ through the anchor leaves, everything orthogonal to it stays. Three boundaries follow it everywhere. $\mathrm{NU}$ is a measurement correction, not an identification strategy: it does not establish that the remainder is causal for anything, and no regression in this subsection is read as a causal estimate. The correction removes the predictable footprint of the doubt, not the doubt itself: the walls stand, the set does not close, and the verdict remains unidentified after the purge exactly as before it. And the direction of a corrected coefficient is not fixed in advance: removing a component the distance predicts leaves a residual that is cleaner as well as smaller, and a cleaner regressor can carry a \emph{larger} coefficient---the applications below do exactly that.

\paragraph{The corrected series.} Figure~\ref{fig:nu} shows what the correction does to the euro-area series itself, drawn in its simplest reading: unit coefficients rather than fitted ones, and the above-target arm alone. Panel~(a) plots the raw survey uncertainty against its purged counterpart: the two series are nearly indistinguishable through the anchored decades, because with the forecast gap at zero the denominator is flat and there is nothing to remove; they separate exactly where the distance opens---the 2021--23 surge and the 2026 episode---and the gap between them is the priced footprint the law attributes to the anchor. Panel~(b) sets the purged series against the euro-area economic policy uncertainty index, a proxy built from newspaper text that shares none of the survey's construction: the corrected series continues to register the episodes a text-based measure recognizes---the financial and sovereign crises, the pandemic, the climb after 2023---and, as drawn, agrees with the index more closely than the raw series does, $0.85$ against $0.75$, while no longer tracking the mechanical arithmetic of the distance from target. How much weight the height of the panel~(a) gap can bear is limited by the survey's own instrument: its top bin is open-ended, and over 2022--23 enough probability mass sits inside it that the raw second moment reflects where the tail is closed as much as what forecasters believe (figure note). That the two series separate is a fact about them; the size of the separation in those quarters is not a measurement. The growth densities give the correction an out-of-family check: applied to the survey's GDP growth histograms, the same purge raises the series' correlation with the same text-based index from $0.561$ to $0.716$ ($0.556$ against $0.713$ controlling for the distance from potential growth)---agreement with an independent proxy rising under the correction is the pattern a genuine measurement fix should produce, and difficult to obtain by accident.

\paragraph{Cross-country growth.} The first setting is the oldest. \citet{barro1995inflation} reports that higher average
inflation lowers growth, and that conditional on the level the standard deviation of
inflation carries a coefficient of $-0.004$ with a standard error of $0.009$---``virtually
zero,'' in his words. He offers a reading of his own null: ``the realized variability of
inflation over each period does not adequately measure the uncertainty of inflation, the
variable that one would have expected to be negatively related to growth. This issue is
worth further investigation.'' That sentence is this subsection's premise, written thirty
years earlier.

We rebuild his panel from the \citeauthor{barro1995inflation}--Lee source and estimate his
equation by ordinary least squares over his own periods---1965--75, 1975--85 and 1985--90.
The rebuild reproduces him, coefficient by coefficient, and reproduces his null on raw
volatility.\footnote{Eleven of his thirteen controls fall within two of his standard errors;
the interaction of initial income with human capital comes out at $-0.434$ (s.e.\ $0.146$)
against his $-0.44$ ($0.17$); the sample is 241 country--periods against his 251; the fit is
$R^2=0.60$ against his $0.63$, $0.60$ and $0.49$; and inflation entered alone carries
$-0.0170$ (s.e.\ $0.0048$) against his $-0.0236$ ($0.0048$). The two controls missed by more
than two of his standard errors are fertility and the investment ratio---the latter the one
coefficient he himself singles out as the place where least squares and his instruments part
company. His rule-of-law index, observed only from the early 1980s and used among his
instruments, is not in the \citeauthor{barro1995inflation}--Lee distribution and is not
matched; everything else in his list is here, including the interaction of initial income
with human capital that governs the rate of convergence in his estimates.}

Conditional on the level, raw volatility carries $-0.008$ with
a $t$ of $-0.8$: virtually zero, as he reports. What it does is mask the level---beside raw
volatility, mean inflation carries $t=-1.3$. Deflate that volatility by the part the inflation
rate predicts and the cleaned measure becomes significant at five percent, $-0.142$ with a $t$
of $-2.2$, while the level's own $t$ improves to $-1.6$. In one-standard-deviation terms the
cleaned measure is worth $0.28$ percentage points of annual growth against the raw measure's
$0.15$. A wild-cluster bootstrap by country---the conservative inference at ninety-four
clusters---keeps the raw null a null ($p=0.43$) and sustains the purged result
($p=0.032$). And the construction matters: with mean inflation already in the regression, \emph{subtracting} a level-fitted
envelope from $\sigma_\pi$ reproduces the raw coefficient identically ($t=-0.8$), so an
additive correction cannot add information the regression does not already hold---only
the ratio can (Section~\ref{sec:purge}). Nothing in the exercise turns on knowing a target, explicit or implicit. His periods had no announced number, so the baseline deflator is built from the inflation rate itself---$\sqrt{1+|\bar\pi|}$ exactly, nothing estimated---and imposing a reference anyway does not change the result: deflating by $\sqrt{1+(\bar\pi-2)_+}$, the one-sided form the law implies around a two-percent number, returns $-0.132$ ($t=-2.3$; wild-cluster $p=0.024$), and the symmetric $\sqrt{1+|\bar\pi-2|}$ the same sign at ten percent ($t=-1.7$; wild $p=0.079$). The correction needs only the inflation level the deflator is built from, not the number a central bank did or did not say---which is what makes it portable to samples where none was ever announced. Table~\ref{tab:crosscountry} reports these two regressions in columns~(4) and~(5), beside each summary of the inflation experience entered on its own in columns~(1)--(3). Barro suspected that realized variability does not measure the uncertainty that should matter for growth; on his own panel, his own equation and his own periods, the correction he called for is what turns his null into a result. The reading is two-sided: what realized variability appeared to carry belongs mostly to the level of inflation, which remains a growth channel in its own right; and residual inflation uncertainty, once measured net of that level, weighs on growth as well---the two effects standing side by side where the raw regression showed one masking the other.

\paragraph{Credit pricing.} The second setting is the transmission of uncertainty to loan pricing, on French loan-level data---AnaCredit France, accessed at the Banque de France: $71{,}069$ overdraft facilities to non-financial corporations, new business, September 2018 to March 2026, drawn by $60$ bank groups over $31$ quarters. The specification regresses the annualised agreed rate on loan $\ell$ at bank $b$ in quarter $q$ on the quarter's uncertainty measure,
\begin{equation}\label{eq:creditspec}
r_{\ell bq} \;=\; \beta\, U_q \;+\; \gamma' x_{\ell bq} \;+\; \alpha_b \;+\; \alpha_{s(\ell)} \;+\; \varepsilon_{\ell bq},
\end{equation}
in four nested columns: the uncertainty measure alone; adding the loan-level debtor default probability and the macroeconomic controls (the deposit facility rate and log industrial production); adding bank fixed effects $\alpha_b$; and adding sector, size and location effects. Regressors are standardised, and because $U_q$ varies only by quarter---the classic setting for overstated precision---displayed $t$-statistics are clustered by bank, and because the measure varies only by quarter the table's note reports the quarter, two-way and wild-cluster bootstrap treatments of exactly that concern. Table~\ref{tab:credit} reports the four columns for two measures of $U_q$---the raw standard deviation $\sqrt{\Var_t}$ and $\mathrm{NU}$---substituted one at a time, never entered jointly, each with the deposit facility rate's own coefficient from the same regression beneath it.

The table reads in one contrast, and the contrast has an economic reading. Once bank fixed effects absorb between-bank pricing differences, the raw standard deviation carries no information about overdraft pricing at all---$t=0.34$ in the full specification---while $\mathrm{NU}$ is significant at the one percent level in the same column; the purge is not shrinking a result here, it is producing one. The reading starts from what the specification already holds: the stance. The deposit facility rate is on the right-hand side---its own coefficient, reported beneath each uncertainty measure, stays large and precisely estimated whichever measure is substituted ($0.0137$ beside the raw measure, $0.0109$ beside $\mathrm{NU}$)---and with the rate comes the level of inflation the stance is set against. Raw dispersion adds little to that pair because much of it \emph{is} the stance: the mechanical widening the distance from target predicts, priced by the law of Section~\ref{sec:law} and collinear with the rate the regression controls for; nor is the square root doing the work, since the raw forecast variance is as dead in the same column ($t=-0.28$). Normalized Uncertainty is the part of measured uncertainty the anchor does not explain---the residual left after the policy rate and the distance have said their piece, which the model reads as genuine imprecision about the inflation outlook. That residual is what a lender cannot read off the stance, and it is what the table shows loan rates moving with. The contrast survives the treatments the common-regressor problem requires---by quarter, $t=4.98$ against $0.31$; two-way, $3.43$ against $0.30$; and, sharpest, a restricted wild-cluster bootstrap on the thirty-one quarters, $p<0.001$ for $\mathrm{NU}$ against $0.783$ for the raw standard deviation.

The size needs the units stated. The regressors are standardised but the dependent variable is not: the agreed rate enters as a decimal fraction, with a mean of $0.0889$ on the estimation sample. The full-specification coefficient is $0.89$ \emph{percentage points} of loan rate per standard deviation of the purged measure, against a mean rate of $8.9\%$. That is a substantial effect.

One discrimination closes the reading. Measured inflation uncertainty could be priced because it proxies for general, economy-wide uncertainty rather than for uncertainty about inflation itself. Appendix~\ref{app:ngu} runs the identical specification on the growth analog of the correction---Normalized Growth Uncertainty, the companion paper's measure---and on the two orthogonalized components. Entered alone, general uncertainty does predict overdraft pricing ($+0.0238$, $t=3.3$): uncertain times are expensive times. With the default probability, the stance, activity and the fixed effects in, it prices nothing ($+0.0016$, $t=1.1$) while the purged inflation measure stands ($+0.0089$, $t=4.0$); and the orthogonalization sharpens the attribution---the inflation-specific component net of general uncertainty survives the full controls ($+0.0058$, $t=4.1$; wild-quarter $p=0.004$), while the general component net of inflation uncertainty does not raise loan rates ($-0.0025$, $t=-2.5$; wild-quarter $p=0.087$). What loan pricing prices is uncertainty about inflation, not uncertainty in general; the diffuse component rides on the controls the specification already holds.

\paragraph{A vector autoregression: the level, and the uncertainty.}
The natural fourth exercise is a macroeconomic vector autoregression in the tradition of
\citet{baker2016measuring}: a quarterly euro-area system in the policy-uncertainty index,
an uncertainty measure, HICP inflation, the deposit facility rate, unemployment and log
industrial production, identified recursively with the uncertainty block ordered
first.\footnote{Lag order by information criterion, ninety-percent asymptotic bands; the
subsample of Figure~\ref{fig:rawnu} ends in 2019Q4. The checks: the uncertainty measure
ordered last, a forced longer lag order, the system without the policy-uncertainty index,
a growth-uncertainty variant, the median- and variance-based raw alternates, which
behave like the mean, and the lag order forced to each of $p=1,2,3$: the raw response is away from zero at every one, and so is the purged series', smaller at every lag order. Every estimate is in the results file of \texttt{fig06\_ip\_response.py}.} On the full sample the design is uninformative about inflation uncertainty: where a survey
measure of it, raw or purged, is followed by an activity response outside the ninety-percent
band, the excursion ends within the first three quarters, and none of the cumulated
industrial-production responses, between $-1.3\%$ and $+2.8\%$, is negative in every
variant---on a hundred quarters and six variables, a recursive design cannot separate them
from noise. The sample that stops in 2019, before the 2021--23 episode dominates every
second moment, separates the measures instead (Figure~\ref{fig:rawnu}). There a
one-standard-deviation shock to the raw dispersion---the numerator of
\eqref{eq:NU_definition}, unpurged---is followed by a fall in industrial production away
from zero at ninety percent through most of the first two years, with a matching rise in
unemployment, and the response survives the reverse ordering, the longer lag order and
the exclusion of the policy-uncertainty index. The same shock to Normalized Uncertainty
is followed by about three-quarters of that path, and what remains is robust: it survives the reverse ordering, every lag order and the exclusion of the policy-uncertainty index. The difference between the two shocks is the denominator, and before 2020 the
denominator moves almost only in 2006--08 and 2011, when the consensus forecast ran above
the anchor on the eve of the two production collapses. The main result makes that reading
a test inside the VAR itself, and Figure~\ref{fig:rawnu} shows it. Ordering the
above-anchor distance $(\pi_t-\bar{\pi})^{+}$ ahead of the uncertainty measure in the
recursion---the recursive way to control for the distance---removes the raw measure's
industrial-production response almost entirely, a cumulative $-6.1\%$ collapsing to
$-0.4\%$ with no horizon away from zero beyond a single quarter, while the corrected measure keeps three-fifths of its response: $-4.6\%$ in the baseline ordering and $-2.8\%$ with the distance ordered first, away from zero at ninety percent through quarters two to five. Ordering the linear \emph{level} first
changes nothing, so it is the kinked distance, not inflation itself, that carries the raw measure's covariance; the same holds at the estimated anchor $1.96$, and
every estimate is in the results file of \texttt{fig06\_ip\_response.py}. Read
together, the two channels stand separately: the \emph{level} of inflation---through its
distance from the anchor---is followed by lower industrial production, and residual
uncertainty about inflation---the part the distance does not explain---is followed by
lower industrial production too. What the raw series carried was the sum; the correction takes out part of the first and keeps the second. As with any recursive design these are conditional covariances, not causal
effects; the exercise is an account of what the purge
removes and of what remains once it is removed, not an estimate of what uncertainty
does.
    \section{What follows for policy}
\label{sec:living}

The optimum cannot be identified---only bracketed by published readings no data certify---the verdict on any past episode is undecidable, and the bill is real. What should a central bank do? This section draws the consequences in four steps: how to act when the optimum cannot be identified; what announcing a target costs and what it buys, neither side fully estimable; what opacity would cost, a price the framework delivers as a byproduct; and what it would take to measure the one object every step turns on---the public's tolerance.

\subsection{Policy recommendations: acting without identifying the optimum}
\label{sec:acting}

The policymaker's position is uncomfortable but not paralysed. The optimum cannot be identified---the data deliver no point, and the band that stands in for one is the literature's, peer-reviewed and published, adopted rather than identified (the exercise that organizes it is Proposition~\ref{prop:budget}, Appendix~\ref{app:budget}); the verdict on any past episode is undecidable; and no amount of further data of the kind currently collected will close either gap, or the debate among the published readings. What follows from that is less drastic than it sounds, and this subsection says what.

\paragraph{What not knowing the slope and the composition costs.} The
dial the bank chooses is the tolerated share $\lambda$ itself---at
$\lambda=0$ it leans fully against the deviation, at $\lambda=1$ it does
not react to inflation at all---and between them the loss is minimized at
$\lambda=\lstare=s/(1+\chi)$, which cannot be identified, because neither
the supply share $s$ nor the slope $\chi\equiv\ep\kappa$ is. Both
ignorances are published ranges, and we use them directly. Estimates of
the slope famously span an order of magnitude
\citep{mavroeidis2014empirical}; the specifications behind this paper put
$\kappa$ at $0.017$--$0.038$ with $\ep\in[6,10]$
(Appendix~\ref{app:optlambda}), so the realistic range is
$\chi\in[0.10,0.39]$; and the leading decompositions of the same surge
differ by some twenty points of supply share
\citep{shapiro2022decomposing,bernanke2023caused}.
Figure~\ref{fig:twounknowns} draws what the two ranges leave. Panel~(a)
compares the two corners a bank could adopt without identifying
anything---the loss of tolerating \emph{nothing} relative to tolerating
\emph{fully},
$L(\text{no tol.})/L(\text{full tol.})-1=s^{2}/(\chi+(1-s)^{2})-1$---at
the low and the high end of the published slope range.\footnote{Elementary on the block, with
the pressure normalized to $D=1$: the plan delivers $d=1-\chi z$ whatever
the split, while the split prices the output side, so
$L(z;\chi,s)=\tfrac12[(1-\chi z)^{2}+(1-s)^{2}/\chi-2(1-s)z+\chi z^{2}]$,
and $(\kappa,\ep)$ enter only through $\chi$---the figure's axes are the
problem's sufficient statistics. The loss is a parabola in $z$ with vertex
$z^{*}(\chi,s)=(\chi+1-s)/(\chi(1+\chi))$, which delivers
$d^{\mathrm{opt}}=sD/(1+\chi)=\lstare D$, the episode optimum again;
by the symmetry of the parabola a plan
harms precisely beyond $2z^{*}$, which is the frontier in the text.
Writing $R=L(z;\chi,s)/L(0;\chi,s)$: the first-order condition is linear
in $s$, hence certainty equivalence in the share, and carries
$\E[\chi^{2}]$ under a slope prior, hence attenuation. On $s\in[0,1]$ the
worst case of every plan sits at $s=1$, since
$R(z;\chi,1)-R(z;\chi,0)=\chi z^{2}$ and the derivative of $R$ in $1-s$
changes sign once, so the min--max problem collapses to the pure-supply
edge and its closed-form guarantee; the pure-demand row collapses to
$R=(1-\chi z)^{2}$, whose minimax gain is $4r/(1+r)^{2}$ exactly. The
geometry is not confined to the unit interval: $z^{*}$ is linear in $s$ and
reaches zero at $s=1+\chi$---the composition at which a contractionary
demand admixture equal to the intolerated share of the cost-push makes doing
nothing optimal on both margins---so once admissible compositions extend that
far, every nonzero plan does harm somewhere and the guaranteeable gain is
exactly zero: doing nothing is the unique minimax plan. Every
number here and in Figure~\ref{fig:twounknowns} is computed from the closed forms of
this section by the figure's script, \texttt{fig07\_two\_unknowns.py}.} Zero tolerance
is the costlier corner precisely when $s>(1+\chi)/2$---more generally, a
plan with tolerance $\lambda$ does harm relative to doing nothing exactly
when $s>(1+\chi)(1+\lambda)/2$---so across the published slopes the
boundary sits between $s=0.55$ and $s=0.69$, and the leading published
readings of the surge's own composition, $s\in[0.50,0.75]$, straddle the
boundary---the classification \emph{is} the decision. (Only at a
textbook-steep composite, $\chi=4.49$, does the boundary leave the unit
interval, leaning then cheaper at every composition: the region where
full tolerance always wins exists only under the published euro-area
slopes.) Panel~(b) states what remains when nothing is identified: the
optimum $\lstare=\iota s$ drawn over the published intensity band
$\iota=1/(1+\chi)\in[0.72,0.91]$, whose image at the published surge
shares is the band $[0.36,0.68]$ of Proposition~\ref{prop:budget},
three-fifths to three-quarters of its width owed to
the supply--demand classification alone---the object a bank actually has.
We state with the same weight what that band is: it organizes
peer-reviewed readings that disagree with one another by twenty points,
under maintained assumptions no data certify; it is not an estimate, it
is not a confidence interval, and it will not narrow as data
accumulate---producing a number because a number is demanded is precisely
the \emph{paradigm of simplification}. With the composition wholly
unknown and the slope allowed to run from the flattest euro-area estimate
to the textbook-steep composite ($\chi\in[0.10,4.49]$), no plan can
guarantee more than $5.9\%$ of the do-nothing loss---nothing at all once
compositions outside the unit interval are admitted (footnote above)---and on that range
perfect composition knowledge would raise the cap only to about $15\%$:
as long as a steep curve cannot be ruled out, knowing the split moves what
is attainable far more than what is guaranteeable. (On the euro-area range
alone the unknown-composition guarantee is $7.0\%$, and there a known
composition would lift it to two-thirds: it is the steep curve that keeps
every guarantee small.) High tolerance rests on the admission that no one
can rule out a steep curve behind a supply-heavy shock. Tolerance remains
the safe side of every mistake.

That is where the recommendations end: a region result---below $s\approx0.55$
leaning is the cheaper corner at every calibrated slope, above $s\approx0.69$
tolerating fully is, and the surge's own published compositions straddle the
two---and a published band no data of the current kind will narrow. Central
banking works with that.

\subsection{The benefit of announcing a target, and what it does to measured inflation uncertainty}
\label{sec:announcecost}

This subsection takes the announced number's two effects on measured inflation uncertainty in order---the benefit, which the anchoring literature documents and the paper's own data confirm, and the cost, which the bill of Section~\ref{sec:results} charges---then states why observational equivalence leaves the cost without a remedy, and what the pair implies for how inflation uncertainty should be measured.

\paragraph{The benefit.} The case for announcing a number is a commitment case, and its best-measured benefit is the one the anchoring literature documents: a numerical objective coordinates beliefs, compresses their dispersion, and gives the public something to hold the bank to \citep{levin2004macroeconomic,gurkaynak2010does}---the benefit the ECB itself names: ``This target provides a clear anchor for inflation expectations'' \citep{ecb2024pricestab}. The record on inflation itself is more guarded---for advanced economies the announcement's marginal effect on inflation is hard to distinguish from zero \citep{ball2004does}---and the costs the literature prices are costs of the \emph{constraint}: the real volatility of defending the number against supply shocks, the lower-bound exposure of a low one, the excess weight a salient public signal can command \citep{morris2002social}.

\paragraph{The cost.} The other half of the transaction, which neither column of that ledger contains, is the bill of Section~\ref{sec:results}: an announced number is a number that can be exceeded, and exceeding it is what the bill charges for. The United States, the one sample in which the announcement happens inside the data, shows both halves in one regression. Across the Federal Reserve's January 2012 announcement, \eqref{eq:usann} on core-CPI densities gives a lower level term, $\hat c_0=-0.168$ ($t=-3.2$)---dispersion at the target falls, the anchoring benefit---and a steeper above-target slope, $\hat c_+=+0.924$ ($t=+4.2$)---each point of overshoot widens beliefs by nearly a full unit more than it did before there was an announced number. The ratio $-\hat c_0/\hat c_+$, the overshoot at which the two offset, is of the order of $0.2$ percentage points. That is an order of magnitude, not a calibration: both coefficients are associations with a single date, the post-2012 indicator absorbs everything else that changed in 2012, the core-PCE counterpart is imprecise, and the euro area has no before. What the regression establishes is what the argument needs and no more: the specification separates signal from noise---two effects of opposite sign, each estimated with reasonable precision---and nothing below turns on the ratio's value.

\paragraph{Why nothing removes it.} The bill of Section~\ref{sec:purge} was introduced as a cost of the \emph{impossibility of demonstrating optimality}---agents cannot verify that a tolerated deviation is warranted, so they price the ambiguity. The transition above says something sharper. Before there is an announced number, there is no distinguished point for tolerance to be measured against, and the above-target arm is flat: the ambiguity exists but has nothing to attach to. Announcing the number creates the standard the deviation is measured against---and with it an obligation to explain any deviation that observational equivalence guarantees the bank can never discharge. The tolerance the public extends is finite, each quarter above the announced number draws it down, and because no explanation can restore it the drawdown is priced. The bill is not a cost of having a target that is missed. It is a cost of having said the number, and of then being unable to prove why inflation is not sitting at it. This does not argue against announcing; on the contrary: the level effect is real, significant, and present at every distance. It argues that the announcement's ledger carries a cost that is not prominently reported and that cannot be avoided. It cannot be engineered away by communication, because it is charged by the equivalence itself rather than by the miss (Proposition~\ref{prop:oe}). It cannot be moved away by moving the choice, because it attaches to the announced number and its unverifiable warrant, not to the institution operating the dial: were the tolerance chosen collectively---put to a repeated vote as the gap persists---the same structure would reappear, since with the distribution of private tolerance not common knowledge, when the collective stance itself will turn is learned only as votes accumulate, and since the composition of the shock is never learned, no voter can know by when inflation will be back at the announced number. Nor is opacity the remaining door: the expected cost of not announcing---or of blurring a number once announced---is quadratic in the distance from target, the stance-dispersion charge $\Var_\lambda(\rho)\,d_t^{2}$ that Section~\ref{sec:opacity} states and Appendix~\ref{app:opacity} demonstrates, nearly free in a quiet decade and largest exactly in the episodes in which the announced number's own bill is largest. The announced number is worth having; what it costs is worth reporting; and neither silence nor a different chooser makes the cost go away.

\paragraph{What it means for measuring inflation uncertainty.} A cost that cannot be removed from the economy can still be removed from the measure. The conventional proxies for inflation uncertainty---survey densities, option-implied distributions, realized inflation---all record the bill as uncertainty, because the law of Section~\ref{sec:law} holds in each of them: part of what such a measure registers when inflation sits above the announced number is the distance itself, priced by the public's finite tolerance, not doubt about the outlook. Measured uncertainty therefore rises with the overshoot for a reason that carries no information about the outlook, and a regression that uses the raw measure as a regressor inherits the distance from target as a confound. The purge of Section~\ref{sec:purge} is the response: Normalized Uncertainty divides the raw second moment by the fitted envelope, removing the component the distance predicts---and with it, under the same-sign assumption of Section~\ref{sec:purge}, at least the bill---and keeping the residual the anchor does not explain. Section~\ref{sec:nu} shows what the distinction is worth: in the pricing of French bank credit, raw uncertainty carries no information about loan rates once the policy stance and bank effects are held, while the purged measure is associated with rates nearly nine-tenths of a percentage point higher per standard deviation; in the cross-country growth regressions of \citet{barro1995inflation} the purge turns his null into a result; and the industrial-production decline that follows a raw uncertainty shock before 2020 shrinks by about a quarter under the purge and all but disappears once the distance from the anchor is held. The benefit of the number is a first-moment fact, and it stands; the cost is a second-moment fact, and it stays; what the pair changes is that inflation uncertainty, measured against an announced number, has to be read net of the number's own footprint.

\subsection{The cost of opacity}
\label{sec:opacity}

A second choice runs alongside the first, and the impossibility bears on it more sharply. If the bank can never demonstrate that its tolerance was warranted, what does it lose by saying nothing---by holding its reading of the shock, its objective, and its reaction function to itself? Not verification: the history reaches the rule only through the composite $\varphi_\pi(1-\lambda)$, so no accumulation of it updates what the public believes about the coefficient and the tolerance behind that composite, and a bank that goes quiet forfeits nothing that was ever on offer. What it loses is priced by the same objects that price the bill---the hazard, the regime belief, the pass-through---and the price has a form: linear in the bill's own rate, quadratic in the distance from target, and convex in the opacity itself (the quadratic charge is demonstrated in Appendix~\ref{app:opacity}).

\paragraph{The price of silence.} The linear part is the bill's own
rate: $\nu_1$ is an inference, an inference conditions on what the bank
says, and communication may lower it but cannot send it to zero while
Proposition~\ref{prop:oe} holds---opacity is the same statement run
backwards. The quadratic part is the stance-dispersion charge just met in
Section~\ref{sec:announcecost}: with a two-state stance belief,
$\Var_\lambda(\rho)\,d_t^{2}=\bigl(\rho(\lambda_H)-\rho(\lambda_L)\bigr)^{2}
P_t(1-P_t)\,d_t^{2}$ enters conditional variance---the coefficient $c$
the occupancy identity \eqref{eq:bill} carries---and a bank that
withdraws its signals lets the regime belief mix toward the interior,
where the term approaches its ceiling: at the persistence map's
calibrated slope and the measured episode bounds, a few percent of
distance-driven forecast variance at a two-point gap and up to a quarter
of it at the largest gap of the recent surge---a bound, not an estimate.
The convex part is the cascade: a flip raises average expected inflation,
which passes into realized inflation through \eqref{eq:passthrough},
which raises the hazard---flips beget flips, at a branching ratio
$\varrho$---so opacity that multiplies the hazard by $m$ multiplies the cost
by $m(1-\varrho)/(1-m\varrho)>m$, diverging at $m^{\ast}=1/\varrho$: there exists
an opacity beyond which the anchor does not hold at all. Communication, for its part, has a floor. An announced reaction
coefficient is a claim about states that do not realize, and it is never
verified: the public sees only the states that occur, and on those the
history has the same law along the whole ridge of the effective rule,
where the coefficient and the tolerance enter only through their
composite $\varphi_\pi(1-\lambda)$---a tough rule applied tolerantly and a
soft one applied strictly leave identical tracks. The states that would
tell them apart, a disturbance of announced composition or a tolerance
change of announced size, are the ones a field history never plays. The
announcement is therefore cheap talk disciplined only by the rows that do
realize---what disclosure can supply is a state-contingent account
audited as states arrive, never a number. The estimate that would price
the choice does not exist: it requires variation in disclosure
separable from variation in the state, and disclosure regimes change
precisely when states do.

\subsection{An opening: what it would take to measure the public's tolerance}
\label{sec:measuring_tolerance}

The public's tolerance is not a preference the bank can consult: it
is a distribution of private thresholds, visible only in its
consequences. The observational route this paper has run measures the
mechanism's shape---the hazard's slope, the size of the typical
re-basing, a kink located at the announced target
(Section~\ref{sec:law})---but it cannot price a crossing and it contains
no counterfactual, because the data hold one path. Two designs would
reach what observation cannot.

\paragraph{The bank's side: the field's one menu.} Behind the public's thresholds sits the harder object, the bank's dial against its rule---the pair the walls leave on a ridge, separated only by \emph{counterfactual menus}: answers conditioned on states of known magnitude and composition, exactly what field histories never contain (in the language of experimental design, \citealp{healy2026which}). The field has fielded one such menu, the New York Fed's scenario-matrix question to dealers and market participants; Appendix~\ref{app:spd_panel} reads it, and what it shows is a band of perceived rules that never closes: respondents who share the same history keep disagreeing about a coefficient that history does not pin down.

\paragraph{The laboratory route: pricing a crossing.} What no field
data can do is create the counterfactuals that would separate a tolerant
bank from a weak one, or elicit a threshold directly; a laboratory can.
Present subjects with inflation paths under a stated regime and elicit,
path by path, the probability that the anchor has moved: the jump locates
the threshold, its dispersion across subjects is the distribution the
aggregation needs, and a payoff that makes being wrong about the anchor
costly delivers the missing weight---what a crossing is worth---in the
subjects' own units. Pricing a crossing is what would convert the bill
from a priced structure into a policy rule.
    \section{Conclusion}
\label{sec:conclusion}

How much of an inflation deviation to tolerate is a well-defined optimum inside the profession's own workhorse model, and this paper's first result is that it cannot be identified---not for want of data but by construction. The curve, the rule and the shock fail separately, and together they leave an optimally tolerated gap and an unwarranted drift observationally equivalent in the inflation history: the verdict on any past stance can be argued, never settled. The impossibility is not free. Because the announced target names the number a deviation is measured against, the public prices the unverifiable bet one private deadline at a time, and the price obeys a law we document in four distinct sources: forecast variance flat below the target, rising linearly above it, kinked at an anchor the daily market locates at the announced number, recovered from prices. The bill is proportional to the point-years inflation spends above target, and it is charged not for missing the target but for being unable to prove why inflation is not sitting at it. The announced number remains worth having: it is what compresses beliefs at the target, and what the law prices is not the target but the time spent above it. The public's tolerance of an inflation gap can differ across jurisdictions and markets and drift over time: what generalizes is the shape of the law, never the size of the bill. A central bank that cannot prove its tolerance optimal can still choose it deliberately, say what it is doing, and know what the saying costs. Normalized Uncertainty purges the distance-predictable component of measured uncertainty and thereby removes the bill; re-read through it, uncertainty's price in credit rises from nothing to decisive, \citeauthor{barro1995inflation}'s null becomes a result on his own panel, and the industrial-production decline that follows a raw uncertainty shock before 2020 shrinks by a quarter, and all but disappears once the distance from the anchor is held---the activity covariance rides on the level's distance from the anchor, not on uncertainty about it.

    \newpage
    \bibliographystyle{aea}
    \bibliography{literature}

\begin{figure}[!htbp]\centering
\includegraphics[width=0.86\linewidth]{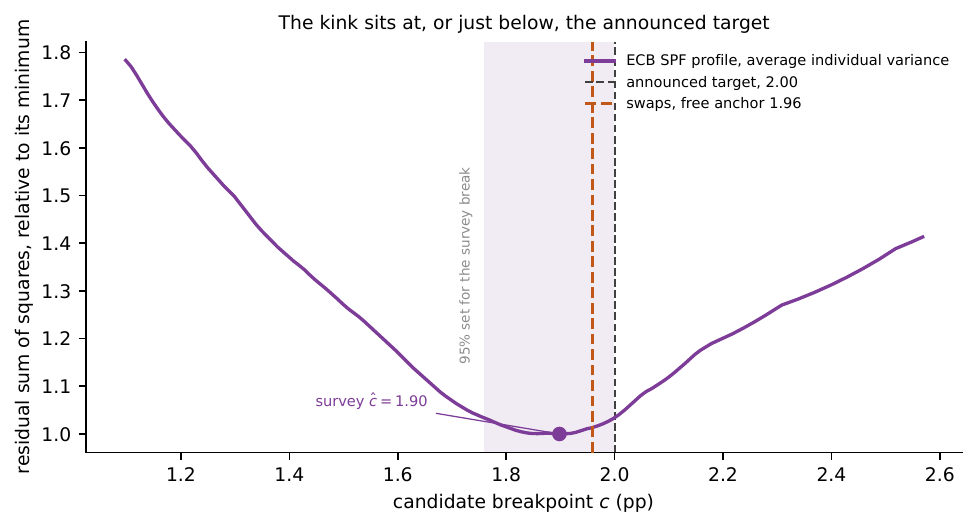}
\caption{\textbf{The kink sits at, or just below, the announced target.} Residual sum of
squares from the ECB SPF---the average individual predictive variance regressed on the two
arms---as the breakpoint $c$ is profiled over the central ninety percent of the consensus
forecast's range, normalised by its minimum. The profiled optimum is $\hat c=1.90$; the
shaded band is the profile-likelihood $95\%$ set, $[1.76,\,2.00]$, with the announced target
at its edge; a nonparametric bootstrap over rounds, $2{,}000$ draws, gives a $90\%$ interval
of $[1.78,\,2.01]$. The independent free-anchor estimate from the daily inflation-swap
market, $1.96$, lies inside the survey's set. Nothing imposes $2\%$ in either estimate.
Script: \texttt{fig01\_kink\_location.py} (the survey profile is estimated there; the
swap estimate is carried as printed).}
\label{fig:kinkloc}\end{figure}

\begin{figure}[!htbp]\centering
\includegraphics[width=0.95\linewidth]{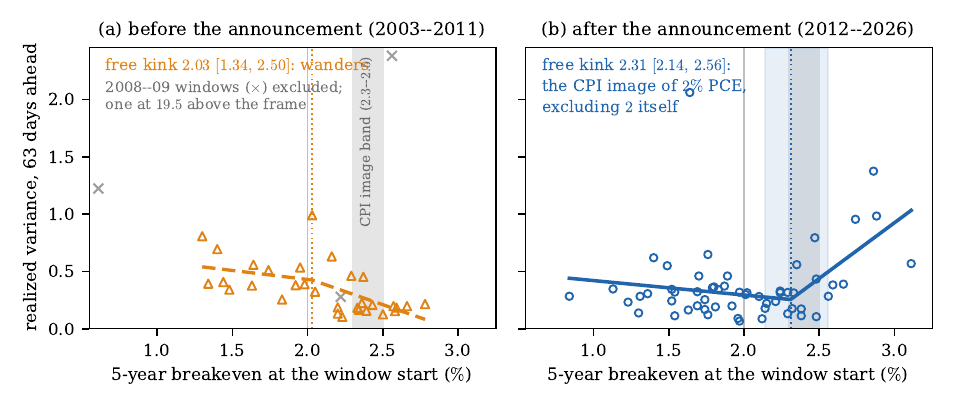}
\caption{\textbf{The announcement in the daily market: the five-year TIPS
breakeven before and after 25 January 2012.} Each point is one non-overlapping
sixty-three-day window: the breakeven at the window start against the realized
variance of the sixty-three trading days ahead (annualized mean squared daily
change), 2003--2026. The shaded vertical band is the CPI image of the announced
$2\%$ PCE objective ($2.3$--$2.5$); the thin line marks $2.0$. \emph{Panel
(a):} before the announcement, with the 2008--09 TIPS liquidity dislocation
excluded (grey crosses; one at $19.5$ above the frame): the fitted arms fall
through the band and the profiled kink wanders ($2.03$, $90\%$ interval
$[1.34,\,2.50]$). \emph{Panel (b):} after: the fit is a V hinged at the
profiled kink $2.31$, whose $90\%$ interval $[2.14,\,2.56]$ (vertical
shading) straddles the band and excludes two itself; the above arm carries
$t=3.1$ against $t=1.0$ below. Data: FRED \texttt{T5YIE}. Script:
\texttt{fig02\_us\_daily\_breakeven.py}.}
\label{fig:usdaily}\end{figure}

\begin{figure}[!htbp]\centering
\includegraphics[width=0.95\linewidth]{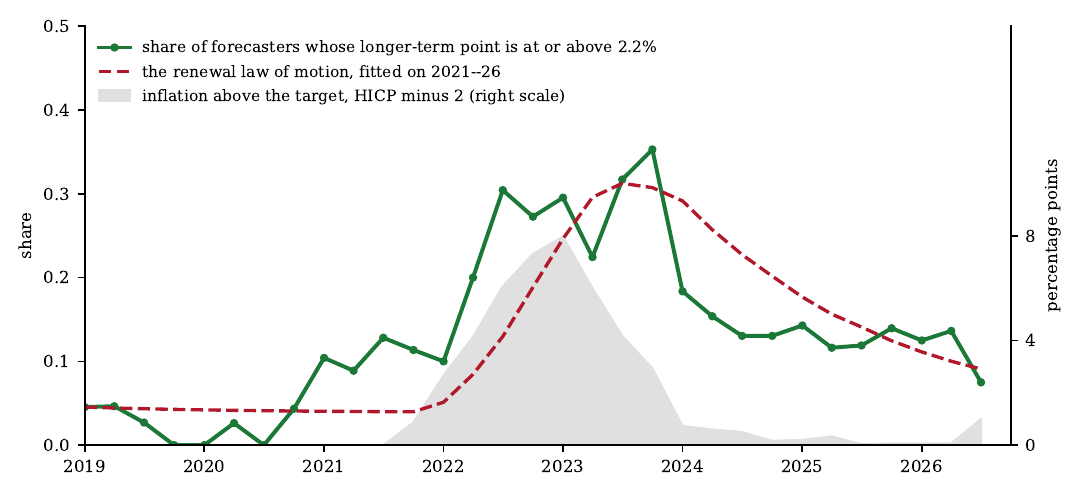}
\caption{\textbf{The benefit of the doubt, read on the professionals.} Share of
ECB Survey of Professional Forecasters respondents whose longer-term inflation
point forecast---four to five years ahead---is at or above $2.2$ per cent
(solid; $39$ to $52$ respondents per round over 2021--26), against inflation's
distance above the target (grey, right scale: headline HICP inflation minus two,
the mean of the last three prints known at each round's base month). The
threshold sits two tenths above the announced number so that the many points
that moved from $1.6$--$1.9$ to exactly $2.0$ with the July 2021 strategy review
do not count. Dashed: the renewal law of motion of Appendix~\ref{app:clocks},
$\dot A=r(1-A)-A\eta\dpl$ for the anchored share $A$, run quarterly on the
observed overshoot with a departure rate $\eta=0.06$ per point-year and a
re-anchoring rate $r=0.66$ per year fitted by least squares on 2021Q3--2026Q3
(root-mean-square error $0.062$), the resting level being the 2016--21 mean
share. Longer-term points are read from the survey's round files. Script:
\texttt{fig03\_benefit\_of\_doubt.py}.}
\label{fig:benefit}\end{figure}

\begin{figure}[!htbp]\centering
\includegraphics[width=0.86\linewidth]{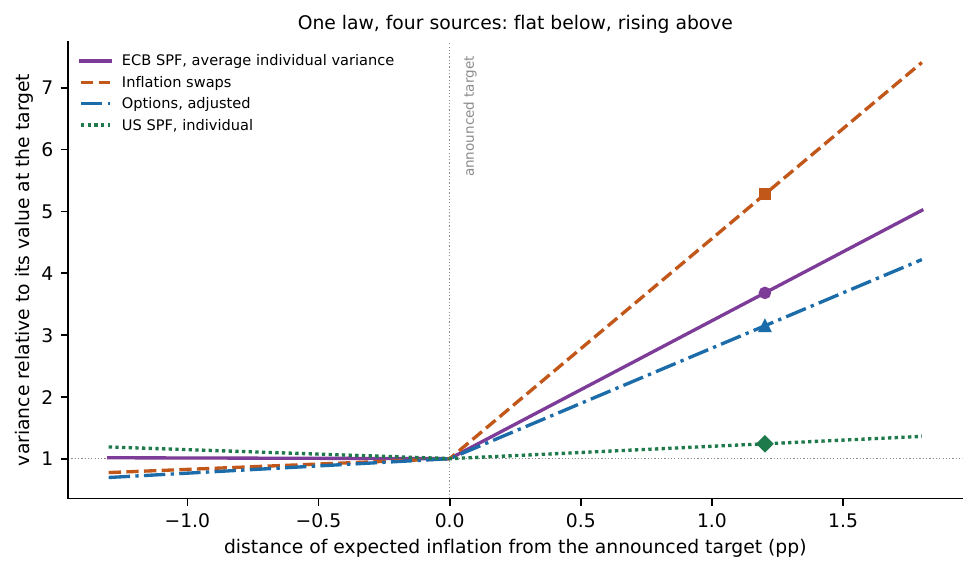}
\caption{\textbf{One law, four sources.} Each source's fitted arms specification
\eqref{eq:law}, divided by its own intercept, so every line equals one at the announced
target. The vertical scale is therefore a \emph{shape}, not a magnitude: slopes cannot be compared across sources whose variances are in
different units. What the figure asks is whether each source is flat to the left of the
target and rising to the right of it. Script: \texttt{fig04\_law\_across\_sources.py},
which estimates the two survey sources and draws the two market sources from the
coefficients printed in Table~\ref{tab:armsbysource}.}
\label{fig:lawfour}\end{figure}

\begin{figure}[!htbp]\centering
\includegraphics[width=0.75\linewidth]{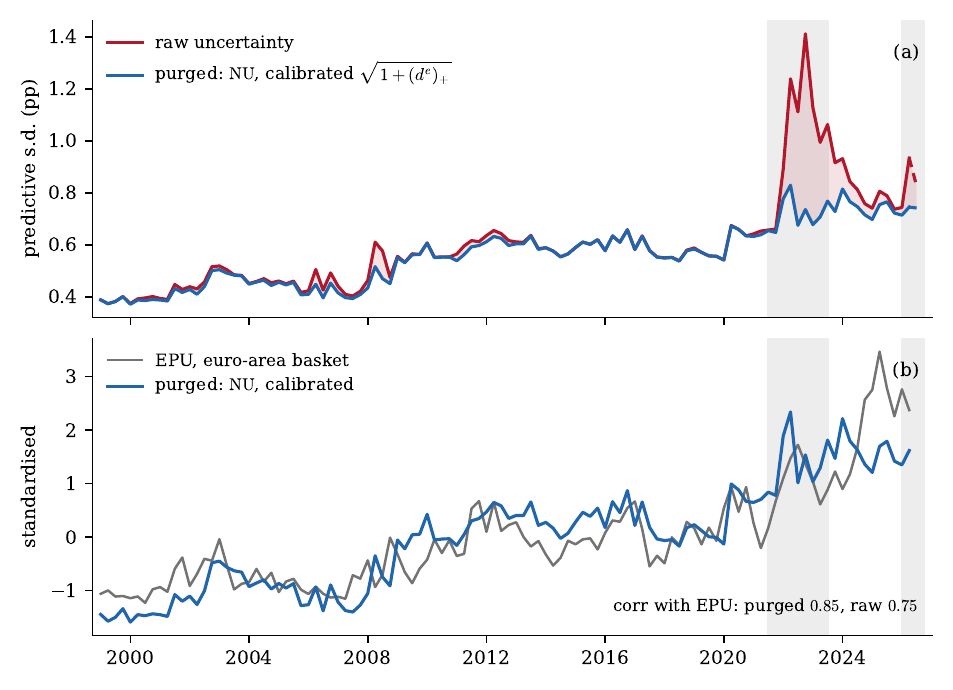}
\caption{\textbf{The purge in the euro-area series.} \emph{Panel (a):} raw
survey uncertainty---the round mean of individual one-year predictive standard
deviations---and its purged counterpart $\mathrm{NU}$, quarterly,
1999Q1--2026Q3. Here the denominator of \eqref{eq:NU_definition} is
\emph{calibrated} at unit coefficients, $\sqrt{1+\dpl}$, rather than estimated,
and it loads on the above-target arm alone. The choice is deliberate. An
estimated envelope is a property of the sample it is fitted on: its arms are
identified by the above-target rounds, which every window contains in a
different number and depth---in this survey the 2021--23 surge supplies
nearly all of the wide gaps---so a series purged with fitted coefficients is a
function of the observation period, rewritten back to its first round whenever
the window is extended. A purge meant to be comparable across vintages, across
sources and between studies has to fix its rule once, independently of the
sample it will be applied to; the unit calibration does that, at the cost of
the size of the correction, which the fitted envelope of
Section~\ref{sec:fitlaw} sets at $b^{s}_+/a^{s}=2.23$ where a regression
coefficient is at stake. The figure is therefore the correction in its
simplest reading, and nothing drawn in it depends on a fitted coefficient. The
two series coincide while the consensus forecast sits at the target and
separate where the distance opens; the gap between them is the component the
law attributes to the anchor. The last round postdates the sample the
paper's fitted results use and is drawn dashed. \emph{Panel (b):} that same purged
series against the euro-area economic policy uncertainty index of
\citet{baker2016measuring} (average over the available euro-area member
indices; both standardised over their overlap, 1999Q1--2026Q2). For the
calibrated series drawn here the correlation with the index is $0.85$, against
$0.75$ for the raw series: removing the arithmetic of the distance from target
costs the measure nothing in its agreement with a proxy built from newspaper
text, which shares none of the survey's construction. Script:
\texttt{fig05\_nu\_purge.py}.}
\label{fig:nu}\end{figure}

\begin{figure}[!htbp]\centering
\includegraphics[width=0.95\linewidth]{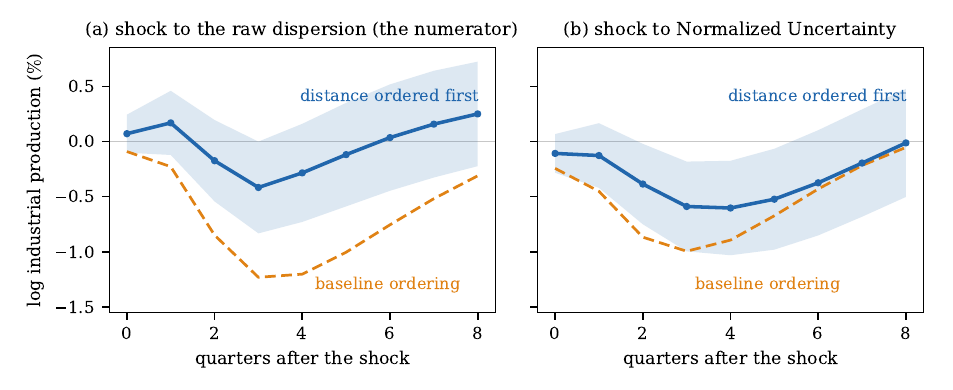}
\caption{\textbf{The level, and the uncertainty: industrial production after an
uncertainty shock, with and without a distance control, 2000--2019.} Orthogonalised
responses of log industrial production (percent) to a one-standard-deviation shock in
the quarterly euro-area system of Section~\ref{sec:nu} (policy-uncertainty index, shock
variable, HICP inflation, deposit facility rate, unemployment, log industrial
production; recursive identification, lag order by information criterion), estimated on
the sample ending in 2019Q4. Dashed: the baseline ordering, shock variable second.
Solid, with ninety-percent asymptotic bands: the above-anchor distance $(\pi_t-2)^{+}$
ordered ahead of the shock variable---the recursive control for the distance.
\emph{Panel (a):} the shock is the raw survey dispersion---the round mean of individual
one-year predictive standard deviations, the numerator of \eqref{eq:NU_definition};
the distance control removes the response almost entirely (cumulative $-6.1\%$ to
$-0.4\%$). \emph{Panel (b):} the shock is Normalized Uncertainty, on panel (a)'s scale; the control reduces the response from $-4.6\%$ to $-2.8\%$, which stays away from zero at ninety percent through quarters two to five. No causal reading is
claimed. Script: \texttt{fig06\_ip\_response.py}.}
\label{fig:rawnu}\end{figure}

\begin{figure}[tp]
\centering
\includegraphics[width=0.9\textwidth]{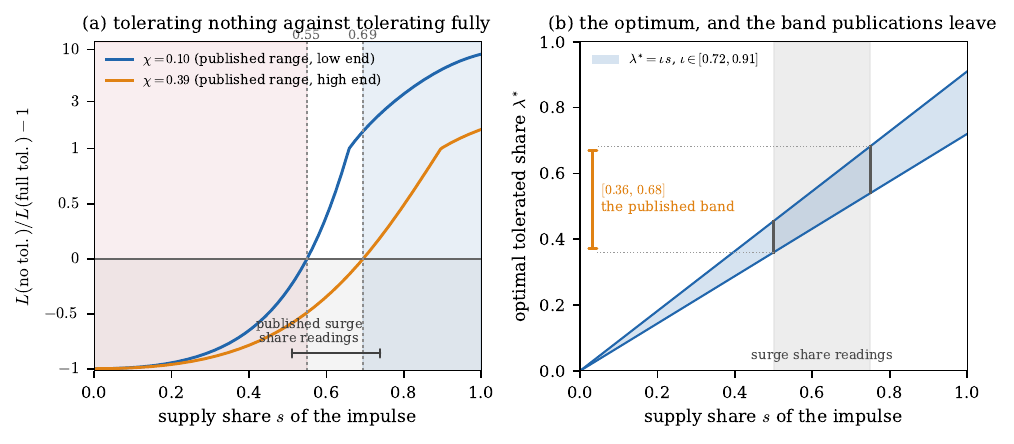}
\caption{Acting without identifying slope or composition, on the unit
interval of supply shares. \emph{Panel (a):} the loss of tolerating nothing
relative to tolerating fully, $s^{2}/(\chi+(1-s)^{2})-1$ (symmetric log
scale beyond $\pm1$), at the low and the high
end of the published euro-area slope range ($\chi=0.10$ and $0.39$).
Zero tolerance is the costlier corner exactly where $s>(1+\chi)/2$; the
dashed verticals mark that boundary at the calibrated band's ends ($0.55$
and $0.69$), and the bracket marks the leading published readings of the
surge's supply share, $[0.50,0.75]$, which straddle it. \emph{Panel (b):}
the unidentifiable optimum $\lstare=\iota s$ over the calibrated intensity
band $\iota\in[0.72,0.91]$; its image at the published surge shares is the
band $[0.36,0.68]$ of Proposition~\ref{prop:budget}. Script:
\texttt{fig07\_two\_unknowns.py}.}
\label{fig:twounknowns}
\end{figure}

\begin{table}[!htbp]\centering
\caption{The arms specification, source by source}
\label{tab:armsbysource}
\begin{tabular}{lrrrrr}
\toprule
 & $n$ & $a$ & $b_-$ & $b_+$ & $R^2$ \\
\midrule
ECB SPF, average individual variance $W_t$ & $109$ & $0.383$ & $+0.004$ & $+0.855$ & $0.705$ \\
                              &         &         & $(0.05)$ & $(12.22)$ &        \\
ECB SPF, disagreement $D_t$   & $109$   & $0.038$ & $+0.113$ & $+1.028$ & $0.719$ \\
                              &         &         & $(2.36)$ & $(8.93)$ &        \\
ECB SPF, total mixture variance $T_t$ & $109$ & $0.420$ & $+0.117$ & $+1.882$ & $0.789$ \\
                              &         &         & $(1.01)$ & $(13.69)$ &        \\
Inflation swaps, 2y, 63d RV   & $5{,}421$ & $0.451$ & $-0.078$ & $+1.606$ & $0.496$ \\
                              &         &         & --- & --- &        \\
\quad non-overlapping windows & $87$ & $0.395$ & $-0.015$ & $+1.664$ & $0.461$ \\
                              &         &         & $(-0.1)$ & $(6.4)$ &        \\
Inflation options, adjusted   & $3{,}941$ & $0.380$ & $-0.089$ & $+0.680$ & $0.359$ \\
                              &         &         & $(-0.78)$ & $(7.85)$ &         \\
US SPF, core CPI, individual  & $1{,}783$ & $0.314$ & $+0.046$ & $+0.063$ & $0.005$ \\
                              &         &         & $(0.59)$ & $(2.67)$ &          \\
\bottomrule
\end{tabular}
\vspace{0.3em}
\parbox{\linewidth}{\footnotesize \textit{Notes:} $t$-statistics in parentheses.
Each row fits $V=a+b_-(-d)_++b_+(d)_+$ on that source's own distance measure; the survey
rows use the forecast gap, the market rows the level gap. Inference is each source's own:
the survey rows HAC(4); the daily swap row reports no $t$-statistics: because its sixty-three-day windows
overlap, daily HAC inference overstates precision (it returns $t\approx25$); the honest
inference uses the $87$ non-overlapping windows of the row beneath it, where the same arms
are $+1.664$ with $t=6.4$ (block-bootstrap $95\%$ interval $[1.24,\,2.26]$) and $-0.015$
with $t=-0.1$; the options row on the premium-adjusted daily series; the US row clustered by round. \emph{Slopes are
not comparable across rows}---the variances are in different units; only the shape is. The
three ECB rows are the objects of Section~\ref{sec:fitlaw}: the law is estimated on the average
individual variance; the total is what aggregate data deliver and is the object of the US
announcement regressions, the market rows and every macro application; disagreement is the
complement. The ratio $b_+/a$ that normalizes the corrected series where it is estimated is that of the average individual variance, $0.855/0.383=2.23$. The level of the survey arms depends on where the mass in the open top bin is
placed; the bracket is in \citet{vansteenberghe2026uncertain}. The
US SPF row is the individual-level fit after January 2012; the aggregate US fit is
symmetric because disagreement, which is symmetric there, dominates it
(Section~\ref{sec:law}). The options row is the premium-adjusted series; the same days
fitted on raw implied variance return inverted arms, the case for the adjustment
(Section~\ref{sec:law}). Script: \texttt{tab01\_arms\_by\_source.py}, which estimates the
survey rows; the swap and option rows rest on licensed data and are carried as printed.}
\end{table}

\begin{table}[!htbp]\centering
\caption{Uncertainty and the price of credit: raw versus purged}
\label{tab:credit}
\begin{tabular}{lcccc}
\toprule
 & (1) alone & (2) $+$ macro & (3) $+$ bank FE & (4) $+$ all FE \\
\midrule
raw standard deviation $\sqrt V$ & $0.0149^{**}$ & $0.0086^{*}$ & $0.0005$ & $0.0004$ \\
                                 & $(2.42)$ & $(1.87)$ & $(0.51)$ & $(0.34)$ \\
\quad deposit facility rate      & --- & $0.0189^{***}$ & $0.0138^{***}$ & $0.0137^{***}$ \\
                                 &     & $(3.27)$ & $(3.26)$ & $(3.26)$ \\
\addlinespace
Normalized Uncertainty $\mathrm{NU}$ & $0.0171^{***}$ & $0.0084^{***}$ & $0.0090^{***}$ & $0.0089^{***}$ \\
                                 & $(3.51)$ & $(2.76)$ & $(3.98)$ & $(4.04)$ \\
\quad deposit facility rate      & --- & $0.0195^{***}$ & $0.0110^{***}$ & $0.0109^{***}$ \\
                                 &     & $(3.24)$ & $(3.22)$ & $(3.23)$ \\
\bottomrule
\end{tabular}
\vspace{0.3em}
\parbox{\linewidth}{\footnotesize \textit{Notes:} the specification is
\eqref{eq:creditspec}: $71{,}069$ French overdraft facilities to non-financial
corporations---AnaCredit France, accessed at the Banque de France---new business,
September 2018 to March 2026; $60$ bank groups, $31$ quarters. Column~(1) is the
uncertainty measure alone; column~(2) adds the debtor default probability, the deposit
facility rate and log industrial production; column~(3) adds bank fixed effects;
column~(4) adds sector, size and location effects. The two uncertainty measures are
substituted one at a time---each block is its own set of four regressions, never a joint
one---and the deposit facility rate row beneath each measure is that regression's own
coefficient on the rate. $t$-statistics in parentheses, clustered by bank. The
uncertainty measure varies only by quarter, and the contrast survives every treatment of
that fact: by quarter, the fixed-effects columns read $t=4.88$ and $4.98$ for $\mathrm{NU}$ against $0.44$ and $0.31$ for the raw standard deviation (in the no-control columns (1)--(2), $2.60$ and $1.88$ against $2.34$ and $1.90$); two-way bank-quarter clustering gives $3.43$ against $0.30$ in column~(4); and a restricted wild-cluster (Rademacher) bootstrap on the $31$ quarters, $999$ draws, gives $p<0.001$ for $\mathrm{NU}$ against $p=0.783$ for the raw standard deviation in column~(4), and $p<0.001$ against $0.682$ in column~(3). The raw forecast variance behaves like its square
root throughout (column-(4) $t=-0.28$ by bank), and the additive-residual construction---the
same specification with the subtraction in place of the ratio---carries $+0.0062$ ($t=4.15$; wild-quarter $p=0.004$) in column~(4).
Appendix~\ref{app:ngu} runs the same specification on the growth analog of the
correction and on the orthogonalized components. The
dependent variable is the annualised agreed rate, entered as a decimal fraction with a
mean of $0.0889$ on this sample; the regressors are standardised but the dependent
variable is not, so the column-(4) coefficient of $0.0089$ is $0.89$ percentage points
per standard deviation of the purged measure, against a mean rate of $8.9\%$. Rebuilding the densities from the raw survey files and bracketing the survey's open top bin between a point mass at its edge and a uniform density to the era's realized maximum \citep{vansteenberghe2026uncertain} moves the column-(4) $\mathrm{NU}$ coefficient only within $[0.0070,\,0.0083]$, $t$ between $3.96$ and $4.04$.
$^{***}p<0.01$, $^{**}p<0.05$, $^{*}p<0.1$.}
\end{table}

\begin{table}[tp]
\centering
\caption{Cross-country growth regressions, raw and purged inflation volatility}
\label{tab:crosscountry}
\begin{tabular}{lccccc}
\toprule
 & (1) & (2) & (3) & (4) & (5) \\
\midrule
Mean inflation $\bar\pi$ & $-0.0170^{***}$ &  &  & $-0.0114$ & $-0.0099^{*}$ \\
 & $(-3.52)$ &  &  & $(-1.28)$ & $(-1.65)$ \\
Raw $\sigma_\pi$ &  & $-0.0194^{***}$ &  & $-0.0082$ &  \\
 &  & $(-2.60)$ &  & $(-0.83)$ &  \\
Purged $\sigma_\pi/\sqrt{1+|\bar\pi|}$ &  &  & $-0.2032^{***}$ &  & $-0.1425^{**}$ \\
 &  &  & $(-3.28)$ &  & $(-2.18)$ \\
\midrule
Observations & 241 & 241 & 241 & 241 & 241 \\
Countries & 94 & 94 & 94 & 94 & 94 \\
$R^2$ & 0.602 & 0.600 & 0.604 & 0.603 & 0.608 \\
Period fixed effects & Yes & Yes & Yes & Yes & Yes \\
Country-clustered SE & Yes & Yes & Yes & Yes & Yes \\
Wild-cluster bootstrap $p$ &  &  &  & $0.43$ & $0.032$ \\
\bottomrule
\end{tabular}
\vspace{0.2em}
\begin{minipage}{0.90\linewidth}\footnotesize
\textit{Notes:} Each column is one growth regression on \citeauthor{barro1995inflation}'s panel
over his own periods---1965--75, 1975--85 and 1985--90; the dependent variable is the country's
average annual growth of real GDP per capita over the period, in 1985 international prices. Every
column includes his full control set, period fixed effects, and standard errors clustered by
country; $t$-statistics are in parentheses. The columns differ only in which summary of the
inflation experience enters: \textbf{(1)} mean inflation $\bar\pi$ alone; \textbf{(2)} the raw
within-period standard deviation $\sigma_\pi$ alone; \textbf{(3)} the purged measure alone;
\textbf{(4)} mean inflation together with the raw standard deviation; \textbf{(5)} mean inflation
together with the purged measure. Comparing~(4) with~(5) is the point of the exercise: the raw
standard deviation is virtually zero conditional on the level, as \citeauthor{barro1995inflation}
reports, and masks it; the purged measure is significant at five percent and the level's own
coefficient sharpens beside it.
\emph{Raw} is $\sigma_\pi$; \emph{purged} divides it by $\sqrt{1+|\bar\pi|}$. That deflator uses no
inflation target---no country in these periods had announced one---and contains nothing estimated:
it is $\sqrt{1+|\bar\pi|}$ exactly, with $\bar\pi$ in percentage points, the units in which
the envelope of \eqref{eq:var_decomp} is fitted. The dependent variable is a fraction: in column~(5) a
one-standard-deviation rise in the purged measure is worth $0.28$ percentage points of
annual growth, against $0.15$ for the raw measure in column~(4). The wild-cluster row
reports restricted Rademacher bootstrap $p$-values ($999$ draws, clustered by country)
for the volatility term of columns~(4) and~(5).
The controls are \citeauthor{barro1995inflation}'s own: initial income, male and female school
attainment at the secondary and higher levels, the interaction of initial income with human
capital as he defines it, life expectancy, fertility, government consumption net of education and
defence, public education spending, the black-market premium, the terms-of-trade change, the
investment ratio, and the political-rights index with its square. The one control of his we cannot
match is the rule-of-law index, which is not in the \citeauthor{barro1995inflation}--Lee
distribution and which he himself uses among his instruments. Inflation moments are period means
and standard deviations of the annual log change in the consumer price index
\citep{muller2025global}, untrimmed, as in his own exercise. Script: \texttt{tab03\_cross\_country.py}.
$^{***}p<0.01$, $^{**}p<0.05$, $^{*}p<0.1$.
\end{minipage}
\end{table}

    \clearpage
    \appendix
    \section{Proofs and derivations}
\label{app:derivations}

\subsection{The policy block: optimum, rule, and reduced form}
\label{app:optlambda}

\paragraph{The equilibrium concept.}
The policy block is a discretionary (Markov-perfect) equilibrium with no
payoff-relevant state---the standard discretion benchmark of
\citet{clarida1999science} and \citet[ch.~7]{woodford2003interest}, a
sequence of static problems each with a unique solution, whose solution is
the accommodated component $m_t=\lstar u_t$ of the effective target
\eqref{eq:target}. That component is therefore \emph{constructed}, by the
bank's problem, rather than selected by a determinacy criterion. The rule
\eqref{eq:taylor} implements the target: it responds, with $\varphi_\pi>1$,
to the deviation of inflation from the effective target rather than from the
announced number, so the Taylor principle applies around the effective
target and the block is determinate in the usual sense. The block is used
under two closures, stated here once. Under the \emph{strict-implementation
closure} the bank delivers its effective target up to the control error of
Assumption~\ref{ass:innov}, which is the observation equation \eqref{eq:obs};
Proposition~\ref{prop:oe}, the threshold rule and the law of
Proposition~\ref{prop:law} are stated under it. Under the
\emph{finite-coefficient closure} realized inflation is the solution of
\eqref{eq:nkpc}--\eqref{eq:target} at the rule's finite $\varphi_\pi$: in the
static benchmark, with $\kappa'\equiv\kappa/(1+\sigma\varphi_y)$ and
$A\equiv\kappa'\sigma\varphi_\pi$,
\[
d_t=\frac{(1+A\lstar)\,u_t+\kappa' v_t+A\,\xi_t}{1+A},
\qquad
d_t-(\lstar u_t+\xi_t)=\frac{(1-\lstar)\,u_t+\kappa' v_t-\xi_t}{1+A},
\]
so accommodation and drift enter with the common loading $A/(1+A)$, beside the
part of the impulse the coefficient does not offset, and the difference from
\eqref{eq:obs} is correlated with the supply and drift components, inherits
their persistence, and vanishes only as $\varphi_\pi\to\infty$; it is not the
white error of Assumption~\ref{ass:innov}, and the identification argument is
not carried across to this closure. The pass-through \eqref{eq:passthrough},
the realized law \eqref{eq:realizedlaw} and Appendix~\ref{app:realized} are
stated under it (with the forward term retained the loadings are unchanged),
and the reduced form below under its static specialization, whose
approximation is disclosed in the paragraph that closes the reduction.

\paragraph{The share.} Take the period loss $\mathcal L=d^2+\vartheta y^2$ and the Phillips curve static approximation $d=\kappa y+u$ for a given cost-push realisation $u$. Substituting and differentiating,
\[
\frac{\partial}{\partial y}\Big[(\kappa y+u)^2+\vartheta y^2\Big]=0
\quad\Longrightarrow\quad
y^{\mathrm{opt}}=-\frac{\kappa u}{\vartheta+\kappa^2},
\qquad
d^{\mathrm{opt}}=\frac{\vartheta\,u}{\vartheta+\kappa^2} .
\]
The optimal gap is therefore a fixed fraction of the shock, Equation \eqref{eq:lstar-model}, with the micro-founded
weight $\vartheta=\kappa/\ep$ of \citet{Gali2015}, with $\ep$ the elasticity of substitution; \eqref{eq:lstar-model} is decreasing in both $\kappa$ and $\ep$: a steeper Phillips curve or a less distorted economy both call for less accommodation. Equation
\eqref{eq:lstar-model} is the partial cost-push accommodation of
\citet{clarida1999science} and \citet{woodford2003interest}, the tolerated
deviation of \citet{svensson1997inflation}, and the finite optimal response to
supply-driven inflation of \citet{hofmann2026targeted}. Published slopes put the share
high. For the euro area, whose curve has been estimated stickier than the US one since
\citet{gali2001european}, the specifications behind this paper place $\kappa$ at
$0.017$--$0.038$---inside the flat range that identification-robust inference leaves
open \citep{mavroeidis2014empirical}---which with $\ep\in[6,10]$ gives
$\lstar\in[0.72,0.91]$; for the United States, the credibly identified regional slope
of \citet{hazell2022slope}, $\kappa\approx0.006$, puts $\lstar$ above $0.9$, whereas the
textbook calibration $\kappa=0.17$ at $\ep=9$ \citep[ch.~3]{Gali2015} would put it near
$0.4$. On the modern flat curve, tolerating most of a pure cost-push is thus the optimal
response on both sides of the Atlantic, on a scale the calibration fixes and the data do
not check (Wall~I, Appendix~\ref{app:walls}).

\paragraph{The rule, as the data read it.} The rule \eqref{eq:taylor} responds to the deviation from the effective target \eqref{eq:target}, $\hat\imath_t=\varphi_\pi(d_t-m_t-\xi_t)+\varphi_y y_t$. An observer who regresses for an episode $e$ the rate on the announced gap does not see $m_t$ and $\xi_t$ separately. Writing the target's shift as a share of the gap plus a residual, $m_t+\xi_t=\lcbe\,d_t+(\text{residual})$ with $\lcbe\equiv\Cov(m_t+\xi_t,d_t)/\Var(d_t)$ the projection coefficient and the residual written $\xi_t$ again below (it is the policy-rule residual in which the drift lands, together with everything else that shifts the intercept), the rule reads as the gap-scale \emph{operating form} $\pi^{*}_t=\bar\pi+\lcbe d_t+\xi_t$; with $\lcbe$ the bank's operated share held over an episode. The block \eqref{eq:nkpc}--\eqref{eq:taylor} then reads, with the rule alone rewritten:
\begin{equation}\label{eq:phieff}
\hat\imath_t = \varphi_\pi(1-\lcbe)\,d_t
\;-\;\varphi_\pi\,\xi_t
\;+\;\varphi_y\,y_t
\qquad\Longleftrightarrow\qquad
\varphi_\pi^{\mathrm{eff}}=\varphi_\pi(1-\lcbe).
\end{equation}
Look-through is observationally a weaker response to inflation. Two features of \eqref{eq:phieff} carry much of what the paper measures, and a third bounds what it may claim. The residual term is a \emph{policy-rule residual}---anything that shifts the rule's intercept, a moving natural rate, a drifting long-run target, smoothing, lands in $\xi_t$ and is observationally entangled with tolerance. And a second mechanism produces the same effective coefficient: a bank that distrusts the transmission of its instrument attenuates its response \citep{brainard1967uncertainty}, so caution of size $\sw$ behaves exactly like a look-through of size $\lB=\sw/(1+\sw)$. Softening by choice and softening by necessity are the same wedge on $\varphi_\pi$; any rule-based measurement recovers their composition $\leff$ and therefore an \emph{upper bound} on deliberate look-through. The third feature is the scale. The projection entangles accommodation and drift by construction---it loads on whichever persistent component the gap carries, so a rule residual measures tolerance-plus-drift, never tolerance alone (Wall~II, Appendix~\ref{app:walls})---and the dial it defines lives on the scale of the gap, not of the impulse: in the static reduction below the fraction of an impulse that reaches inflation under \eqref{eq:phieff} is $P(\lambda)=1/(1+A(1-\lambda))$, equal to one at $\lambda=1$ and to $1/(1+A)$ at $\lambda=0$. A rule-scale dial and the pass-through optimum $\lstare$ therefore coincide only at full tolerance and are otherwise different numbers, the only other root of $P(\lambda)=\lambda$ lying outside the unit interval at the calibration used below.

\paragraph{What moves the optimum, and what mimics it.} Three further considerations belong here: persistence and nonlinearity qualify the optimum; Brainard caution mimics it. None changes what can be identified; each changes what the identified objects mean.

\paragraph{Persistence.} Under discretion against an AR(1) cost-push shock of persistence $\rho_u$, the accommodated share is
\begin{equation}\label{eq:lstar_rho}
\lstar(\rho_u)=\frac{1}{1+\ep\kappa-\beta\rho_u},
\end{equation}
which \emph{rises} with $\rho_u$, exceeds one once $\beta\rho_u>\ep\kappa$---at the calibrated band, for any persistence above $0.10$--$0.38$---and reaches $1/(1+\ep\kappa-\beta)$, between $2.6$ and $8.9$, at the unit root. Discretion does not merely accommodate a persistent cost-push shock; it \emph{amplifies} it. This is the stabilization bias, and it is intuitive once stated: leaning is futile against a shock the bank cannot credibly commit to fight. The case for looking through \emph{less} when a shock is persistent is therefore a \emph{commitment} argument \citep{clarida1999science, woodford2003interest}: a credible bank leans into a persistent shock to keep $\E_t\pi_{t+1}$ anchored, so the optimal look-through \emph{falls} with $\rho_u$. This is the normative basis for the look-through-then-pivot of \citet{beaudry2026dilemma} and for the de-anchoring caveat of \citet{hofmann2026targeted}---the optimal supply response strengthens once persistence threatens the anchor.

\paragraph{Nonlinearity.} The intensity is itself not invariant to the inflation regime. If the Phillips curve steepens at high inflation \citep{benigno2023nonlinear}, $\kappa$ rises in the surge and $\lstar=1/(1+\ep\kappa)$ \emph{falls}, lowering the surge-optimal tolerance below its flat-curve value.

The optimal tolerance is thus $\lstare=\lstare\!\big(s_t,\rho_u;\,\ep,\kappa(\pi)\big)$, with the intensity its value at the corner $(s_t{=}1,\ \rho_u{=}0,\ \kappa\text{ constant})$---a benchmark that is calibrated, never estimated. Both of the arguments just added push the same way: they lower the optimum relative to the flat-curve, transitory-shock reading, which matters below when a measured stance is compared with it.

\paragraph{Softening by choice and softening by necessity.} Section~\ref{sec:results} noted that a bank distrusting its own transmission attenuates its response exactly as a tolerant bank does. The detail matters for what a measured stance can mean. With multiplicative uncertainty $\omega_t\sim(0,\sw)$ on the Phillips term ($\tilde\kappa_t=(1+\omega_t)\kappa$), the optimal response is attenuated to $\varphi_\pi/(1+\sw)$ \citep{brainard1967uncertainty}, so caution of size $\sw$ behaves like a look-through of size $\lB\equiv\sw/(1+\sw)$.\footnote{In estimation terms $\lB=1/(1+\tau^{2})$, where $\tau$ is the $t$-statistic the bank's econometrician attaches to the transmission estimate: \citet[p.~416]{estrella1999rethinking} derive exactly this factor---their $(1+\tau^{-2})^{-1}$ scaling of the optimal response.} The two compose multiplicatively on the same coefficient, $\varphi_\pi^{\mathrm{eff}}=\varphi_\pi(1-\lambda)(1-\lB)$, so the total effective tolerance is $\leff=1-(1-\lambda)(1-\lB)=\lambda+\lB-\lambda\lB$, read as $\leff\approx\lambda+\lB$ when caution is modest. Look-through is softening by \emph{choice}---deliberate, supply-contingent, snapping back as the gap closes; Brainard caution is softening by \emph{necessity}. They are observationally the same wedge on $\varphi_\pi$, with two consequences the paper develops: any rule-based measurement of tolerance recovers $\leff$ rather than $\lambda$, so measured tolerance is an \emph{upper bound} on deliberate look-through; and when caution is state-dependent, Brainard becomes a variance channel of its own---quadratic and \emph{symmetric} in the gap.

\paragraph{The episode identity.} Let the impulse be the pair $(u_t,v_t)$ and write $\kappa'\equiv\kappa/(1+\sigma\varphi_y)$ for the weight at which a demand disturbance enters the gap once the rule has responded to output. Under full tolerance nothing is offset and the gap is $d^{(\lambda=1)}_t=u_t+\kappa' v_t$. At the optimum the demand component is offset completely and the supply component is accommodated at $\lstar$, so $d^{\mathrm{opt}}_t=\lstar u_t$. Dividing,
\begin{equation}\label{eq:optlambda_episode}
\lstare=\frac{d^{\mathrm{opt}}_t}{d^{(\lambda=1)}_t}
=\frac{\lstar u_t}{u_t+\kappa' v_t}=s_t\lstar=\frac{s_t}{1+\ep\kappa},
\qquad s_t\equiv\frac{u_t}{u_t+\kappa' v_t} ,
\end{equation}
the structural pair $(\ep,\kappa)$ fixes how much of a supply shock to accommodate; the shock fixes how much of the impulse \emph{is} supply.
The division is exact and holds impulse by impulse, though its two inputs are not: $d^{\mathrm{opt}}_t$ comes from the static period loss and $d^{(\lambda=1)}_t$ from the reduced form with $\beta\E_t d_{t+1}$ dropped and the target residual $\xi_t$ set aside.

\paragraph{The intensity is the differential of a targeted rule.}
The intensity has an empirical reading that does not pass through $(\ep,\kappa)$. \citet{hofmann2026targeted} estimate \emph{targeted Taylor rules} that react to demand- and supply-driven inflation with distinct coefficients $\varphi_\pi^{d}$ and $\varphi_\pi^{s}$, finding a response to demand-driven inflation ($\varphi_\pi^{d}\approx3.75$) nearly fourfold that to supply ($\varphi_\pi^{s}\approx1.02$) for the United States. Their rule is ours read component by component---full response to the demand part, attenuated response to the supply part---so identifying the un-tolerated benchmark with the demand response, $\varphi_\pi\equiv\varphi_\pi^{d}$, the effective response to aggregate inflation is the share-weighted average $\varphi_\pi^{\mathrm{eff}}=(1-s_t)\varphi_\pi^{d}+s_t\varphi_\pi^{s}=\varphi_\pi^{d}(1-s_t\lambda^{\mathrm{TR}})$, with
\begin{equation}\label{eq:iota_targeted}
\lambda^{\mathrm{TR}}\equiv\frac{\varphi_\pi^{d}-\varphi_\pi^{s}}{\varphi_\pi^{d}}
\end{equation}
the look-through intensity revealed by the coefficient differential. This reproduces the form of \eqref{eq:optlambda_episode} with $\lambda^{\mathrm{TR}}$ in place of $\lstar$, on the rule scale: $\lambda^{\mathrm{TR}}$ is the share of supply-driven inflation the targeted rule does not respond to, the counterpart of the pass-through intensity under the convention that the bank meets the un-tolerated part with its demand coefficient. The two are different objects, as the scale note above records, and that the U.S.\ coefficients give $\lambda^{\mathrm{TR}}\approx0.73$, inside the euro-area structural band $[0.72,0.91]$, is a consistency of readings rather than an over-identifying test.
It also fixes the reading of our $\varphi_\pi$: it is the response the bank applies to inflation it does \emph{not} look through---the demand response---so a conventional single-coefficient rule, estimated at $\varphi_\pi\approx2$, already embeds an average tolerance. The targeted rule is, in this light, a \emph{full-information benchmark}: reacting to the demand and supply components with distinct coefficients presupposes that the bank can decompose inflation \emph{in real time}, the classification Wall~III denies it (Appendix~\ref{app:walls}). We therefore read $\lambda^{\mathrm{TR}}$ as the look-through a bank that \emph{saw} the state would run; the distance with the filtered $\lcb_t$ is what real-time ignorance of the state costs.

\paragraph{The reduced form.} One lemma carries the dial from the rule into the law of motion of the gap:

\begin{lemma}[The decomposition is the policy block's reduced form]
\label{lem:nkreduced}
Keep the Phillips curve \eqref{eq:nkpc} forward and impose the static-IS approximation: the system with the look-through rule \eqref{eq:phieff} collapses, with no backward element, to an exact relation in which the dial discounts the expected future. Transposed to the regression frequency---the form in which the gap's law of motion is fitted to data---the same coefficients move to the lag seat (below), and the gap follows
\begin{equation}\label{eq:rho}
d_{t+1}=\rho(\lambda)\,d_t+\varepsilon_{t+1},
\qquad
\rho(\lambda)=\frac{\beta}{1+A(1-\lambda)},
\qquad
A\equiv\frac{\kappa\sigma\varphi_\pi}{1+\sigma\varphi_y},
\end{equation}
with $\varepsilon_{t+1}=k_\xi\,\xi_{t+1}+\tilde\varepsilon_{t+1}$, $k_\xi=A/(1+A(1-\lambda))>0$, and $\tilde\varepsilon$ a composite cost-push-plus-demand innovation; $\rho'(\lambda)>0$: \emph{tolerating more of the gap slows its mean reversion}---mean reversion fastest under full reaction, and $\rho(1)=\beta$ at full tolerance.
\end{lemma}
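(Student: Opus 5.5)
The plan is to carry the look-through rule \eqref{eq:phieff} into the block by direct substitution, in three moves: first eliminate the output gap, then isolate $d_t$, then transpose to the regression frequency. For the first move, I take the IS curve \eqref{eq:is} under the static-IS approximation, dropping $\E_t y_{t+1}$ and, to keep the expected-inflation term in the Phillips curve alone, the real-rate channel $\sigma\E_t d_{t+1}$. That leaves $y_t=-\sigma\hat\imath_t+v_t$. Substituting the rule $\hat\imath_t=\varphi_\pi(1-\lcbe)d_t-\varphi_\pi\xi_t+\varphi_y y_t$ and solving for $y_t$ gives
\[
y_t=\frac{-\sigma\varphi_\pi(1-\lambda)\,d_t+\sigma\varphi_\pi\,\xi_t+v_t}{1+\sigma\varphi_y}.
\]
So the output gap is linear in the current gap, the residual and the demand disturbance, with the common denominator that defines $\kappa'=\kappa/(1+\sigma\varphi_y)$ and $A$.

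For the second move, I insert this into the forward Phillips curve \eqref{eq:nkpc} and collect the $d_t$ terms. Since $\kappa\sigma\varphi_\pi(1-\lambda)/(1+\sigma\varphi_y)=A(1-\lambda)$, this yields the exact relation
\[
d_t=\frac{\beta\,\E_t d_{t+1}+A\,\xi_t+\kappa' v_t+u_t}{1+A(1-\lambda)}.
\]
No lagged $d$ appears, and the dial enters only through the factor $1/(1+A(1-\lambda))$ that discounts the expected future. This is the "exact relation" of the statement. I will note explicitly that the approximation consists only in the static IS, in the same sense as the disclosure in the finite-coefficient closure above. If the real-rate term is retained, $\beta$ becomes $\beta+\kappa'\sigma$; no sign below changes, but the displayed $\rho$ does.

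For the third move, I read the relation at the regression frequency, where $d_t$ is regressed on $d_{t-1}$. The coefficient $\beta/(1+A(1-\lambda))$ on the expected future is moved to the lag seat, giving $\rho(\lambda)=\beta/(1+A(1-\lambda))$. The shock terms dated $t+1$ are collected into $\varepsilon_{t+1}$. The drift's loading is then read off the display as $k_\xi=A/(1+A(1-\lambda))>0$, and $\tilde\varepsilon_{t+1}=(u_{t+1}+\kappa'v_{t+1})/(1+A(1-\lambda))$ is the composite cost-push-plus-demand innovation. The comparative statics are then immediate:
\[
\rho'(\lambda)=\frac{\beta A}{(1+A(1-\lambda))^{2}}>0,
\qquad
\rho(1)=\beta,
\qquad
\rho(0)=\frac{\beta}{1+A}.
\]
Hence mean reversion is fastest under full reaction.

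The main obstacle is this third move. The forward relation has no intrinsic persistence, so the lag-seat form is a transposition of coefficients rather than a consequence of the equilibrium. To justify it, I would first observe that along any solution in which the gap inherits the AR(1) persistence of its driving components, the forward discount and the fitted autoregressive coefficient share the same $\lambda$-dependence. That dependence is what the comparative statics use. I would then state plainly that the level of $\rho$ is a reading convention, not a derived root. The lemma's content, that $\rho$ is increasing in the dial and equals $\beta$ at full tolerance, survives either way.
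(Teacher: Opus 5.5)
Your first two moves are the paper's own reduction. You use the same static IS $y_t=-\sigma\hat\imath_t+v_t$, the same solved output gap, the same forward relation with $k_\xi=A/(1+A(1-\lambda))$, and the same $\rho'(\lambda)=\beta A/(1+A(1-\lambda))^{2}$ and $\rho(1)=\beta$. Your real-rate aside, with numerator $\beta+\kappa'\sigma$, also matches the paper's disclosure.

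The step that fails is the bridge you propose for the third move. You claim that, along a solution in which the gap inherits the AR(1) persistence of its drivers, the fitted autoregressive coefficient shares the $\lambda$-dependence of $\beta/(1+A(1-\lambda))$. That is false, and it fails in exactly the case you describe. In the fully forward block, the minimum-state-variable solution with an AR(1) cost-push of root $\rho_u$ is $d_t=u_t/(1+A(1-\lambda)-\beta\rho_u)$. Hence $d_{t+1}=\rho_u d_t+(\text{innovation})$: the fitted coefficient is $\rho_u$ at every dial, and $\lambda$ moves only the loading. The same holds for any number of drivers sharing one root. Your bridge therefore delivers no $\lambda$-dependence at all, let alone the displayed one.

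The paper (Remark~\ref{rem:forwardstatus}) states exactly this: persistence is stance-invariant per shock. It recovers only the \emph{sign} of the comparative static, by one of two routes.
\begin{itemize}
\item \emph{Composition.} With drivers of distinct roots $\rho_j$, the loadings $\psi_j\propto 1/(1+A(1-\lambda)-\beta\rho_j)$ have semi-elasticity $A/(1+A(1-\lambda)-\beta\rho_j)$ in $\lambda$, which is increasing in $\rho_j$. Raising $\lambda$ therefore reweights the gap's first autocorrelation toward its persistent components and raises it.
\item \emph{A backward share in the Phillips curve.} This restores an intrinsic root that rises with $\lambda$ and reaches \eqref{eq:rho} at the backward limit.
\end{itemize}
Replace your bridge with one of these. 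Then let your closing sentence carry the weight: the lag-seat form is a convention matched to the regression frequency. It supplies a sign and is never inverted for structure. The lemma's content is the comparative statics of the defined $\rho(\lambda)$, backed by composition or a backward share, not by an equality between the forward discount and the fitted root.
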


The reduction is worth one line of intuition, because it ties the dial back to the objects just defined. Substituting the static IS and the rule into the Phillips curve leaves $d_t\bigl(1+A(1-\lambda)\bigr)=\beta\E_td_{t+1}+A\xi_t+(u_t+\kappa'v_t)$: tolerance divides out of the same term that the effective coefficient multiplies, which is why the dial appears in the persistence and in the rule as one object. Setting $\lambda=1$ collapses the left-hand bracket to one and returns the full-look-through impulse $u_t+\kappa'v_t$ as the shock term---the forward term and the target residual set aside, as in the episode identity above---which is what defines the supply share in \eqref{eq:optlambda_episode}: the impulse-share denominator and the persistence map are two readings of one reduction, not two assumptions---on different scales, as recorded above.

\begin{remark}[Forward-model status of the reduced form]\label{rem:forwardstatus}
Fully forward, the gap is a jump variable: the minimum-state-variable solution of \eqref{eq:nkpc}--\eqref{eq:taylor} loads the current shocks, so its autocorrelation is the \emph{shocks'} persistence at every dial. Persistence is therefore \emph{stance-invariant} per shock, and any regime-dependence of measured persistence comes not from an intrinsic coefficient but from the loading---how strongly the gap responds to a shock of given persistence---which is the channel Normalized Uncertainty runs on, because the loading is what the distance from target scales. The comparative static of \eqref{eq:rho} survives by \emph{composition}---persistent components' loadings are the most $\lambda$-elastic---and any backward share in the curve, from indexation or learning, restores an intrinsic stable root that rises with $\lambda$ and reaches \eqref{eq:rho} at the backward limit. Writing the relation as $d_{t+1}=\rho(\lambda)d_t+\varepsilon_{t+1}$ at the regression frequency moves the same coefficient to the lag seat: a modelling convention chosen to match the frequency at which persistence is estimated, not an identity, and one worth flagging because the empirical persistence wedge of Appendix~\ref{app:tolnotcred} is read against $\rho(\lambda)$. The discipline the paper observes throughout follows: no result reads $\rho(\lambda)$ off measured persistence; the map supplies a sign, and is never inverted for structure.
\end{remark}

\paragraph{The reduction.} Impose the static-IS approximation $y_t=-\sigma\hat\imath_t+v_t$ and the look-through rule \eqref{eq:phieff}. Solving the first two for output,
\[
y_t=\frac{-\sigma\varphi_\pi(1-\lambda)d_t+\sigma\varphi_\pi\xi_t+v_t}{1+\sigma\varphi_y} ,
\]
and substituting into the forward Phillips curve gives, with $A\equiv\kappa\sigma\varphi_\pi/(1+\sigma\varphi_y)$ and $\kappa'\equiv\kappa/(1+\sigma\varphi_y)$,
\[
d_t=\frac{\beta}{1+A(1-\lambda)}\,\E_t d_{t+1}
+\frac{u_t+\kappa' v_t+A\xi_t}{1+A(1-\lambda)} ,
\]
which is the exact forward relation behind \eqref{eq:rho}, with $k_\xi=A/(1+A(1-\lambda))>0$. The comparative static the lemma trades on is
\[
\rho'(\lambda)=\frac{\beta A}{\big(1+A(1-\lambda)\big)^2}>0 :
\]
tolerance slows mean reversion. The boundary is regular, $\rho(1)=\beta$, and the divisive form keeps $0<\rho<1$ for every $\lambda\in[0,1]$ whenever $\beta<1$. The divisive form is not a functional-form choice that a subtractive alternative $\beta-A(1-\lambda)$ competes with: it is what the reduction above returns, and there is no step at which a form is selected. At the calibration used below---$\beta=0.99$ with $\rho(0)=0.80$, hence $A=0.2375$---a subtractive form would in fact also be admissible, running over $[0.7525,\,0.99]$; admissibility is simply not what selects between them, because nothing selects. (The $\rho(0)=0.80$ calibration fixes $A$ directly; the $\kappa$ it would imply at the textbook demand block is deliberately not the Phillips-slope calibration of \eqref{eq:lstar-model}---the two registers never meet, because the map is never inverted against the data.)

\paragraph{A disclosure about the approximation.} The static IS used above carries the \emph{nominal} rate. Equation~\eqref{eq:is} as stated carries the real rate, $y_t=\E_t y_{t+1}-\sigma(\hat\imath_t-\E_t d_{t+1})+v_t$, and dropping $\E_t y_{t+1}$ is what "static" ordinarily means; dropping the $\E_t d_{t+1}$ inside the real rate is a second and separate simplification. It is not innocuous. Retaining that term gives
\[
\rho(\lambda)=\frac{\beta+\kappa\sigma/(1+\sigma\varphi_y)}{1+A(1-\lambda)} ,
\]
whose numerator exceeds $\beta$, so that at full tolerance $\rho(1)=\beta+\kappa\sigma/(1+\sigma\varphi_y)$. At $\beta=0.99$, $\sigma=1$ and $\varphi_y=0.5$ that equals $1.06$ for $\kappa=0.10$ and $1.19$ for $\kappa=0.30$: the gap would be explosive at the top of the dial, and the reduced form would have no stationary reading there. The nominal-rate approximation is therefore doing real work, and we state it rather than let it pass as a notational convenience.

The pass-through of Appendix~\ref{app:realized} keeps the real-rate term---this is where the $\kappa\sigma$ in $\psi$'s numerator comes from---so the two approximations differ, deliberately and in a stated direction. Nothing in the paper's argument turns on reconciling them, because \eqref{eq:rho} is used only for the \emph{sign} of $\rho'(\lambda)$ and its slope at the mean, and is never inverted against the data for $\kappa$ or $\varphi_\pi$.\footnote{The euro-area Phillips slope is weakly identified \citep{mavroeidis2014empirical}, and the specifications behind this paper compress why: identification-robust sets are unbounded, while the external labor-share instruments that do achieve strong first stages---in their dynamic, twenty-lag form, with first-stage $F$ of $12$--$24$ against $F\approx2$ at impact---leave the slope flat: \emph{a flat curve, not a weak instrument}.} A reader who prefers the real-rate reduction throughout obtains the same sign, the same monotonicity, and a numerator larger by $\kappa\sigma/(1+\sigma\varphi_y)$.

\subsection{Tolerance is not credibility}
\label{app:tolnotcred}

The equivalence is what separates them. Under Proposition~\ref{prop:oe} a
tolerant bank cannot demonstrate---even to itself---that its accommodation
is optimal; imperfect credibility is the converse failure, a bank that knows
it is acting optimally and is not believed. Both leave inflation above
target while expectations drift, so equilibrium data do not disentangle
them; their usable trace is the subjective--objective persistence wedge.
The slope of the mean point forecast of the gap $h$ quarters ahead on the
current realized gap is $\rsub^{\,h}$, the persistence forecasters price;
the realized gap's own AR(1) coefficient is $\robj$. Forecasters price a
return to target faster than realized inflation delivers,
$\rsub=0.752<\robj=0.929$ in the euro area and $0.681<0.866$ in the United
States (one-year horizon, ECB and US Surveys of Professional Forecasters;
part of the measured wedge is information staleness rather than belief, and
a belief component survives the decomposition). That sign is what the
configuration with the drift outlasting the tolerated component predicts---a
public still pricing the announced contract while declining to price the
drift---and Assumption~\ref{ass:innov} imposes no such ordering in advance:
the wedge is read as the data \emph{selecting} that configuration, not as
identifying it. The paper therefore carries no \emph{varying} credibility wedge---no
stock that conduct builds or erodes---because the object such a stock would track, whether the tolerance was warranted, is exactly what
Proposition~\ref{prop:oe} keeps unobservable, and credibility cannot be lost
over what cannot be observed; what varies is exposure against dispersed
private thresholds.

\subsection{Proof of Proposition~\ref{prop:oe}}
\label{app:oe}

Assumptions \ref{ass:comp}--\ref{ass:innov} are in force throughout:
$\{u_t\}$, $\{\xi_t\}$ and $\{\varepsilon_t\}$ are mutually independent,
stationary and mean zero with finite variances, and \emph{no distributional
form is imposed}---the one step where a distributional structure would add
anything is flagged where it occurs. $u$ has variance $\sigma_u^2$ and
autocorrelation $r_k$, with the AR(1) benchmark $r_k=\rho_u^{|k|}$
(Assumption~\ref{ass:comp}); $\xi$ is the AR(1) of
Assumption~\ref{ass:innov}; $\varepsilon$ is white with variance
$\sigma_\varepsilon^2$; and $d_t=\xi_t+m_t+\varepsilon_t$ with
$m_t=\lstar u_t$, equation \eqref{eq:obs}---the strict-implementation
closure of Appendix~\ref{app:optlambda}, under which the proposition is
stated. The persistent variance is positive, $\sigma_\mu^2>0$, and in the
benchmark $\rho_u>0$. Write
$\gamma_k\equiv\Cov(d_t,d_{t-k})$ and $\sigma_\xi^2\equiv\Var(\xi_t)$.

\paragraph{Step 1: the scaling ridge (part (i)).}
Standardize $u_t=\sigma_u\tilde u_t$ with $\Var(\tilde u_t)=1$ and
$\Cov(\tilde u_t,\tilde u_{t-k})=r_k$. Then $m_t=\lstar\sigma_u\,\tilde u_t$
depends on the parameters through the scalar $\lstar\sigma_u$ alone, so for any
$a>0$ the map $(\lstar,\sigma_u)\mapsto(a\lstar,\sigma_u/a)$ fixes
$\lstar\sigma_u$ and leaves the path $\{d_t\}$ unchanged \emph{realization by
realization}---identical law, identical value of every estimator; no
distributional assumption is used. In moments: by independence and
$m_t=\lstar u_t$,
\[
\Cov(m_t,m_{t-k})=(\lstar)^2\,\Cov(u_t,u_{t-k})=(\lstar\sigma_u)^2\,r_k ,
\]
so that
\[
\gamma_0=\sigma_\xi^2+(\lstar\sigma_u)^2+\sigma_\varepsilon^2,
\qquad
\gamma_k=\sigma_\xi^2\rho_\xi^{\,k}+(\lstar\sigma_u)^2r_k\ \ (k\ge1).
\]
Every $\gamma_k$ depends on $(\lstar,\sigma_u)$ only through the product
$\lstar\sigma_u$: the likelihood surface contains a
one-dimensional ridge along which the tolerated share and the supply scale
trade off one for one, and no functional of the data separates them---the
inflation-side analogue of the variance-scaling invariance of
\citet{sargent2006shocks}. \hfill$\square$

\paragraph{Step 2: observational equivalence (part (ii)).}
Consider the two candidate laws of Definition~\ref{def:worlds}, each with the
white error attached. Under $\mathbb P_T$, $d_t=\lstar u_t+\varepsilon_t$, so
$\gamma^T_0=(\lstar\sigma_u)^2+\sigma_\varepsilon^2$ and
$\gamma^T_k=(\lstar\sigma_u)^2r_k$ for $k\ge1$. Under $\mathbb P_D$,
$d_t$ is a persistent drift plus the white error, the drift an AR(1) of
persistence $\rho_D$ and variance $\sigma_D^2$, so
$\gamma^D_0=\sigma_D^2+\sigma_{\varepsilon'}^2$ and
$\gamma^D_k=\sigma_D^2\rho_D^{\,k}$. Neither reading's decay parameter is
known to the public: the composition $(s_t,D_t)$ is unobserved
(Assumption~\ref{ass:comp}), so $r_k$---in the benchmark, $\rho_u$---is not
observable; the drift reading's $(\rho_D,\sigma_D)$ is unrestricted a priori;
and $\sigma_u$ is free along the ridge of Step 1.

\emph{Benchmark: exact equivalence.} With $r_k=\rho_u^{|k|}$, matching
$\gamma^D_k=\gamma^T_k$ at $k=1,2$ forces
$\rho_D=\gamma^T_2/\gamma^T_1=\rho_u$ and
$\sigma_D^2=(\gamma^T_1)^2/\gamma^T_2=(\lstar\sigma_u)^2$---the ratios
divide by $\gamma^T_1=(\lstar\sigma_u)^2\rho_u$, nonzero by $\sigma_\mu^2>0$
and $\rho_u>0$---the unique
solution in the admissible region $\rho_D\in(0,1)$, $\sigma_D>0$; at these
values $\gamma^D_k=\gamma^T_k$ at \emph{every} lag $k\ge1$, and
$\sigma_{\varepsilon'}=\sigma_\varepsilon$ matches $k=0$. Two mean-zero
stationary processes with a common autocovariance function share a spectral
density and are identical inputs to any procedure that reads the data through
its second moments---whatever the innovation law. Full-law equality,
$\mathbb P_D=\mathbb P_T$ as distributions of the entire history, holds in
either of two cases, neither an added assumption of the model: (a) under
Gaussian innovations, a stationary Gaussian law being fixed by its first two
moments; or (b) distribution-free, when the drift innovation $\eta_t$ and the
scaled supply innovation $\lstar w_t$ (writing $u_t=\rho_u u_{t-1}+w_t$) share
a distribution---for then the persistent component obeys one AR(1) recursion with
identically distributed innovations under both readings---persistence $\rho_u$,
innovation $\lstar w_t$ under the tolerance reading and $\eta_t$ under the drift
reading---the same process under two names, $m_t=\rho_u m_{t-1}+\lstar w_t$.
Along the segment of attributions of Definition~\ref{def:worlds} the
persistent component is the AR(1) with root $\rho_u$ and innovation
$\lstar w_t+\eta_t$: its second-order structure is the same at every split,
and its full law is the same at every split under (a), the Gaussian family
being closed under independent summation. Under a non-Gaussian common law (b)
is an endpoint result: an interior split carries the higher moments of a
convolution---two symmetric $\pm1$ innovations with a common root give the
same pure laws, while their equal split has innovation $(X+Y)/\sqrt2$, of
variance one and fourth moment two against one---so that example lies
outside the Gaussian assumption under which the full law is claimed along the
segment.

\emph{General $u$: equivalence to second order.} For a general stationary
$u$, fit the drift reading to the first three autocovariances:
$\rho_D=\gamma^T_2/\gamma^T_1$, $\sigma_D^2=(\gamma^T_1)^2/\gamma^T_2$,
$\sigma_{\varepsilon'}^2=\gamma^T_0-\sigma_D^2$. The fit is admissible
whenever $0<r_2<r_1$ and
$r_1^2/r_2\le 1+\sigma_\varepsilon^2/(\lstar\sigma_u)^2$---a weak requirement
the benchmark meets with slack---and then reproduces
$(\gamma_0,\gamma_1,\gamma_2)$ exactly; the first possible discrepancy is at
lag three, and it vanishes at every lag precisely when $r_k$ is geometric.
(An AR(2) composition process supplies a numerical witness: the fitted drift
matches $\gamma_0$--$\gamma_2$ to machine precision and misses $\gamma_3$.)
This is the exact content of ``to second order in general'': agreement in the
variance and the first two autocovariances always, at every lag in the
benchmark.

\emph{Consequence.} Fix any admissible reduced form. In the benchmark, both
readings---and every attribution between them---generate the same
second-order structure under any innovation law, and the same law of
$\{d_t\}$ under (a) along the whole segment and under (b) at its endpoints. Two
structures that differ only in whether the persistent component was
\emph{chosen} ($m_t$) or \emph{suffered} ($\xi_t$) imply identical
distributions of every observable history: they are observationally
equivalent in the sense of \citet{sargent1976observational}, and the
attribution, with it the bank's intention, is not identified from the law of
$\{d_t\}$---no event defined by the data has a probability that depends on it. \hfill$\square$

\paragraph{Step 3: the identified set for the attribution.}
Let a structure attribute the persistent conditional mean behind the observed
history, $\mu_t$ with stationary variance $\sigma_\mu^2$---the second-order
properties of $\mu$ are identified from $\{\gamma_k\}$ in the usual way---as
$\sigma_\mu^2=\sigma_\xi^2+\sigma_m^2$ between drift and tolerated component, the two attributions
being observationally interchangeable by Step 2. Define the structure's drift
share $q\equiv \sigma_\xi^2/\sigma_\mu^2\in[0,1]$: the corner $q=1$ is $\mathbb P_D$, the
corner $q=0$ is $\mathbb P_T$, matching Definition~\ref{def:worlds}'s endpoints. Without further restrictions every $q\in[0,1]$ is consistent with
the reduced form: $\bar q=1$, the pure-equivalence benchmark, which is the
$\underline{\sigma}_u=0$ case of the proposition's bound. Under
Assumption~\ref{ass:innov}'s floor with $\underline{\sigma}_u>0$, every
admissible structure satisfies
$\sigma_m^2=(\lstar\sigma_u)^2\ge(\lstar\underline{\sigma}_u)^2$, so
\[
q\;=\;\frac{\sigma_\mu^2-\sigma_m^2}{\sigma_\mu^2}\;\le\;1-\frac{(\lstar\underline{\sigma}_u)^2}{\sigma_\mu^2}
\;\equiv\;\bar q\;<\;1,
\]
and every value in $[0,\bar q]$ is attained by an admissible split, with
$\lstar$ held at its maintained value (a lower bound on $\lstar$ would take its
place); the set is nonempty when $(\lstar\underline{\sigma}_u)^2\le\sigma_\mu^2$,
and its pure-drift endpoint is excluded whenever the floor is positive. The set
is sharp: by Step 2 no observable event has different probability under two
admissible attributions, so no realization of the data excludes any point of
$[0,\bar q]$. Every object in this step is a second moment; nothing invokes a
distributional form. \hfill$\square$

\paragraph{Step 4: flat likelihood and the unrevised prior (part (iii)).}
Let $\varphi$ collect the identified reduced-form parameters---the
autocovariance function of $\{d_t\}$, equivalently the second-order
properties of $\mu_t$ and the white variance---and let
$x_T=\{d_t\}_{t\le T}$. The Gaussian likelihood---the criterion of the
state-space (Kalman) tradition through which such models are taken to
data---depends on $x_T$ and the structure only through the model-implied mean
and $\varphi$, which Steps 2--3 make common to every admissible structure: it
is constant in $q$ over $[0,\bar q]$ at every sample size and under every
innovation law. Under (a) of Step 2 it is the \emph{correctly
specified} likelihood over the whole set, $p(x_T\mid\varphi,q)=p(x_T\mid\varphi)$
for every $T$ and every $q\in[0,\bar q]$; under (b) the same holds at the two
endpoints, whose densities coincide, $f_D=f_T$. For a
Bayesian with prior $\pi_0(\varphi,q)=\pi_0(q\mid\varphi)\,\pi_0(\varphi)$,
\[
\pi_T(q\mid\varphi,x_T)\;\propto\;p(x_T\mid\varphi,q)\,\pi_0(q\mid\varphi)
\;=\;p(x_T\mid\varphi)\,\pi_0(q\mid\varphi),
\]
so that
\[
\pi_T(q\mid\varphi,x_T)\;=\;\pi_0(q\mid\varphi)\qquad\text{for all }T:
\]
conditional on the identified reduced form, the posterior over the attribution equals
the prior at every sample size---the partial-identification non-updating of
\citet{poirier1998revising} and \citet{moon2012bayesian}. What data do update
is $\varphi$ itself: the marginal posterior of $q$ is
$\int\pi_0(q\mid\varphi)\,\pi_T(\varphi\mid x_T)\,d\varphi$, which converges,
as $T\to\infty$, to the conditional \emph{prior} evaluated at the true
reduced form. Absent (a) and (b), the Gaussian criterion is still flat, so no
estimator in the state-space tradition updates the attribution; recovering $q$
would require a correctly specified non-Gaussian likelihood \emph{and}
tolerance and drift innovations with known, distinct higher-moment
signatures---a channel the model does not assume and that near-Gaussian
inflation innovations leave weak. The set $[0,\bar q]$ does not shrink:
learning sharpens where the persistent component sits and how fast it decays,
never whose it is. \hfill$\square$

\paragraph{Scope, stated as bluntly as the result.}
Three boundaries are worth the same care as the proof. \emph{First}, the
proposition concerns the law of the inflation-gap history $\{d_t\}$ under the
strict-implementation closure, and it is a theorem about that history alone.
It does not by itself extend to a larger observable vector: under the
finite-coefficient closure the policy rate loads on the sum $m_t+\xi_t$ and on
the output gap, and the output gap differs across the two readings---a
positive cost-push lowers it under tolerance, a moved target raises it under
drift---so a joint statement about inflation, rates and output would need the
composition $(s_t,D_t)$ and the Phillips slope to remain latent, which is what
Walls~III and~I of Appendix~\ref{app:walls} record as the state of the
evidence, not what this proof establishes. \emph{Second},
the two-component interior adds no escape, for three reasons. Within the
proposition, Definition~\ref{def:worlds} compares the two pure readings of
one persistent component, and Step 2 shows that any measured decay is
reproduced by either reading. In the interior, even where the autocovariances
generically recover two distinct roots, they recover an \emph{unordered}
pair: the observable law is invariant to swapping the roots with their
variances, so no history can say which root is the drift---Assumption~\ref{ass:innov}
imposes no ordering, and the subjective--objective persistence wedge of
Appendix~\ref{app:tolnotcred} is read as the data selecting the ordered
configuration, never as its identification. And separating the components
from $\{\gamma_k\}$ at all would require the tolerated component's
autocorrelation to be \emph{known} to be geometric---Assumption~\ref{ass:comp}
states the AR(1) as a benchmark, not as a restriction the public can verify.
Against the admissible class of stationary composition processes the split
remains set-identified: the residual spectral density
$f_\mu-\sigma_{\xi'}^2 f_{\rho_\xi'}$ stays nonnegative---hence is a valid
tolerated-component spectrum---for an open set of candidate attributions
$(\sigma_{\xi'}^2,\rho_\xi')$ around any interior truth, because the AR components
keep $f_\mu$ bounded away from zero. \emph{Third}, Gaussianity enters at one point. Step 1 is
algebraic, and the second-order equivalence of Step 2, the set of Step 3 and
the flatness of the Gaussian criterion in Step 4---the structure the
state-space tradition already reads---hold under any innovation law with
finite variance; the equality of the full laws along the segment of
attributions, and with it the flatness of the true likelihood over the whole
set, is the Gaussian case, the endpoints needing only a common innovation law.
The paper's second-moment evidence sees exactly the structure the
distribution-free part runs on.

\paragraph{Identification in principle, and in histories.}
The equivalence in Proposition~\ref{prop:oe}(ii) is exact where the rival
drift matches the tolerated component's persistence. Off that point the
concession is real but narrower than it sounds. Under a correctly specified,
time-invariant two-component AR(1) state space---distinct fixed roots,
positive component variances, a known observation equation, and an
arbitrarily long history from one unchanged regime---the autocovariances
generically recover the \emph{unordered pair} of roots. They never recover
the labels: the likelihood is exactly invariant to swapping the two roots
with their variances, so which root is the drift is a naming the data cannot
supply, and an imposed ordering would label the roots, not validate the
economics. What the concession is worth in finite histories is a computable
question, and nothing below turns on the answer: the invariance just stated
holds at every sample size, and it is the labels, not the precision, that the
data fail to supply.

\subsection{Tolerance thresholds and the flow law}
\label{app:clocks}

\paragraph{The threshold.} At a common root the drift's expected part of the
expected overshoot is its share: $\Cov(\xi_t,d_{t-k})=q\,\Cov(\mu_t,d_{t-k})$
for every $k\ge0$, because $\varepsilon$ is white and $m$ and $\xi$ are
independent with one root, so the linear projection of $\xi_{t+1}$ on the
history is $q$ times that of $\mu_{t+1}$, $\E_t[\xi_{t+1}]=q\,\de$---the
conditional expectation under Gaussian innovations, the linear projection
otherwise. Staying through cumulated exposure $X_t=\int(\de)_+ds$ therefore
costs $q\,c\,X_t$ in expectation, at worst $\bar q\,c\,X_t$ over the
identified set $[0,\bar q]$ of Proposition~\ref{prop:oe}; abandoning costs
the re-planning cost $k$. The rule that minimizes the worst-case expected
loss abandons at $X_t\ge\tau=k/(\bar qc)=B/c$, with $B\equiv k/\bar q$ the
budget: a private limit in point-years, finite for finite $k$. Dispersed
$(k_i,c_i)$ give dispersed limits $\tau_i$, the distribution $F$ of
Lemma~\ref{lem:hazard}.

\paragraph{Proof of Lemma~\ref{lem:hazard}.}
Let the population carry thresholds $\tau_i$ drawn i.i.d.\ from $F$,
absolutely continuous with density $f$, independent of the aggregate path.
Exposure $X_t=\int_0^{t}(d^{e}_{s})_{+}\,ds$ is common to all agents,
nondecreasing, and differentiable wherever $\de$ is continuous, with
$\dot X_t=\dpl$. Agent $i$ is still anchored at $t$ if and only if
$\tau_i>X_t$, an event of probability $1-F(X_t)$; over a continuum of agents
the still-anchored share is $1-F(X_t)$ exactly. The flow of new abandonments
per unit time is
$\tfrac{d}{dt}F(X_t)=f(X_t)\,\dot X_t=f(X_t)\,\dpl$ by the chain rule, and
the hazard among the still anchored is the ratio,
$h_t=[f(X_t)/(1-F(X_t))]\,\dpl$. Because $X$ is nondecreasing, each agent
crosses at most once per episode; adding the baseline churn
$\nu_0$---revisions unrelated to the anchor, assumed independent of the
threshold mechanism---gives the flip intensity among the still anchored,
$\nu(\de)=\nu_0+h(X_t)\dpl$ with $h=f/(1-F)$; the renewal system below
carries it to the population. Any threshold distribution with a density
positive at the origin delivers the same form at leading order while $F(X_t)$
remains small, $h(X_t)\approx f(0)$; Lemma~\ref{lem:memoryless} states when
it is exact at every exposure. Below the target $\dpl=0$: no
patience is consumed and the intensity is $\nu_0$ alone, whatever the
threshold distribution. \hfill$\square$

\paragraph{Proof of Lemma~\ref{lem:memoryless}.}
Write $S(x)=\Pr(\tau_i>x)$, nonincreasing with $S(0)=1$.
Assumption~\ref{ass:memoryless} is $S(x+y)=S(x)S(y)$ for all $x,y\ge0$. If
$S(\varepsilon)=0$ for some $\varepsilon>0$ then $S(x)=0$ for every $x>0$ and
the threshold is void, so $S>0$ everywhere and $g\equiv-\log S$ is finite,
nondecreasing and additive, $g(x+y)=g(x)+g(y)$. An additive function that is
monotone on the half-line is linear, $g(x)=\eta x$ with $\eta=g(1)\ge0$
(Cauchy's functional equation; monotonicity rules out the non-measurable
solutions), and $\eta>0$ because a threshold is finite: $S(x)=e^{-\eta x}$,
and the exponential is the only memoryless law. Its hazard is
$f/(1-F)=\eta e^{-\eta\tau}/e^{-\eta\tau}=\eta$ at every exposure, so the
identity of Lemma~\ref{lem:hazard} gives $h_t=\eta\dpl$ and
\begin{equation}\label{eq:intensity}
\nu(\de)=\nu_0+\nu_1\dpl,\qquad \nu_1=\eta ,
\end{equation}
linear in the overshoot and continuous at the target: the linearity is what
non-learning about one's own patience implies. Conversely, a threshold
distribution with declining hazard makes $h(X_t)$ fall as exposure
accumulates---the anchored pool sorts toward its patient tail---so the
intensity per unit of overshoot falls with exposure and the upper arm bends
below the linear form. \hfill$\square$

\paragraph{Renewal: the population flow.}
Departure is not absorbing. Let $A_t$ be the anchored share, departures
$A\,\eta\dpl$ per unit time (the constant hazard of
Lemma~\ref{lem:memoryless}) and re-entries $r(1-A)$:
\[
\dot A \;=\; r(1-A)-A\,\eta\dpl .
\]
At a held overshoot the share settles at $A^{*}(\de)=r/(r+\eta\dpl)$, and
the stationary population flow of re-basings is
$A^{*}\eta\dpl=r\eta\dpl/(r+\eta\dpl)$, with re-entries equal to
departures---balanced two-way traffic, both directions injecting revision
jumps. The flow is concave with slope $\eta$ at the kink, and its relative
shortfall from the linear arm is $\eta\dpl/(r+\eta\dpl)$. The rates are measured on the
professionals' own anchors. Figure~\ref{fig:benefit} reads the de-anchored
share $1-A_t$ off the survey---the share of respondents whose longer-term point
is at or above $2.2$ per cent, $34$ to $55$ per round---and fits the law of
motion above, run quarterly on the observed overshoot from the 2021Q3 share,
by least squares over 2021Q3--2026Q3. On the realized overshoot (headline
inflation minus two) the departure rate is $\eta=0.06$ per point-year and the
re-anchoring rate $r=0.66$ per year, a half-life of one year, with a
root-mean-square error of $0.062$; the fitted path peaks at $0.30$ in 2023Q3
against the observed $0.35$ in 2023Q4, and ends at $0.06$ against $0.08$. On
the expected gap---the clock of Section~\ref{sec:model-law}---the same fit
gives $\eta=0.18$ per point-year and $r=0.58$, so the renewal scale is
$r/\eta=3.2$ points of expected overshoot. Two checks. On the twenty-six
forecasters present throughout, the share peaks at $0.50$ and the rates are
$0.10$ and $0.79$. And the two earlier overshoot episodes were not used in the fit:
run from 2005 with the 2021--26 rates and the 2016--21 mean share as resting
level, the law predicts peaks of $0.10$ and $0.08$ for 2007--09 and 2011--13
against $0.18$ and $0.18$ observed, with a
sampling error of about $\pm0.07$ on shares of that size. At the measured
scale the exact law's shortfall from the linear arm,
$\eta\dpl/(r+\eta\dpl)$, is $13\%$ at the survey's mean positive expected
overshoot ($0.46$ points), $24\%$ at one point, $36\%$ at the largest the
survey sample contains ($1.76$) and $42\%$ at the largest the swap windows
contain ($2.3$); the stationary anchored share $A^{*}$ is $0.86$ at half a
point, $0.76$ at one and $0.61$ at two. The linear arm is therefore the
first-order form at the kink and within the range where most of the sample
sits, not over the whole domain; the survey cannot resolve the difference (its
open top bin decides the arm's curvature beyond about a point), and the daily
two-year swap arm, fitted as
$b\,\dpl/(1+\dpl/s)$, has its best-fitting bend at $s=3.4$ points ($R^{2}$
$0.520$ against $0.508$ for the linear hinge on $5{,}282$ days; on $84$
non-overlapping windows $0.462$ against $0.474$, indistinguishable)---the
scale the professionals' anchors give. Script for the survey series, the fit and the
figure: \texttt{fig03\_benefit\_of\_doubt.py}; the swap-window fits rest on licensed
daily data and are reported as computed for the paper.

\paragraph{Proof of Proposition~\ref{prop:law}.}
Fix the period at unit length and hold $\de$ over it---the law's
convention; with the gap free within the period the integrated intensity is
smaller than the spot intensity by an average exposure factor, since the gap
mean-reverts within the year---so the arrival count
$N$ of revision events is Poisson with mean
$\Lambda=\int_0^{1}\nu(\de)\,ds=\nu(\de)$, and the period's cumulated
revision is $S=\sum_{i=1}^{N}J_i$ with the $J_i$ i.i.d., independent of $N$,
and $\E[J^2]=q_J<\infty$. By the law of total variance,
\[
\Var(S)=\E[N]\,\Var(J)+\Var(N)\,\E[J]^2
=\Lambda\bigl(\Var(J)+\E[J]^2\bigr)=\Lambda\,q_J ,
\]
using $\E[N]=\Var(N)=\Lambda$; the identity is exact and does not require
$\E[J]=0$. Substituting the stationary population intensity of the renewal system
above, $\nu_0+A^{*}(\de)\,\eta\dpl$ with
$A^{*}(\de)=r/(r+\eta\dpl)$,
\begin{equation}\label{eq:lawexact}
V(\de)=\Var(S)=q_J\Big[\nu_0+\frac{r\eta\dpl}{r+\eta\dpl}\Big],
\end{equation}
the exact law of the renewal system, concave in the overshoot; its
first-order form at the kink---$A^{*}(0)=1$, slope $\eta$---is the working
law \eqref{eq:law} of Proposition~\ref{prop:law}.
Below the target $\dpl=0$ and $V=a$: the lower arm is flat, $b_-=0$. Above,
$V$ rises with kink slope $b_+=\eta q_J>0$, concavely, with the shortfall
$\eta\dpl/(r+\eta\dpl)$ from the affine form measured above. At the target the two arms meet at
$a$---the intensity is continuous at $\de=0$, so the law has no jump
there---while the one-sided derivatives are $0$ and $b_+$: continuous, not
differentiable, kinked at the announced number. The Poisson count is the superposition of many independent, individually
rare crossing processes and the baseline churn (Lemma~\ref{lem:hazard}), the
ground of the proposition's arrival assumption; and the object priced is the
individual revision variance of Section~\ref{sec:model-law}: for an agent with
intensity $\nu_i$ the same computation gives $\Var(S_i)=\nu_i(\de)\,q_J$, and
the population's stationary mean intensity is $\nu_0+A^{*}(\de)\,\eta\dpl$,
$\nu(\de)$ to first order, so $V(\de)=q_J\,\nu(\de)$ is the average
individual revision variance that survey uncertainty measures. \hfill$\square$

\paragraph{What the slope is made of.} The composite $b_+=\nu_1 q_J$ separates a \emph{rate} from a \emph{size}. This matters for the asymmetry: it is not necessary that $\nu_1$ be one-sided for the product to be, since either factor can carry it. But relaxing one-sidedness costs the flat arm derived above. With a two-sided intensity the law becomes the two-arm form $V(\de)=a+b_-(-\de)_++b_+(\de)_+$ with $b_-=\nu_1^{\downarrow}q_J^{\downarrow}$, and \eqref{eq:law} is its $b_-\!\approx\!0$ limit rather than the general case---which is also why \eqref{eq:var_decomp} carries both arms. That limit is what the data deliver: the fitted below-target arm is indistinguishable from zero (Section~\ref{sec:fitlaw}). (The step $\Var=\nu q_J$ also requires the jump size to be independent of $\de$, which is what Appendix~\ref{app:endodelta} relaxes.)

\paragraph{Erosion, not a run.}
The law prices the variance the flow of re-basings injects. The flow has a
second face, the anchored share's own path, and the path closes on itself. A
re-based agent sets her expectations on the effective level rather than the
announced one, and expectations pass into realized inflation with the
coefficient $\psi$ of \eqref{eq:passthrough}; with a share $1-A_t$ re-based,
each by the average re-basing $\bar J$, the overshoot the anchored agents
expect---the one their exposure clocks run on---is
\begin{equation}\label{eq:feedback}
\de \;=\; d^{f}_t+\zeta\,(1-A_t),\qquad \zeta\equiv\psi\bar J ,
\end{equation}
where $d^{f}_t$ is the fundamental overshoot, the one the announced anchor
alone would carry. Every anchored agent's threshold is therefore reached
faster the more agents have left: extending the benefit of the doubt costs
more as others withdraw it---an endogenous and, through expectations,
self-fulfilling cost. The renewal system above becomes
\[
\dot A_t \;=\; r(1-A_t)-A_t\,\eta\,\big(d^{f}_t+\zeta(1-A_t)\big)_+ .
\]

\begin{proposition}[Erosion, not a run]\label{prop:erosion}
Let thresholds be i.i.d.\ exponential (Lemma~\ref{lem:memoryless}),
independent of the aggregate path, over a continuum of agents, and let the
fundamental overshoot $d^{f}$ be any bounded measurable path.
\begin{enumerate}[label=\emph{(\roman*)},leftmargin=2.2em,itemsep=2pt]
\item \emph{Erosion.} The anchored share is the unique solution of the law of
motion above and is Lipschitz in time,
$|\dot A_t|\le r+\eta\big(\sup_s(d^{f}_s)_++\zeta\big)$: no path of the
fundamental---no shock of any size---moves the share discontinuously.
\item \emph{Stability.} With the fundamental at the target the announced
anchor $A=1$ is a steady state, locally stable if and only if $r>\eta\zeta$.
When $\eta\zeta>r$ a second, de-anchored steady state $A=r/(\eta\zeta)$ exists
and attracts: the re-based agents' own expectations keep the overshoot open
and the thresholds draining, a self-sustained overshoot with the fundamental
at the target.
\item \emph{Amplification.} In the stable regime a small held fundamental
overshoot is expected magnified, $\de/d^{f}_t=r/(r-\eta\zeta)$, and once the
fundamental closes the residual overshoot $\zeta(1-A_t)$ decays at the rate
$r-\eta\zeta$, slower than re-anchoring alone.
\end{enumerate}
\end{proposition}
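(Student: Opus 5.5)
The plan is to treat the law of motion $\dot A_t=r(1-A_t)-A_t\eta\big(d^{f}_t+\zeta(1-A_t)\big)_+$ as a scalar nonautonomous ODE on $[0,1]$ and read each part off its vector field $G(t,A)\equiv r(1-A)-A\eta\big(d^{f}_t+\zeta(1-A)\big)_+$. For (i), I will first note that $G$ is measurable in $t$ (because $d^{f}$ is bounded measurable) and globally Lipschitz in $A$ on $[0,1]$, uniformly in $t$. The positive part is 1-Lipschitz, and products of bounded Lipschitz functions are Lipschitz, so a constant $L=r+\eta(\sup(d^{f})_++2\zeta)$ works. Carathéodory's existence-uniqueness theorem then gives a unique absolutely continuous solution. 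Next I will show $[0,1]$ is forward invariant: $G(t,0)=r>0$ and $G(t,1)=-\eta(d^{f}_t)_+\le0$, so the solution cannot leave the interval. On $[0,1]$ the bound $|\dot A|\le r(1-A)+A\eta\big((d^{f})_++\zeta\big)\le r+\eta(\sup(d^{f})_++\zeta)$ is immediate. That yields Lipschitz continuity in time, hence no jump for any path of the fundamental. The continuum-of-agents and exponential-threshold hypotheses enter only to justify the aggregate ODE, via Lemma~\ref{lem:hazard}, Lemma~\ref{lem:memoryless} and the renewal system.

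For (ii), I set $d^{f}\equiv0$, so the field becomes autonomous: $G(A)=r(1-A)-\eta\zeta A(1-A)=(1-A)(r-\eta\zeta A)$, since $\zeta(1-A)\ge0$ on $[0,1]$. The steady states are $A=1$ and $A=r/(\eta\zeta)$, and the latter lies in $(0,1)$ exactly when $\eta\zeta>r$. I will use $G'(1)=-(r-\eta\zeta)$ to get local stability of $A=1$ iff $r>\eta\zeta$; the boundary case $r=\eta\zeta$ must be handled by the sign of $G$ and is excluded by the strict inequality. When $\eta\zeta>r$, the sign analysis shows $G>0$ on $(0,r/(\eta\zeta))$ and $G<0$ on $(r/(\eta\zeta),1)$. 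Every interior initial condition therefore converges monotonically to $r/(\eta\zeta)$, which is the attraction claim, with $A=1$ repelling from below.

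For (iii), I hold $d^{f}$ fixed and small. The paper's stated ratio $r/(r-\eta\zeta)$ is a first-order statement, so I will linearize around $A=1$ with $x\equiv1-A$ small. The linearized dynamics are $\dot x\approx -rx+\eta(d^{f}+\zeta x)$, whose steady state is $x^{*}=\eta d^{f}/(r-\eta\zeta)$. Substituting into \eqref{eq:feedback} gives $\de=d^{f}\big(1+\eta\zeta/(r-\eta\zeta)\big)=d^{f}\,r/(r-\eta\zeta)$. When $d^{f}$ closes, the same linearization gives $\dot x=-(r-\eta\zeta)x$, so the residual $\zeta x_t$ decays at rate $r-\eta\zeta<r$.

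The main obstacle is the exact-versus-first-order status of (iii). The exact held-overshoot steady state solves a quadratic, and its ratio exceeds $r/(r-\eta\zeta)$ at second order in $d^{f}$. I will state (iii) explicitly as the leading-order expansion in $d^{f}$, justified by the stability from (ii) and the implicit function theorem at $(A,d^{f})=(1,0)$, which applies because $G'(1)\neq0$ when $r>\eta\zeta$. A secondary point is verifying that the positive part never binds once $d^{f}\ge0$, which holds because $\zeta(1-A)\ge0$.
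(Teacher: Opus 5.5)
Your proposal is correct and follows the paper's own proof essentially step for step. For (i) both use Carath\'eodory with a uniform Lipschitz constant, invariance of $[0,1]$ from the signs of the field at $0$ and $1$, and the pointwise bound on $|\dot A_t|$; for (ii) the factorization $(1-A)(r-\eta\zeta A)$ at $d^{f}=0$ and the sign of the derivative at $A=1$; for (iii) the joint first-order expansion in $(1-A,d^{f})$ on the side where the overshoot is open.

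Two asides need correcting, though neither affects the argument. First, the exact held-overshoot steady state lies below the linearized $x=\eta d^{f}/(r-\eta\zeta)$, not above it. It is the positive root of $\eta\zeta x^{2}+(r-\eta\zeta+\eta d^{f})x-\eta d^{f}=0$ with $x=1-A$, and at the linearized point the quadratic equals $\eta x(\zeta x+d^{f})>0$, so the exact ratio $\de/d^{f}_t$ falls short of $r/(r-\eta\zeta)$ rather than exceeding it. Second, at $r=\eta\zeta$ the field is $r(1-A)^{2}\ge0$, so $A=1$ still attracts from below; the ``only if'' therefore holds only in the sense of linear (exponential) stability, which is how the paper's proof reads it too.
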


\begin{remark}[Why there is no run]\label{rem:my}
In \citet{morris2019crises} a large common shock to the fundamental on which
agents rank themselves shifts the risk-dominant action and the population
moves at once. Here the fundamental of the coordination problem is the
attribution, and Proposition~\ref{prop:oe} says the inflation history carries
no signal about it: a large inflation shock moves the identified level $\mu$,
not the split. It raises the overshoot, which drains every threshold
faster---the share erodes faster, continuously, by
Proposition~\ref{prop:erosion}(i)---but it cannot coordinate a jump. What the
announcement buys is the difference between the two dynamics: erosion, which
re-entry reverses at the rate $r$, rather than a crisis.
\end{remark}

The mechanical rule---re-base when the threshold is reached---is the
benchmark. The optimizing version lets the exhausted agent choose, comparing
the worst-case flow loss of the anchored plan, $\bar q\,c\,\dpl$, which rises
with the others' departures through \eqref{eq:feedback}, with the flow cost
$c_{R}$ that the closing of a tolerated overshoot inflicts on a re-based
plan. The comparison is a coordination game with strategic complementarity.

\begin{proposition}[The optimizing choice]\label{prop:choice}
Let an exhausted agent re-base if and only if $\bar q\,c\,\dpl>c_{R}$,
with $\de$ given by \eqref{eq:feedback}, and let re-based agents return at
the rate $r$.
\begin{enumerate}[label=\emph{(\roman*)},leftmargin=2.2em,itemsep=2pt]
\item \emph{Cutoff.} The choice is a cutoff in the expected overshoot,
$\bar d=c_{R}/(\bar q c)$, hence in the fundamental a cutoff
$d^{\dagger}(A)=\bar d-\zeta(1-A)$ decreasing in the de-anchored share: the
more have withdrawn the benefit of the doubt, the smaller the fundamental
overshoot at which the next agent withdraws it.
\item \emph{Path.} The aggregate path is Lipschitz in time from every initial
state and unique from almost every one, and the rule of
Lemma~\ref{lem:memoryless} is the case $\bar d=0$---the one the data select,
the law being continuous at the announced number with its kink there, where a
positive cutoff would put a jump at $2+\bar d$ and nothing below it
(Section~\ref{sec:law}).
\item \emph{Complementarity.} The flow payoff difference of the re-based over
the anchored plan, $\bar q\,c\,\dpl-c_{R}$, is nondecreasing in the
de-anchored share, increasing in the fundamental wherever the overshoot is
open, and has both dominance regions: re-basing dominates for
$d^{f}_t>\bar d$, staying for $d^{f}_t<\bar d-\zeta$.
\end{enumerate}
\end{proposition}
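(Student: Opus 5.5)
The plan is to reduce everything to the feedback relation \eqref{eq:feedback} and a scalar comparison, then reuse the flow machinery of Proposition~\ref{prop:erosion}. For (i), I would substitute $\de=d^{f}_t+\zeta(1-A_t)$ into the re-basing condition $\bar q\,c\,\dpl>c_{R}$. Because $c_{R}>0$, the condition can only hold on the open overshoot, where $\dpl=\de$. It then reads $\de>\bar d\equiv c_{R}/(\bar q c)$, which is a cutoff in the expected overshoot. Solving for the fundamental gives $d^{f}_t>\bar d-\zeta(1-A_t)=d^{\dagger}(A_t)$. The derivative $\partial d^{\dagger}/\partial(1-A)=-\zeta\le0$ follows because $\zeta=\psi\bar J\ge0$ by \eqref{eq:passthrough}.

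For (iii), complementarity is immediate from the same substitution. The payoff difference $\Delta(A,d^{f})=\bar q c\,(d^{f}+\zeta(1-A))_+-c_{R}$ is a nondecreasing function composed with an affine map, and that map is increasing in $1-A$ and in $d^{f}$. It is strictly increasing in $d^{f}$ wherever the positive part is active. For the dominance regions I would evaluate at the extremes of the de-anchored share. At $A=1$ (nobody left) we have $\de=d^{f}$, so $d^{f}>\bar d$ makes re-basing optimal at every $A$ by monotonicity. At $A=0$ (everyone left) we have $\de=d^{f}+\zeta$, so $d^{f}<\bar d-\zeta$ makes staying optimal at every $A$.

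For (ii), I would write the aggregate law of motion as $\dot A=r(1-A)-A\,\eta\,\mathbf 1\{d^{f}_t+\zeta(1-A)>\bar d\}\,(\cdot)$. Here the departure flow among the exhausted is switched on by the cutoff, with the hazard $\eta$ of Lemma~\ref{lem:memoryless} multiplying the overshoot as before. Lipschitz continuity in time would follow exactly as in Proposition~\ref{prop:erosion}(i): the right-hand side is bounded by $r+\eta(\sup(d^{f})_++\zeta)$ whatever the indicator does. So any Filippov/Carath\'eodory solution is Lipschitz, and existence follows from boundedness and measurability.

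\textbf{Main obstacle.} Uniqueness is the hard part, because the vector field is discontinuous across the switching surface $A=1-(\bar d-d^{f}_t)/\zeta$. I would treat it with a one-dimensional argument. Off the surface the field is locally Lipschitz, so solutions are unique there. On the surface, a trajectory either crosses transversally, which keeps it unique, or meets the surface where the two one-sided fields point toward each other, giving a sliding or branching point. In one dimension, non-uniqueness requires the fields on the two sides to point \emph{away} from the surface, which is a repelling point. Since the surface moves with the measurable path $d^{f}$, the set of initial states whose trajectory hits such a repelling configuration exactly should be Lebesgue-null. This yields ``unique from almost every'' initial state, and it is this measure-zero claim I expect to require the most care.

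The identification $\bar d=0$ then follows from the law's form. With $c_{R}\to0$ the cutoff collapses and the flow reduces to Lemma~\ref{lem:memoryless}'s rule, giving \eqref{eq:law}, which is continuous with its kink at the target. A positive $\bar d$ would make the intensity jump at $2+\bar d$, contradicting the estimated continuity and kink location in Section~\ref{sec:law}.
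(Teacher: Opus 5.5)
Your parts (i) and (iii), the Lipschitz bound in (ii), and the argument that the data select $\bar d=0$ match the paper's proof step for step. The gap is the uniqueness claim in (ii).

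You allow the switching surface to be attracting (``a sliding or branching point''). You then assert that the initial states reaching a repelling configuration ``should be'' Lebesgue-null because the surface moves with $d^{f}$. Genericity of $d^{f}$ does not deliver this. If the surface could attract, a positive-measure set of trajectories could be funnelled into a sliding motion along it, since forward merging is permitted for Filippov solutions. They would then be carried together to an instant at which the configuration turns repelling, and all of them would branch at once, so ``almost every'' would fail.

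The missing step is the fixed sign of the discontinuity, which is what the paper's proof rests on.
\begin{itemize}
\item On the no-departure side of the cutoff, $\de\le\bar d$ (equivalently $A\ge 1-(\bar d-d^{f}_t)/\zeta$), the field is $\dot A=r(1-A)\ge0$. It raises $A$, hence lowers $\de$, and carries the state away from the cutoff.
\item On the departure side, which lies below the surface in $A$, the field is smaller by $A\eta\de$. At the surface this equals $A\eta\bar d>0$ whenever $A>0$.
\item Whatever the surface's own motion, it shifts both one-sided relative velocities by the same amount. So the lower side always moves strictly more slowly than the upper one, and the two fields can never point toward the surface at once: there is no sliding and no merging.
\end{itemize}
It follows that a trajectory meets the surface only by crossing it transversally; from the no-departure side this happens only when a rise in $d^{f}$ carries the surface onto it. Two continuations can exist only from a state lying on the surface at a repelling configuration. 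The paper identifies these with initial states on the surface itself, a null set.

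With this observation your one-dimensional Filippov framing closes. Without it, the measure-zero claim you flag as the main obstacle remains unproved.
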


\paragraph{The feedback: proof of Proposition~\ref{prop:erosion}.}
With the feedback \eqref{eq:feedback} the renewal system reads
$\dot A=g(A,t)\equiv r(1-A)-A\,\eta\,\big(d^{f}_t+\zeta(1-A)\big)_+$.
\emph{(i)} For each $t$, $g(\cdot,t)$ is Lipschitz on $[0,1]$---the positive
part of an affine function of $A$, times $A$---with a constant no larger than
$r+\eta(\sup_s|d^{f}_s|+2\zeta)$, and it is measurable in $t$ through
$d^{f}$, so Carath\'eodory's theorem gives a unique absolutely continuous
solution from every $A_0\in[0,1]$; the interval is invariant because
$g(0,t)=r>0$ and $g(1,t)=-\eta(d^{f}_t)_+\le0$. On $[0,1]$,
$|g|\le r(1-A)+A\eta\big((d^{f}_t)_++\zeta\big)\le r+\eta\big(\sup_s(d^{f}_s)_++\zeta\big)$,
whatever the path of $d^{f}$, jumps included. Over a continuum of agents with
i.i.d.\ exponential thresholds independent of the common path, the realized
share whose threshold has been reached equals its conditional expectation
given the path (the law of large numbers), so the equation is the
population's actual path, not its mean. \emph{(ii)} At $d^{f}=0$,
$g=(1-A)(r-\eta\zeta A)$: roots $A=1$ and $A=r/(\eta\zeta)$, and
$g'(1)=\eta\zeta-r$, negative if and only if $r>\eta\zeta$. When $\eta\zeta>r$
the second root lies in $(0,1)$, $g>0$ below it and $g<0$ between it and $1$,
so it attracts from both sides while $A=1$ repels. \emph{(iii)} Write
$m=1-A$ and expand jointly to first order in $(m,d^{f})$ on the side where the
overshoot is open: $\dot m=-(r-\eta\zeta)\,m+\eta\,d^{f}$. A held fundamental
overshoot gives $m=\eta d^{f}/(r-\eta\zeta)$ and hence
$\de=d^{f}+\zeta m=d^{f}\,r/(r-\eta\zeta)$; with $d^{f}=0$ the residual
$\zeta m$ decays at $r-\eta\zeta$, the eigenvalue at $A=1$. \hfill$\square$

\begin{proposition}[No signal, no run]\label{prop:norun}
Under the assumptions of Proposition~\ref{prop:erosion} the anchored share is
absolutely continuous in time for every bounded measurable path of the
fundamental, jumps included. Within the model a discontinuity of the share
has two possible sources and no other: an atom in the threshold distribution,
reached by exposure at one instant---a common threshold, which a dated
commitment would create and ``over the medium term'' does not---or public
information about the attribution that ends every agent's doubt at once,
which Proposition~\ref{prop:oe} excludes: the inflation history carries none.
\end{proposition}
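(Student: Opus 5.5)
The plan is to build the proof on Proposition~\ref{prop:erosion}(i) and then settle the two ``sources of discontinuity'' by a converse argument. I would start with absolute continuity. Under exponential thresholds the anchored share solves the Carath\'eodory equation $\dot A=g(A,t)$ with
\[
g(A,t)=r(1-A)-A\eta\big(d^{f}_t+\zeta(1-A)\big)_+ .
\]
This function is measurable in $t$ and Lipschitz in $A$ on the invariant interval $[0,1]$. The solution is therefore absolutely continuous by construction: it is the integral of a bounded measurable function, with $|g|\le r+\eta(\sup(d^f)_++\zeta)$. The bound is uniform over bounded measurable paths, so jumps in $d^f$ enter only through the integrand and never through $A$ itself. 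By the law of large numbers over the continuum, already invoked in that proof, the realized share equals this solution path by path. This gives the first sentence of the proposition.

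For the ``two sources and no other'' claim, I would argue by inspecting which hypotheses the continuity used. I would write the anchored share from first principles as a population integral over thresholds and re-entry times. The departures up to $t$ are the pushforward of the threshold law $F$ under the exposure clock. Exposure $X_t=\int_0^t(\de)_+ds$ is itself absolutely continuous, because $\de$ is bounded by $d^f$ plus $\zeta$. So the departure measure charges an instant of positive mass only if $F$ has an atom at the exposure level reached at that instant. Under a continuous $F$ this cannot happen, even without memorylessness, since $dF(X_t)$ is then continuous in $t$. Re-entry at rate $r$ is a Poisson flow and is likewise atomless. The only remaining way for the share to move at an instant is a change in the agents' decision rule itself. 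That rule depends on the history only through $X_t$ and the worst case $\bar q$ over the identified set, so it changes discretely only if public information shrinks $[0,\bar q]$ at once for everyone. I would then invoke Proposition~\ref{prop:oe}(iii): the posterior over $q$ equals the prior at every $T$, so the inflation history never revises $\bar q$, and this channel is closed. A dated commitment would be illustrated as the atom case, since it sets a common $\tau$ that all agents reach at one date.

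The main obstacle is making ``no other source'' precise rather than rhetorical, since the claim is a taxonomy \emph{within the model}. I would handle it by stating it as a lemma. Given the model's primitives — the threshold law $F$, the clock $X_t$, the re-entry rate $r$, and the information set generated by $\{d_t\}$ — the share
\[
A_t = 1-\Pr(\text{departed, not yet returned by }t)
\]
has a jump at $t$ if and only if $F$ has an atom at $X_t$ or the public filtration reveals information about $q$ at $t$. The ``if'' direction needs a simple example for each case. The ``only if'' direction is the decomposition above into a continuous clock composed with an atomless law, plus Poisson re-entry, plus a decision rule that is constant absent news about $q$.
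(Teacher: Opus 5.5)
Your proposal follows the paper's proof essentially step for step. Continuity is taken from Proposition~\ref{prop:erosion}(i). The continuity of the exposure clock makes the mass re-basing at an instant $\Pr(\tau_i=X_t)$, which is nonzero only at an atom of $F$. The one remaining channel, a change in the decision rule, is closed because Proposition~\ref{prop:oe} leaves the identified set for the attribution unrevised by the inflation history.

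One caution: the proposition asserts only necessity, so drop the literal ``if and only if'' in your lemma and keep the ``if'' side as the existence examples you already plan. The biconditional is false as written for two reasons. First, exposure is flat while inflation is below target, so $F$ can have an atom at $X_t$ over a whole interval while the jump occurs only at the first hitting time. Second, news that lowers $\bar q$ only raises the thresholds $k/(\bar q c)$, so it produces no jump in departures.
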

\noindent\emph{Proof.} Continuity is Proposition~\ref{prop:erosion}(i). In the
model an agent re-bases at $t$ for one of two reasons: her exposure reaches
her threshold, or her decision rule changes. Exposure
$X_t=\int_0^{t}(d^{e}_{s})_{+}\,ds$ is continuous in $t$ for every bounded
path of the expected gap, so the mass of agents whose threshold is reached at
the single instant $t$ is $\Pr(\tau_i=X_t)$, positive only if $F$ has an atom
at $X_t$; with $F$ atomless it is zero and departures are a flow. The
decision rule depends on the identified set for the attribution, which
Proposition~\ref{prop:oe} shows the inflation history never narrows; only
information from outside that history could move every threshold at once.
\hfill$\square$

\paragraph{The choice at exhaustion: proof of Proposition~\ref{prop:choice}.}
\emph{(i)} The rule re-bases if and only if $\bar q c\,\dpl>c_{R}$, that
is $\de>\bar d\equiv c_{R}/(\bar q c)$; with \eqref{eq:feedback} this is
$d^{f}_t>\bar d-\zeta(1-A_t)$, a cutoff in the fundamental that falls by
$\zeta$ per unit of de-anchored share. \emph{(ii)} The departure flow is
$A\eta\dpl\,\mathbf{1}\{\de>\bar d\}$, so
$|\dot A|\le r+\eta(\sup_s(d^{f}_s)_++\zeta)$ as in
Proposition~\ref{prop:erosion}(i): the path is Lipschitz from every initial
state. The right-hand side is discontinuous across the surface $\de=\bar d$,
so solutions are taken in Filippov's sense; they are unique from almost every
initial state because the surface is nowhere attracting from both sides:
below it there are no departures and $\dot A=r(1-A)\ge0$, which lowers $\de$
and pushes the state away from the surface, so a trajectory reaches the
surface only through a rise of $d^{f}$ and then crosses it transversally, and
no sliding motion exists; from an initial state on the surface itself two
continuations may exist, a null set of initial conditions. The mechanical
rule is $\bar d=0$, every reached threshold acted on. It is the rule the law
selects: for $\bar d>0$ the population flow is zero for $0<\de\le\bar d$ and
jumps to $A\eta\bar d$ at $\bar d$, so the law \eqref{eq:law} would be flat
up to $2+\bar d$ per cent and discontinuous there, whereas the measured law
is continuous, with its kink at the announced number
(Section~\ref{sec:law}). \emph{(iii)} With $x=1-A$ the de-anchored share and
$\theta=d^{f}_t$ the fundamental, the flow payoff difference of the re-based
over the anchored plan is $\Delta(x,\theta)=\bar q c\,(\theta+\zeta x)_+-c_{R}$.
It is nondecreasing in $x$---$\partial\Delta/\partial x=\bar q c\zeta>0$
where the overshoot is open, $0$ where the decision is dormant---increasing
in $\theta$ on the open side, and has both dominance regions: re-basing
dominates, $\Delta(0,\theta)>0$, for $\theta>c_{R}/(\bar q c)=\bar d$, and
staying dominates, $\Delta(1,\theta)<0$, for $\theta<\bar d-\zeta$.
\hfill$\square$

\paragraph{The Frankel--Pauzner conditions, and their scope.} The
forward-looking version of the choice---each agent comparing the expected
discounted payoffs of the two plans at her revision opportunities, the others'
strategies given---is a dynamic coordination game with strategic
complementarity. \citet{frankel2000resolving} and \citet{burdzy2001fast}
prove that such a game has a unique equilibrium, in cutoff form, when (a)
revision opportunities arrive at a common constant Poisson rate, (b) the
fundamental follows a Brownian motion with drift, and (c) the flow payoff
difference is monotone in the aggregate action and in the fundamental with
dominance regions at both ends; the uniqueness comes from (b), which lets
iterated deletion from the two extreme strategies converge to the same cutoff,
and the aggregate path is continuous because at most a flow can move at any
instant. Part (iii) verifies (c) for the paper's payoffs. The model departs
from (a) and (b): the exposure clock's opportunities arrive at the
state-dependent rate $\eta\dpl$, and the paper's gap mean-reverts. The theorem
is therefore cited for the variant of the choice that meets its letter and is
not extended here; what the paper uses from the coordination reading is (i)
and (ii), proved above in the model itself, and the reading they
support---the endogenous cutoff falls as others withdraw the benefit of the
doubt, and the withdrawal remains a flow. \citet{herrendorf2000ruling} obtain
uniqueness from the dispersion of private types rather than from shocks, the
route the dispersed thresholds of Lemma~\ref{lem:hazard} would take; we do not
pursue it.

\paragraph{The threshold's scale: two readings.} The law uses one number, the
mean threshold $1/\eta$, in point-years of expected overshoot. Two readings
of where it comes from, neither privileged. \emph{Maxmin.} The threshold
derived above, $\tau=k/(\bar qc)$: the exposure at which the worst-case
expected mislabeling loss reaches the re-planning cost; with $(k_i,c_i)$
dispersed and Assumption~\ref{ass:memoryless}, $1/\eta=\E[\tau_i]$.
\emph{Change-point.} An agent monitoring the gap for a permanent drift with a
one-sided detector declares it when a cumulated statistic crosses a level $h$
set by the false-alarm rate she tolerates \citep{page1954continuous}; the
detector is minimax-optimal, the smallest worst-case delay for a given mean
time between false alarms \citep{lorden1971procedures,moustakides1986optimal}.
For a small drift alternative its statistic is the cumulated gap with a floor
at zero: the exposure clock, up to the treatment of spells below the target,
which the detector lets offset earlier overshoot and the exposure clock,
dormant below the target, does not. The level $h$ is then the threshold in
point-years, and its mean across agents with dispersed false-alarm tolerances
is $1/\eta$. This reading needs no stake $c$ and no bound $\bar q$; it needs
a false-alarm cost. Measured on the professionals' anchors,
$1/\eta=5.5$ point-years of expected overshoot ($17.9$ of realized overshoot,
the same threshold in the other clock's units); the number tests neither
reading and is consistent with both.

\paragraph{Aggregation: from individual to consensus revisions.} The law
prices the average individual revision variance, $W_t=q_J\,\nu_t$
(Section~\ref{sec:model-law}). The consensus revision is the population
average of the individual revisions. Over a continuum of agents whose
re-basings are conditionally independent given the common path---the path of
the expected gap is common, the thresholds private---the idiosyncratic part
of the consensus revision vanishes by the law of large numbers and what
remains is its common part: the flow of re-basings, $A_t\eta\dpl$, times the
common component of the jump, $\bar J$. The consensus revision is therefore a
function of the common path, not an average of independent draws, and its
variance is the variance of that common part rather than $W_t/N$; the
realized law \eqref{eq:realizedlaw} inherits the shape through $\psi\bar J$
times the flow---the object $\zeta$ multiplies in \eqref{eq:feedback}---which
is why the realized arm carries the kink of the belief arm scaled by the
pass-through. Individual variance does not determine aggregate variance in
general; here the common-information structure does the determining, and the
statement is the structure, not an inequality.

\paragraph{The feedback measured.} $\zeta=\psi\bar J$ combines the
pass-through $\psi\approx0.9$ of \eqref{eq:passthrough} with the average
re-basing $\bar J$, read on the professionals: among respondents whose
longer-term point is at or above $2.2$ per cent over 2021--26, the point
exceeds the target by $0.51$ on average ($181$ respondent-rounds; median
$0.30$); two cross-checks give $0.39$ for the longer-term point's move at the
crossing itself ($32$ crossings by $23$ respondents) and $0.43$ for the
excess of the re-based respondents' one-year-ahead mean over the anchored
respondents' within the same round ($22$ rounds). With $\bar J=0.51$,
$\zeta=0.46$ points of expected gap per unit of de-anchored share. On the
expected-gap clock of Lemma~\ref{lem:hazard}---$\eta=0.18$ per point-year and
$r=0.58$ per year---the feedback rate is $\eta\zeta=0.08$ per year: the
stability margin $r/(\eta\zeta)$ is $6.9$ ($7$ to $9$ over the three readings
of $\bar J$), the multiplier $r/(r-\eta\zeta)$ is $1.17$ ($1.13$ to $1.17$),
the residual overshoot decays at $r-\eta\zeta=0.50$ per year, a half-life of
$1.4$ years against $1.2$ for re-anchoring alone, and at the observed peak of
the de-anchored share, $0.35$ in 2023Q4, the residual is $\zeta\times0.35=0.16$
points. On the realized clock of Figure~\ref{fig:benefit} ($\eta=0.06$,
$r=0.66$) the same quantities are $26$, $1.04$ and $0.63$; the two clocks
agree on the message and differ on the size of the amplification because
$\zeta$ is measured in expected-gap units, which is why the theory prints the
expected-gap convention. $\zeta$ is a calibrated measurement---$\psi$ from the
calibration, $\bar J$ from the survey---and the margin inherits both
uncertainties; a margin of seven is far from the tipping point without being
an order of magnitude. Measurement: the results file of
\texttt{fig03\_benefit\_of\_doubt.py}.

\subsection{No mean-side mechanism: Lemmas \ref{lem:ce} and \ref{lem:invariance}}
\label{app:meanside}

The law \eqref{eq:law} makes a conditional \emph{second} moment depend on the \emph{first} moment's state. The two lemmas used in Section~\ref{sec:model-law} record that nothing on the mean side of the block could have produced that dependence; they are stated here with their proofs.

\begin{lemma}[Certainty equivalence]\label{lem:ce}
In the linear-quadratic New-Keynesian block
\eqref{eq:nkpc}--\eqref{eq:taylor}, the conditional variance of future
inflation is independent of the current inflation gap, at every parameter
value and under every rule linear in the state.
\end{lemma}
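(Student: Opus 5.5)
The plan is to write the block as a linear state-space system and show that the current gap enters only the conditional mean. Stack the state as $z_t=(u_t,v_t,\xi_t,\varepsilon_t)'$ together with any lagged endogenous variables a backward element would add. The exogenous processes of Assumptions~\ref{ass:comp} and~\ref{ass:innov} are linear recursions $z_{t+1}=Gz_t+e_{t+1}$. Their innovations $e_{t+1}$ are independent of $\Iset_t$, and their covariance $\Sigma_e$ is constant; this is the maintained homoskedasticity of the stationary, mean-zero, finite-variance innovations. Under any rule linear in the state, $\hat\imath_t=\Phi z_t$ (the rule \eqref{eq:taylor} being the case $\Phi$ determined by $\varphi_\pi,\varphi_y$ and the accommodated share $\lstar$), the determinate minimum-state-variable solution of \eqref{eq:nkpc}--\eqref{eq:is} is linear: $d_t=Hz_t$ and $y_t=Kz_t$. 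The matrices $H,K$ solve the usual undetermined-coefficients system and do not depend on $z_t$. I would obtain them by substituting the guess $d_t=Hz_t$ and using $\E_t d_{t+1}=HGz_t$, as in the reduction of Lemma~\ref{lem:nkreduced}.

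Next I compute the forecast error. For a horizon $h\ge1$,
\[
d_{t+h}-\E_t d_{t+h}=H\sum_{j=0}^{h-1}G^{j}e_{t+h-j},
\]
so $\Var_t(d_{t+h})=H\bigl(\sum_{j=0}^{h-1}G^{j}\Sigma_eG^{j\prime}\bigr)H'$. This is a deterministic matrix expression with no dependence on $z_t$, and therefore none on $d_t=Hz_t$. The current gap enters only through $\E_t d_{t+h}=HG^{h}z_t$, which is the certainty-equivalence statement. The argument holds at every parameter value and for every $\Phi$, because none of these enters anywhere except through $H$ and $G$.

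Two extensions remain. First, if the public's information set is a filtration of $d_t$ alone rather than full state observation, the conditional variance is the Kalman-filter posterior covariance propagated forward. The Riccati recursion for that covariance involves only $(G,H,\Sigma_e)$ and never the data, so it is again independent of the realized gap; under Gaussian innovations this is exact, and otherwise it holds for linear projections. Second, adding the control error $\varepsilon_t$ or a backward share only enlarges $z_t$.

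I expect the main obstacle to be the scope of ``every rule linear in the state''. A rule whose coefficients switch with the sign of the gap, or the positive part $(\cdot)_+$, is not linear, so I must restrict explicitly to time-invariant linear feedback. Indeterminate or sunspot equilibria would also add innovations, but those are still state-independent if the sunspot variance is constant. I will state the conclusion for the determinate, time-invariant linear solution and note that homoskedastic innovations are exactly the assumption doing the work. That prepares Lemma~\ref{lem:invariance}, where variance dependence on the gap must instead come from the innovation side.
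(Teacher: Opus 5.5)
Your proposal is correct and follows the paper's own argument, written in matrix form. The paper solves the block under discretion for a single AR(1) cost-push shock by undetermined coefficients, $d_t=u_t/(1+\ep\kappa-\beta\rho_u)$, so the $h$-step forecast error is a fixed-weight sum of $e_{t+1},\dots,e_{t+h}$, with variance proportional to $\sigma_e^{2}(1-\rho_u^{2h})/(1-\rho_u^{2})$ and free of $d_t$; it then notes that any other linear rule, or several shocks, only changes the constant weights, which is exactly your $H\sum_{j=0}^{h-1}G^{j}e_{t+h-j}$.

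Your filtering remark adds something: it covers the case the paper's proof sidesteps by treating $u_t$ as known at $t$, even though Assumption~\ref{ass:comp} says the public never observes it. Like the paper, you need i.i.d.\ innovations. The ``stationary, finite-variance'' wording of Assumption~\ref{ass:innov} does not by itself deliver that, so state it as an assumption rather than as a consequence of that wording.
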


\begin{lemma}[Mean-side invariance]\label{lem:invariance}
Write $\pi_{t+h}=m_t(h)+\varepsilon_{t+h}$ with $m_t(h)=\E_t\pi_{t+h}$
measurable with respect to $\Iset_t$, and compare policies that move only
$m_t(h)$, holding fixed the conditional law of the forecast error
$\varepsilon_{t+h}$ given $\Iset_t$---policies that neither re-load
$\pi_{t+h}$ on future structural innovations nor add a random,
state-triggered component to it. Then $\Var_t(\pi_{t+h})$ is invariant to
the policy, however nonlinear.
\end{lemma}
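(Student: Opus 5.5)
The plan is to prove Lemma~\ref{lem:invariance} directly from the decomposition in its statement, and to obtain Lemma~\ref{lem:ce} as the special case in which the policy is a linear rule. The argument has two steps.

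\emph{First}, Lemma~\ref{lem:invariance}. Fix a horizon $h$. Because $m_t(h)$ is $\Iset_t$-measurable, it is a constant under the conditional law given $\Iset_t$. Hence
\[
\Var_t(\pi_{t+h})=\Var_t\bigl(m_t(h)+\varepsilon_{t+h}\bigr)=\Var_t(\varepsilon_{t+h}).
\]
The hypothesis of the lemma is that the admissible policies leave the conditional law of $\varepsilon_{t+h}$ given $\Iset_t$ unchanged. Its conditional variance is therefore the same under every such policy, and so is $\Var_t(\pi_{t+h})$. Nothing in this step uses the functional form of the map from $\Iset_t$ to $m_t(h)$, so it holds however nonlinear the policy is.

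I will also state why the two exclusions in the lemma are needed, since they are the only ways the conclusion can fail. A policy that re-loads $\pi_{t+h}$ on future structural innovations changes the forecast error. A policy that adds a random component triggered by the state changes its conditional law through the state. Either one falls outside the hypothesis.

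\emph{Second}, Lemma~\ref{lem:ce}. Under any rule linear in the state, the block \eqref{eq:nkpc}--\eqref{eq:taylor} has a minimum-state-variable solution that is linear in the state vector $z_t$. Here $z_t$ stacks the current shocks $(u_t,v_t,\xi_t)$ and the error $\varepsilon_t$. The solution can be written $z_{t+1}=Gz_t+Hw_{t+1}$, with $w$ the innovations, which are mutually independent and independent of the past by Assumption~\ref{ass:innov}. Inflation loads on the state as $d_t=c'z_t$. Iterating,
\[
d_{t+h}=c'G^hz_t+\sum_{j=0}^{h-1}c'G^jHw_{t+h-j}.
\]
The first term is $\Iset_t$-measurable; it plays the role of $m_t(h)$. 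The second is a fixed linear combination of future innovations, so its law does not depend on $z_t$. Its variance is $\sum_j c'G^jH\Sigma_wH'G^{j\prime}c$, which is free of the current gap. This is the decomposition of Lemma~\ref{lem:invariance} with an error whose conditional law is state-invariant, and the claim follows at every parameter value.

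The main obstacle is the public's information set. If $\Iset_t$ contains only the history of $d_t$ and not the full state, the conditional expectation comes from a Kalman filter. I plan to handle this by noting that for a linear-Gaussian state space the filter's error covariance follows the Riccati recursion. That recursion does not depend on the observations, so it converges to a steady state independent of the data, and the conditional variance is again free of the realized gap. Without Gaussianity I would state the result for the linear projection and the mean squared projection error, which is the object the paper uses throughout; it is likewise data-independent.

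I will close by observing that the remaining route, a state-dependent arrival rate of innovations, is exactly what Lemma~\ref{lem:invariance} excludes and what Proposition~\ref{prop:law} supplies.
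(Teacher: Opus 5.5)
Your first step is exactly the paper's proof: the $\Iset_t$-measurable $m_t(h)$ drops out of the conditional variance, so $\Var_t(\pi_{t+h})=\Var_t(\varepsilon_{t+h})$, and the hypothesis fixes the conditional law of $\varepsilon_{t+h}$, so the variance is policy-invariant however nonlinear the map to $m_t(h)$. Your second step, deriving Lemma~\ref{lem:ce} as a special case through a general state-space and Kalman argument, is not needed for this statement; the paper proves that lemma separately, by a direct AR(1) cost-push computation.
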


\paragraph{Proof of Lemma \ref{lem:ce}.} Under any rule linear in the state, the
equilibrium of \eqref{eq:nkpc}--\eqref{eq:taylor} makes the inflation gap a
linear function of the exogenous shocks with coefficients that are constants
of the parameters and the rule. Take the cost-push AR(1),
$u_{t+1}=\rho_u u_t+e_{t+1}$ with $e$ i.i.d., mean zero, variance
$\sigma_e^2$. Under discretion the first-order condition $y_t=-\ep d_t$ and
the guess $d_t=\xi u_t$ give, in the Phillips curve,
$\xi(1+\ep\kappa-\beta\rho_u)=1$; any other linear rule changes only the
constant $\xi$. Iterating the shock $h$ steps,
\[
u_{t+h}=\rho_u^{h}u_t+\sum_{j=1}^{h}\rho_u^{\,h-j}e_{t+j},
\]
with the first term known at $t$, the $h$-step forecast error of the gap is
$\xi\sum_{j=1}^{h}\rho_u^{\,h-j}e_{t+j}$ and, the innovations being
independent,
\[
\Var_t\bigl(d_{t+h}\bigr)
=\xi^{2}\sigma_e^{2}\sum_{j=0}^{h-1}\rho_u^{2j}
=\xi^{2}\sigma_e^{2}\,\frac{1-\rho_u^{2h}}{1-\rho_u^{2}}\,.
\]
No term contains $d_t$: the current gap moves the conditional mean
$\xi\rho_u^{h}u_t$ and nothing else. With several shocks the error is the
corresponding fixed-weight sum across their innovations and the conclusion
is unchanged.

\paragraph{Proof of Lemma \ref{lem:invariance}.} With $m_t(h)$ measurable with
respect to $\Iset_t$,
\[
\Var_t\bigl(\pi_{t+h}\bigr)
=\Var_t\bigl(m_t(h)+\varepsilon_{t+h}\bigr)
=\Var_t\bigl(\varepsilon_{t+h}\bigr),
\]
because an $\Iset_t$-measurable quantity---however nonlinear in the
state---drops out of a conditional variance. A policy that moves only
$m_t(h)$ therefore moves nothing on the left, and by hypothesis it holds the
conditional law of $\varepsilon_{t+h}$ fixed. Contrapositively, a
conditional variance that depends on the gap requires the mechanism to reach
the innovation side---change the equilibrium loading of $\pi_{t+h}$ on
innovations dated after $t$, or add a component drawn after $t$---with the
reach itself state-dependent.

\subsection{The realized gap inherits the law}
\label{app:realized}

The belief-side law would be of limited interest if it stayed on the belief side. It does not, because expectations are an argument of the Phillips curve.

\paragraph{The pass-through.} Combine the Phillips curve $d_t=\beta\E_t d_{t+1}+\kappa y_t+u_t$ with the static IS carrying the \emph{real} rate, $y_t=-\sigma(\hat\imath_t-\E_t d_{t+1})$, and the rule \eqref{eq:taylor} at $\varphi_y=0$, $\hat\imath_t=\varphi_\pi(d_t-m_t-\xi_t)$, whose target terms carry no expectation. Substituting the second and third into the first,
\[
d_t=\beta\E_t d_{t+1}+\kappa\big[\sigma\E_t d_{t+1}-\sigma\varphi_\pi(d_t-m_t-\xi_t)\big]+u_t ,
\]
so that, collecting $d_t$,
\[
d_t\,(1+\kappa\sigma\varphi_\pi)=(\beta+\kappa\sigma)\,\E_t d_{t+1}+\kappa\sigma\varphi_\pi(m_t+\xi_t)+u_t ,
\qquad
\frac{\partial d_t}{\partial \E_t d_{t+1}}=\frac{\beta+\kappa\sigma}{1+\kappa\sigma\varphi_\pi}=\psi ,
\]
which is \eqref{eq:passthrough}. That $\psi<1$ whenever $\varphi_\pi>1$ follows from comparing numerator and denominator: the difference is $(\beta-1)-\kappa\sigma(\varphi_\pi-1)$, which is negative because $\beta<1$ makes the first term negative and $\varphi_\pi>1$ makes the second subtract a positive quantity.

\paragraph{The scaling.} News about the expected gap therefore enters the realized gap multiplied by $\psi$, and a variance by $\psi^2$. The flip-driven component of belief news carries variance $b_+(\de)_+$ by the previous subsection, so its realized counterpart carries $\psi^2 b_+(\de)_+$, and
\[
\Var(d_{t+1}\mid\Iset_t)=\sigma^2_{0,\mathrm{real}}+\psi^2 b_+(\de)_+ ,
\]
which is \eqref{eq:realizedlaw}: the conditional variance of the realized next-period gap, a function of the expected gap the public holds at $t$. The shape is preserved exactly, and the reason is that $\psi$ contains no $\de$: multiplying an affine function of $(\de)_+$ by a constant leaves it affine in $(\de)_+$, with the flat arm still flat and the kink still at the target. Only the slope is attenuated. Policy shapes the attenuation through $\varphi_\pi$ but cannot remove it: $\psi\to0$ requires $\varphi_\pi\to\infty$, a road barred by the output cost of the same rule, and at conventional calibrations $\psi\approx0.9$ ($0.89$ to $0.95$ at $\beta=0.99$, $\sigma=1$, $\varphi_\pi=1.5$ and $\kappa\in[0.10,0.30]$; higher still at the flatter euro-area slopes of Appendix~\ref{app:optlambda}).

\subsection{The occupancy identity}
\label{app:bill}

Let the gap follow a stationary first-order law $d_{t+1}=r\,d_t+\varepsilon_{t+1}$ with $\E[\varepsilon_{t+1}\mid\mathcal I_t]=0$, and let the conditional variance obey the law of \eqref{eq:law} augmented by whatever quadratic term the opacity channel of Section~\ref{sec:opacity} contributes,
\[
\Var(\varepsilon_{t+1}\mid\mathcal I_t)=a+b_+(d_t)_++c\,d_t^2 .
\]
The pair $(a,b_+)$ here is the \emph{realized-state} law: the conditional innovation variance of the realized gap as a function of its own positive part, the object the fitted realized law of Section~\ref{sec:fitlaw} estimates and for which Appendix~\ref{app:realized} supplies the model floor $\psi^2\times$ the belief slope. The identity therefore prices occupancy at the realized-side rate, and the belief-side slope of \eqref{eq:law} enters it only through that attenuation. Write $L\equiv\E[d_t^2]$ for the unconditional second moment; since
$\E[\varepsilon_{t+1}\mid\mathcal I_t]=0$ forces the stationary mean to
zero ($\mu=r\mu$), $L=\Var(d_t)$. Taking unconditional expectations of the
squared law and using orthogonality of $\varepsilon_{t+1}$ to $\mathcal I_t$,
\[
\E[d_{t+1}^2]=r^2\E[d_t^2]+\E\big[\Var(\varepsilon_{t+1}\mid\mathcal I_t)\big]
= r^2 L+a+b_+\E[D^+]+cL ,
\]
where $D^+\equiv(d_t)_+$. Stationarity sets the left side equal to $L$, so $(1-r^2-c)\,L=a+b_+\,\E[D^+]$,
and solving for the stock gives \eqref{eq:bill},
\[
\Var(d_t)=\frac{a+b_+\,\E[D^+]}{1-r^2-c}\, .
\]
Four readings of that line matter. It is an \emph{identity} given the law
and stationarity: no preference parameter appears anywhere in it. Its
numerator is the \emph{flow}---the average variance injected per period,
$\E[\Var(\varepsilon_{t+1}\mid\mathcal I_t)]$ with the quadratic term
folded into the denominator---and its left side is the \emph{stock} the
economy carries; the denominator is what persistence does to the same
occupancy, roughly sevenfold at the realized persistence of
Section~\ref{sec:fitlaw}. $\E[D^+]$ is the expected point-years the gap spends above target, so the bill is $b_+$ per point-year and is linear in occupancy---two years one point above target cost exactly twice one year. And the identity requires $r^2+c<1$ for $L$ to be finite, which is the stationarity condition the empirical work checks before the identity is calibrated.

\subsection{Endogenizing the correction size}
\label{app:endodelta}

The flow law of Appendix~\ref{app:clocks} treats the re-basing quantum as independent of the state. Suppose instead that the correction size is itself state-dependent and affine, $\Delta(|d|)=\Delta_0+\Delta_1|d|$. The flow variance becomes
\[
V(d)=\big(\nu_0+\nu_1 d\big)\big(\Delta_0+\Delta_1 d\big)^2
=\underbrace{\nu_0\Delta_0^2}_{d^0}
+\underbrace{\big(\nu_1\Delta_0^2+2\nu_0\Delta_0\Delta_1\big)}_{d^1}d
+\underbrace{\Delta_1\big(2\nu_1\Delta_0+\nu_0\Delta_1\big)}_{d^2}d^2
+\underbrace{\nu_1\Delta_1^2}_{d^3}d^3
\]
for $d>0$. Three consequences, each with its identification status.

\emph{The linear term survives, for $\Delta_1\ge0$.} Its coefficient contains $\nu_1\Delta_0^2$, so endogenizing the correction size adds terms rather than replacing the one the paper estimates. We take $\Delta_1\ge0$ throughout, which $\Delta(|d|)>0$ requires at arbitrarily large gaps; under $\Delta_1<0$ the linear coefficient can turn negative and the bound below does not hold. The linear law is therefore a lower bound on the state-dependence of the variance, not a competing specification, and $\Delta_1=0$ collapses the expression back to $\sigma_0^2+b_+d$ exactly.

\emph{A quadratic candidate appears, and it is policy-based.} The $d^2$ term is generated here by the correction size responding to the state, which is a different mechanism from the opacity channel that produces the $c\,d^2$ term in Appendix~\ref{app:bill}. The two are close but not identical---this mechanism's $d^2$ coefficient differs across the arms and it carries a $d^3$ term present only above target, whereas the opacity channel's $c\,d^2$ is symmetric and cubic-free---so they are separable in principle and not on this sample; Section~\ref{sec:fitlaw} reports that functional form is not identified within the overshoot arm in any case.

\emph{The asymmetry widens at large gaps, and its ratio form is free of $\Delta_1$.} Below target the intensity's one-sided slope contributes nothing, so $V_-(|d|)=\nu_0(\Delta_0+\Delta_1|d|)^2$ while $V_+(|d|)=(\nu_0+\nu_1|d|)(\Delta_0+\Delta_1|d|)^2$. The difference is
\[
V_+(|d|)-V_-(|d|)=\nu_1\,|d|\,\big(\Delta_0+\Delta_1|d|\big)^2 ,
\qquad
\frac{V_+(|d|)}{V_-(|d|)}=1+\frac{\nu_1}{\nu_0}\,|d| .
\]
The gap between the arms therefore \emph{grows} cubically in the overshoot, and the ratio grows linearly in it and does not involve $\Delta_1$ at all. A parity argument would suggest otherwise, but parity is exactly what the one-sidedness of $\nu$ destroys: the $d^1$ coefficients of the two arms differ by $\nu_1\Delta_0^2$ rather than by a sign, and the $d^2$ coefficients differ by $2\nu_1\Delta_0\Delta_1$ rather than coinciding. This is a prediction the current sample cannot test---it needs overshoots larger than any the estimation window contains---and it is stated as a prediction, not claimed as a finding. Its useful feature is that the ratio form is a one-parameter test of $\nu_1/\nu_0$ that state-dependent re-basing cannot contaminate.

\subsection{The cost of opacity: the quadratic term and the cascade}
\label{app:opacity}

The quadratic charge of Section~\ref{sec:opacity} is demonstrated here.

\paragraph{Stance dispersion is a quadratic charge.} Conditional on the stance $\lambda\in\{\lambda_L,\lambda_H\}$, let the gap follow $d_{t+1}=\rho(\lambda)\,d_t+\varepsilon_{t+1}$ with innovation variance independent of $\lambda$, and let $P_t$ be the public's probability on the high state. By the law of total variance,
\[
\Var_t(d_{t+1})
=\E_\lambda\!\big[\Var(d_{t+1}\mid\lambda)\big]
+\Var_\lambda\!\big(\rho(\lambda)\big)\,d_t^{2},
\qquad
\Var_\lambda\!\big(\rho(\lambda)\big)
=\big(\rho(\lambda_H)-\rho(\lambda_L)\big)^{2}P_t(1-P_t),
\]
the second equality exact in the two-state case and $\approx[\rho'(\bar\lambda)]^{2}(\lambda_H-\lambda_L)^{2}P_t(1-P_t)$ for a smooth persistence map. The charge is quadratic in the distance by construction---it is the cross term of the total-variance identity, not an assumption---and it is the coefficient $c$ that the occupancy identity \eqref{eq:bill} carries, with stationarity requiring $r^{2}+c<1$. It is maximal at $P_t=\tfrac12$ and bounded by $(\rho(\lambda_H)-\rho(\lambda_L))^{2}/4$. A bank that withdraws its projections and its words removes the measurements by which $P_t$ updates; with both regimes recurrent, the belief then mixes toward the chain's stationary distribution, which is interior, so the dispersion term is bounded away from zero and approaches its ceiling as the stationary probability nears one half. Opacity does not create the term---Wall~III imposes it involuntarily while the state cannot be read---it moves the belief toward the top of it, and holds it there.

\subsection{Inflation-specific or general uncertainty: the credit discrimination}
\label{app:ngu}

The credit regression of Section~\ref{sec:nu} prices uncertainty about \emph{inflation}. A natural objection is that any uncertainty would do---that the survey's inflation uncertainty is priced only because it proxies for general, economy-wide uncertainty. This appendix runs the discrimination. The instrument is the growth analog of the correction: Normalized Growth Uncertainty ($\mathrm{NGU}$), the companion paper's measure \citep{vansteenberghe2026uncertain}---each forecaster's one-year-ahead GDP-growth density standard deviation $\sigma_i$, divided by $\sqrt{1+|\mu_i-g^{\mathrm{pot}}_t|}$ with $\mu_i$ the forecaster's density mean and $g^{\mathrm{pot}}_t$ the Hodrick--Prescott trend growth of euro-area GDP, averaged within each survey round. The construction mirrors $\mathrm{NU}$ with one deliberate difference: the denominator is symmetric, because growth has a benchmark but no announced target, so there is no announced point for a kink to sit at (the restriction stated in Section~\ref{sec:nu}). The series runs 1999Q1--2026Q3; over the full sample the two measures correlate at $0.66$.

Each measure is substituted for $U_q$ in \eqref{eq:creditspec} one at a time, never
jointly, exactly as in Table~\ref{tab:credit}; the orthogonalized components are the
residuals of one measure on the other taken forecaster by forecaster---one regression per
forecaster on the matched inflation--growth panel ($4{,}339$ forecaster-rounds; the $86$
forecasters with at least ten matched rounds carry their own slope, the remaining $21$ the
within-forecaster slope with their own intercept)---averaged within each round and
z-scored like every regressor. Table~\ref{tab:ngu} reports the full specification,
column (4) of the nested design.

\begin{table}[htbp]\centering
\caption{Inflation-specific or general uncertainty in the credit specification}
\label{tab:ngu}
\begin{tabular}{lc}
\toprule
substituted measure & (4) full specification \\
\midrule
$\mathrm{NU}$ (inflation) & $0.0089^{***}$ \\
 & $(4.04)$ \\
$\mathrm{NGU}$ (growth) & $0.0016$ \\
 & $(1.06)$ \\
$\mathrm{NU}\perp\mathrm{NGU}$ & $0.0058^{***}$ \\
 & $(4.05)$ \\
$\mathrm{NGU}\perp\mathrm{NU}$ & $-0.0025^{**}$ \\
 & $(-2.51)$ \\
\bottomrule
\end{tabular}
\vspace{0.3em}
\parbox{\linewidth}{\footnotesize \textit{Notes:} sample, dependent variable, controls, fixed effects and clustering
(bank) exactly as Table~\ref{tab:credit}; the column is the full specification, column
(4) of the nested design. Substituted one at a time, never jointly. By quarter the same cells read $4.98$, $0.76$, $3.53$ and $-1.85$; a restricted wild-cluster (Rademacher) bootstrap by quarter gives $p=0.004$ for $\mathrm{NU}\perp\mathrm{NGU}$ and $p=0.087$ for $\mathrm{NGU}\perp\mathrm{NU}$. The orthogonalized series are built by the replication repository's library (its \texttt{orthogonalize} routine); the regression rests on confidential loan-level data and is documented there, not reproduced.
$^{***}p<0.01$, $^{**}p<0.05$, $^{*}p<0.1$.}
\end{table}

\subsection{The level--uncertainty null of \texorpdfstring{\citet{abel2016measurement}}{Abel et al. (2016)}, replicated and extended}
\label{app:abel_replication}

The closest survey-level precedent for the survey law of Section~\ref{sec:fitlaw} reached the opposite
conclusion. \citet{abel2016measurement}, in their Section~3.2, ask whether aggregate inflation
uncertainty co-moves with the aggregate point prediction, and estimate
\begin{align}
  \bar{\sigma}_{q,t} &= \alpha + \beta\,\bar{f}_{q,t} + \varepsilon_{q,t},
  \label{eq:abel8}\\
  \tilde{\phi}^{\textsc{med}}_{q,t} &= \alpha + \beta\,\tilde{f}^{\textsc{med}}_{q,t}
    + \varepsilon_{q,t},
  \label{eq:abel9}
\end{align}
where $\bar\sigma_{q,t}$ is the square root of the cross-sectional \emph{mean of individual
predictive variances}, $\tilde\phi^{\textsc{med}}_{q,t}$ the cross-sectional \emph{median of
individual interquartile ranges}, and the regressor is matched in aggregator to the dependent
variable---mean with mean, median with median. Their Table~IV, for one-year-ahead HICP inflation
over 1999:Q1--2013:Q4, reports $\hat\beta=-0.097$ $(0.076)$ with $R^{2}=0.048$ on the
variance-based measure and $\hat\beta=-0.071$ $(0.081)$ with $R^{2}=0.008$ on the IQR-based one:
no relationship, on either measure. We also run the joint specification that adds forecast
dispersion,
\begin{align}
  \bar{\sigma}_{q,t} &= \alpha + \beta_{1}\,s_{q,t} + \beta_{2}\,\bar{f}_{q,t}
    + \varepsilon_{q,t},
  \label{eq:abel8joint}\\
  \tilde{\phi}^{\textsc{med}}_{q,t} &= \alpha + \beta_{1}\,\tilde{f}^{\textsc{iqr}}_{q,t}
    + \beta_{2}\,\tilde{f}^{\textsc{med}}_{q,t} + \varepsilon_{q,t},
  \label{eq:abel9joint}
\end{align}
with $s_{q,t}$ the cross-sectional standard deviation of point forecasts and
$\tilde f^{\textsc{iqr}}_{q,t}$ their interquartile range.

\paragraph{Implementation.} Individual predictive variances follow their equation~(1), which
assigns each histogram bin a uniform density:
\[
  {}_{i}\sigma^{2}_{q,t}
    = \sum_{n} {}_{i}p_{n}\,\frac{u_{n}^{3}-l_{n}^{3}}{3(u_{n}-l_{n})}
    - \left[\sum_{n} {}_{i}p_{n}\,\frac{u_{n}^{2}-l_{n}^{2}}{2(u_{n}-l_{n})}\right]^{2},
\]
for respondent $i$'s probability ${}_{i}p_{n}$ on the bin $[l_{n},u_{n}]$; individual
interquartile ranges come from linear interpolation inside the bin that first crosses each
quartile. Every regression uses Newey--West standard errors with four lags, their bandwidth.
Two implementation choices are not theirs and are reported both ways below. The first is the
closure of the open-ended tail bins: they widen each by twice the adjacent closed interval,
whereas the pipeline behind the rest of this paper places each bin's mass at the bin's
midpoint and closes the open bins on the realised range of euro-area inflation: the top bin's
mass at $7.81$, the midpoint of $[5,10.62]$ with $10.62$ the range's maximum, and the bottom
bin's at $-1$, inflation never having fallen below $-0.62\%$. The
second is respondent selection: they drop any respondent not returning both a point and a
density forecast and any density whose probabilities do not sum to unity, where our pipeline
renormalises.

\paragraph{Dating.} Rounds are labelled by the survey that produced them. The panel behind this
paper dates each observation by its \emph{target}, so the one-year-ahead round of $YYYY$Q1
carries the target date $YYYY$-12 and those of Q2, Q3 and Q4 carry $(YYYY{+}1)$-03, -06 and -09;
shifting back one year recovers the round. Abel's sample therefore ends at the shifted date
2013-09, and a window ending at 2013-12 would carry the 2014Q1 round as well. Ours ends where
theirs does, at $T=60$ rounds---their own count.

\begin{table}[tbp]
\centering
\small
\begin{tabular}{lccccc}
\toprule
 & & \multicolumn{2}{c}{Variance-based \eqref{eq:abel8}}
   & \multicolumn{2}{c}{IQR-based \eqref{eq:abel9}}\\
\cmidrule(lr){3-4}\cmidrule(lr){5-6}
Sample and specification & $T$ & $\hat\beta$ & $R^{2}$ & $\hat\beta$ & $R^{2}$\\
\midrule
\addlinespace[2pt]
\multicolumn{6}{l}{\textit{Panel A --- as published, and as replicated}}\\[2pt]
\quad\citet{abel2016measurement}, Table~IV & $60$ & $-0.097$ & $0.048$ & $-0.071$ & $0.008$\\
                                           &      & $(0.076)$ &        & $(0.081)$ & \\
\quad This paper, their sample             & $60$ & $+0.038$ & $0.010$ & $-0.071$ & $0.025$\\
                                           &      & $(0.066)$ &        & $(0.079)$ & \\
\quad This paper, extended to 2026Q2       & $110$ & $+0.299^{***}$ & $0.511$ & $+0.226^{**}$ & $0.302$\\
                                           &      & $(0.063)$ &        & $(0.096)$ & \\
\addlinespace[4pt]
\multicolumn{6}{l}{\textit{Panel B --- the same slope under three departures}}\\[2pt]
\quad Their tail closure, their sample     & $60$ & $-0.067$ & $0.040$ & $-0.071$ & $0.025$\\
                                           &      & $(0.058)$ &        & $(0.079)$ & \\
\quad Their tail closure, extended         & $110$ & $+0.102^{***}$ & $0.161$ & $+0.084^{*}$ & $0.088$\\
                                           &      & $(0.037)$ &        & $(0.044)$ & \\
\quad Pre-2024Q4 histogram grid only       & $103$ & $+0.290^{***}$ & $0.504$ & $+0.221^{**}$ & $0.289$\\
                                           &      & $(0.072)$ &        & $(0.103)$ & \\
\quad Their respondent filter, extended    & $110$ & $+0.297^{***}$ & $0.504$ & $+0.216^{**}$ & $0.280$\\
                                           &      & $(0.064)$ &        & $(0.096)$ & \\
\bottomrule
\end{tabular}
\caption{Uncertainty and the aggregate point prediction: \citet{abel2016measurement} Table~IV
(one-year-ahead HICP inflation), replicated on their sample and extended. Dependent variables:
$\bar\sigma_{q,t}$, the square root of the cross-sectional mean of individual predictive
variances; $\tilde\phi^{\textsc{med}}_{q,t}$, the cross-sectional median of individual
interquartile ranges. The regressor is the cross-sectional mean (variance-based) or median
(IQR-based) of point forecasts. Newey--West standard errors with four lags in parentheses.
Their sample is 1999Q1--2013Q4; the extension runs to the round conducted in 2026Q2, this
paper's cutoff. Panel~B's first two rows close the open-ended bins as they do rather than on the
realised range of inflation; the third drops every round on the histogram grid the ECB-SPF
adopted in 2024Q4; the fourth imposes their respondent selection in place of renormalisation.
$^{***}p<0.01$, $^{**}p<0.05$, $^{*}p<0.10$.}
\label{tab:abel_replication}
\end{table}

\begin{figure}[tbp]
\centering
\includegraphics[width=\textwidth]{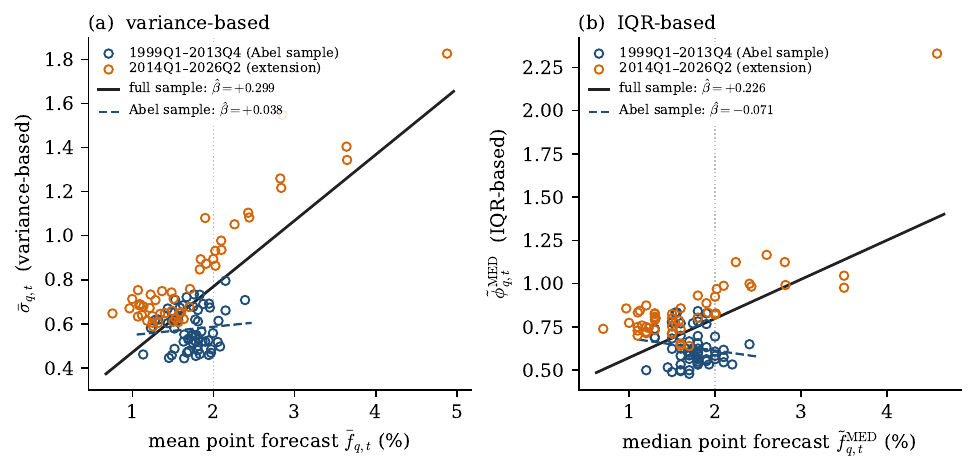}
\caption{Inflation uncertainty against the aggregate point forecast: \citet{abel2016measurement}
replicated, and the sample extended. \textbf{(a)} the variance-based measure
$\bar\sigma_{q,t}$ against the mean point forecast, equation~\eqref{eq:abel8}; \textbf{(b)} the
IQR-based measure $\tilde\phi^{\textsc{med}}_{q,t}$ against the median point forecast,
equation~\eqref{eq:abel9}. Blue: the $60$ rounds of their sample, 1999Q1--2013Q4. Orange: the
$50$ rounds of the extension, 2014Q1--2026Q2. Solid line: ordinary least squares on the full
sample; dashed line: on their sample alone. The dotted vertical marks the $2\%$ target. Their
cloud is flat because it sits almost entirely on the flat arm of \eqref{eq:lawsurvey}: the
consensus forecast left the target upward in $6$ of their $60$ rounds and never by more than
$0.39$ points. Script: \texttt{figA\_abel2016\_replication.py}, which also writes
Table~\ref{tab:abel_replication}.}
\label{fig:abel_scatter}
\end{figure}

\paragraph{What replicates.} On their own sample the null replicates. The variance-based slope is
$+0.038$ $(0.066)$, $p=0.57$, and the IQR-based slope is $-0.071$ $(0.079)$, $p=0.37$---the
latter their published estimate to three decimals. The joint
specification~\eqref{eq:abel8joint} reproduces their reading of which moment does the work:
dispersion enters at $+0.492$ $(0.205)$ while the point forecast stays flat at $+0.105$
$(0.067)$, so over their sample it is disagreement, not the level of expected inflation, that
co-moves with measured uncertainty.

The one gap is the sign of the variance-based point estimate, $+0.038$ against their $-0.097$,
both insignificant. The retired version of this appendix attributed that gap to the treatment of
the open-ended bins and did not test the attribution. Tested, it holds, and the mechanism is
visible: closing the tails as they do turns the slope negative, $-0.067$ $(0.058)$, while the
IQR-based slope is \emph{identical} under the two conventions to four decimals. The interquartile
range is an interior object---for these densities the quartiles never reach an outer bin---so the
variance is the only moment the tail closure can move, which is exactly why the IQR leg
reproduced their number and the variance leg did not.

\paragraph{What does not.} Extending the sample to the round conducted in 2026Q2 reverses the
conclusion. The variance-based slope rises to $+0.299$ $(0.063)$, $t=4.74$, with $R^{2}$ moving
from $0.01$ to $0.51$; the IQR-based slope to $+0.226$ $(0.096)$, $R^{2}=0.30$. In the joint
specification both regressors matter for the variance-based measure---dispersion at $+0.564$
$(0.111)$ and the point forecast at $+0.144$ $(0.045)$, $R^{2}=0.70$---whereas on the IQR side
dispersion alone survives.

Two checks stand between that reversal and a measurement artefact. The ECB-SPF changed its
histogram grid in the 2024Q4 round, a change of the kind
\citet{becker2026measuring} show can move measured uncertainty on its own; dropping every round
on the new grid leaves $T=103$ and $\hat\beta=+0.290$ $(0.072)$, $t=4.03$. And under their own
tail closure the reversal is smaller but survives, $+0.102$ $(0.037)$, $t=2.79$: the wide top bin
of our default closure amplifies measured variance precisely in the surge quarters, so the
\emph{magnitude} of the reversal is convention-sensitive where its \emph{existence} is not.
Imposing their respondent selection changes nothing, $+0.297$ against $+0.299$.

\paragraph{Why the two findings are the same finding.} The reversal is not a correction of
\citet{abel2016measurement}. Read through \eqref{eq:lawsurvey}, their null \emph{is} the flat
arm. Over their sample the consensus forecast exceeded the target in $6$ of $60$ rounds and never
by more than $0.39$ points, so the above-target gap they could have priced sums to $0.9$
percentage points across their rounds against $10.9$ across ours---eight percent of the
variation, in the only arm where the law has a slope. Fitting the arms form of
\eqref{eq:lawsurvey} on their window returns $b_{+}=+0.61$, positive and of the right order,
estimated on a regressor whose standard deviation is $0.057$ against $0.374$ on the full sample.
A regression of uncertainty on the \emph{level} of the forecast, run where the level almost never
leaves the neighbourhood of the target, is a regression of uncertainty on nothing. Their
conclusion was right for their sample, and it is the same conclusion this paper's law makes
below target; what the surge added was the arm on which the two readings part.

\subsection{Three walls between the data and the optimum}
\label{app:walls}

\paragraph{The walls, and what is knowable.}
Proposition~\ref{prop:oe} is the model's own impossibility, and it does not
stand alone: between any dataset and the optimum's \emph{value} stand three
walls, one for each ingredient of the optimum, stated in full below. The curve: \emph{the Phillips slope is not
identified}. Identification-robust confidence sets for the euro-area slope
are unbounded; the external instruments that do achieve strong first stages
estimate a curve near flat, so better instruments rescue nothing; and the
paper's own internal route collapses in a way exhibited below as evidence
rather than hidden. Wall~I states the failure, and the
history demonstrates it: seventy years of corrections in which each
generation re-attributed what its predecessor had claimed to identify
\citep{vansteenberghe2026seventy}. The rule: the response coefficient
behind \eqref{eq:target} can be unidentified in principle, because the
Taylor principle operates through off-equilibrium threats that equilibrium
histories never display \citep{cochrane2011determinacy}, and the rule's
intercept entangles the dial with natural-rate and target drift---inside a
fully structural model, deliberate accommodation of supply shocks and
exogenous target drift fit half a century of data indistinguishably
\citep{ireland2007changes}. The shock: the episode optimum needs the supply
share of the impulse, and the shock does not carry its label---not in real
time, when the stance must be chosen, and not ex post, where the leading
decompositions of the same surge still differ by some twenty points
\citep{shapiro2022decomposing,bernanke2023caused}. The verdict on a
central bank can therefore be argued but not settled
(Corollary~\ref{cor:verdict}), ex post as well as in real time: the
configurations that would separate the dial from the rule differ only in
counterfactual menus that equilibrium histories never play, and the band of
reaction functions market participants actually hold, measured directly from
conditional scenario surveys, never closes at any date we observe (Appendix~\ref{app:spd_panel}). What a band for the optimum can honestly be is an
\emph{exercise}, and the paper runs it as one (Proposition~\ref{prop:budget},
Appendix~\ref{app:budget}): provisionally accepting the leading published
readings, a reasonable conditional band is $[0.36,0.68]$---for impulses whose
components share a sign, the case those readings support---with three-fifths of its
width---three-quarters of its variance---traceable to the inability to classify
the shock as supply or demand. The intensity entering the band is the
transitory-shock benchmark of \eqref{eq:lstar-model}: under discretion a
persistent shock pushes the optimum \emph{above} one (the stabilization
bias), under commitment it pushes it \emph{below}
(Appendix~\ref{app:optlambda}'s ledger of the two regimes)---so the
benchmark sits between forces that pull opposite ways, one more layer of the
conditionality the band's grade already carries. The number to carry is not the interval but its
grade: conditional on readings that are peer-reviewed and published, that
disagree, and that the paper has no way to certify---a debate no accumulation of
equilibrium data will close.

The rule the data actually see is \eqref{eq:phieff} on the observed gap plus an intercept polluted by the drift---the form the reduction of
Appendix~\ref{app:optlambda} runs on. Equilibrium data therefore deliver the
composite $\varphi_\pi(1-\lcbe)$, the persistence $\rho(\lambda)$ of
\eqref{eq:rho}, and the composite intercept $\varphi_\pi\xi_t$, while the
object of normative interest is $\lstare=s_t\lstar$ with
$\lstar=1/(1+\ep\kappa)$: every ingredient of the optimum sits at least one
layer behind what is estimable. Between the two stand three walls, one per
ingredient---the \emph{curve} behind the intensity $\lstar$ (Wall~I), the
\emph{rule} behind the dial (Wall~II), the \emph{shock} behind the share
$s_t$ (Wall~III). The first two are inherited from the great identification
programs of empirical macroeconomics, whose seventy-year history \citet{vansteenberghe2026seventy}
traces; the third is intrinsic to the
tolerance object itself, and it binds the central bank in real time as well
as the econometrician ex post. None is a small-sample problem; each is a
reason more data of the same kind cannot close the set.

\paragraph{Wall I --- the curve (inherited).}
The intensity $\lstar=1/(1+\ep\kappa)$ inherits every difficulty of the Phillips slope, and the slope is not identified: identification-robust sets for the euro-area $\kappa$ are unbounded, and the external instruments that achieve strong first stages in dynamic form leave the slope flat rather than sharpening it \citep{mavroeidis2014empirical,hazell2022slope}, while the endogeneity of policy flattens the estimable curve precisely when policy is good \citep{mcleay2020optimal}. The failure is constructive, and we exhibit it rather than hide it: attempting to identify $\lambda$ internally---from the cost-push shock's own persistence, the one route that avoids circularity---collapses at the euro-area slope, because the recovered shock is not a shock at all. At a flat $\kappa$ with anchored survey expectations, the Phillips-curve residual is inflation relabeled, $\mathrm{corr}(\hat u_t, d_t)=0.98$, and the route returns infeasible values at every $\kappa$ in the published range.
Replacing survey expectations with realized future inflation breaks the degeneracy ($0.98\to0.21$) but fails in turn---first to errors-in-variables, then, instrumented, to instruments that are inflation lags and hence collinear. Two doors, one wall: on euro-area data the flat curve, not the expectation proxy, is binding. What a steep curve would permit, the euro-area curve denies.

\paragraph{Wall II --- the rule (inherited).}
The dial is revealed only through the operated rule \eqref{eq:phieff}: a composite coefficient one layer behind a $\varphi_\pi$ that is itself contested, and an intercept that carries the drift. The coefficient first. The rule's parameters can be unidentified in the baseline model because off-equilibrium threats are untestable \citep{cochrane2011determinacy}; where they are identified, the verdict has swung with the estimator and the data vintage---forward-looking GMM found a pre-Volcker violation of the Taylor principle that real-time data undid \citep{clarida2000monetary,orphanides2001monetary}, and the case for and against OLS is still being relitigated \citep{carvalho2021taylor}. Whatever $\varphi_\pi$ is, the public does not hold it fixed: the perceived response of the policy rate to inflation, read directly off dealers' scenario answers, fell by two-fifths across the 2024 pivot---from $0.96$ to $0.58$ at matched horizon, Table~\ref{tab:spd_panel} of Appendix~\ref{app:spd_panel}---so the coefficient against which tolerance would have to be measured is itself a moving, perceived object \citep{bauer2024perceptions}.
The intercept is the wall's second face. A natural-rate drift, a target drift, and interest-rate smoothing all land in the composite $\varphi_\pi\xi_t$ of \eqref{eq:phieff} and are observationally entangled with the dial: what reads as tolerance can be a drifting natural rate, a drifting target, or gradualism, and the rule residual cannot say which. Nor is the entanglement an artifact of the reduced form: estimated inside a fully structural New Keynesian model, deliberate supply-accommodation of the target and exogenous target drift are statistically indistinguishable on half a century of United States data \citep{ireland2007changes}.\footnote{Ireland's target (his eq.~10) is a random walk whose innovations load on the supply shocks---the deliberate, ``opportunistic'' component---plus an exogenous innovation. His maximum-likelihood point estimates assign \emph{all} target movement to the deliberate response ($\sigma_\pi=0.0000$), yet the likelihood ratio against the all-exogenous model is $1.62$: ``considerable uncertainty remains about the true source of movements in the Federal Reserve's inflation target, so that one cannot statistically reject the exogenous target model that depicts those movements as purely random'' \citep[p.~1864]{ireland2007changes}. One realized path, two attributions, both maintained---the structure of Corollary~\ref{cor:verdict}, at the unit root, in 2007.}

\paragraph{Wall III --- the shock (intrinsic).}
Even with $\kappa$ and $\varphi_\pi$ in hand, the episode optimum needs the supply share $s_t$, and the shock does not carry its label---not in real time, and not ex post. In real time the wall is the central bank's own: the state that fixes the optimum resolves only after the moment in which the stance must be chosen, so the bank acts against a composition it cannot yet read. And the label never arrives: the leading decompositions of the same surge, published after the fact with full hindsight, still differ by twenty points \citep{shapiro2022decomposing,bernanke2023caused}. Proposition~\ref{prop:budget} (Appendix~\ref{app:budget}) quantifies this wall---it alone carries between three-fifths and three-quarters of the irreducible uncertainty in the optimum, more than the celebrated flat curve. Structural shock identification does not escape this wall; it formalizes it. Sign-restricted identification delivers admissible \emph{sets} of decompositions \citep{uhlig2005what}, the prior on the rotation does work the data cannot undo \citep{baumeister2015sign}, and a historical decomposition inherits the maintained slope and rule---Walls~I and~II re-imported. Nor does the signal-extraction tradition relocate the problem. When a rule must feed on a mismeasured argument, the optimal restricted response shifts weight \emph{off} the noisy variable and \emph{onto} the well-measured one \citep{staiger1997precise,peersman1999taylor}: in \citeauthor{peersman1999taylor}'s estimates, output-gap noise \emph{raises} the optimal inflation weight (from $1.53$ to $1.65$) while the gap response falls, optimally to zero once the gap error's standard deviation passes $1.15$ percent. Measurement error thus pushes the inflation coefficient the wrong way to explain softening---on the inflation leg the well-measured variable is inflation itself---so a mismeasurement account of look-through must locate the noise \emph{inside} the impulse's own composition $s_t$: this wall, not an alternative to it.

\paragraph{The consequence: the verdict is not identified.}
The three walls compound into a single denial. The optimum
$\lstare=s_t\lstar$ needs the curve for the intensity (Wall~I) and the shock
for the share (Wall~III); the stance it would grade is read off the rule
(Wall~II); each ingredient is unidentified---the curve's robust sets
unbounded, the share free on the unit interval, the rule read only through
its composite---and none of these gaps is of the kind more equilibrium data
close. The question the public---and
the bank itself---most wants answered, whether a given accommodation was
warranted, is therefore not merely unanswered but unanswerable on field
data.

\begin{corollary}[The verdict is not identified]\label{cor:verdict}
Fix any equilibrium episode path and any stance reading $\lcb<1$ obtained from a rule benchmark. There exist admissible configurations---$\varphi_\pi$ anywhere on the ridge of \eqref{eq:phieff}, $\kappa$ anywhere the unbounded robust sets allow, $s_t\in[0,1]$, transmission uncertainty $\sw\ge 0$---under which the realized stance is episode-optimal, and others under which it is not, all inducing the same law for every field observable. Any verdict on the realized stance---episode-optimal, too tolerant, too tight---is identified only conditionally on maintained readings of $(s_t,\lstar,\varphi_\pi,\lB)$.
\end{corollary}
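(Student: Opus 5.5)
The argument is constructive. Fix the realized episode path and the stance reading $\lcb<1$. Then exhibit two admissible configurations that induce the same law for every field observable: in one the realized stance equals the episode optimum $\lstare=s_t\lstar$, and in the other it differs from it. The field observables are the inflation gap, the policy rate on the announced gap, and the rule-implied persistence. These reach the structure only through four identified objects: the composite $\varphi_\pi^{\mathrm{eff}}=\varphi_\pi(1-\lcbe)(1-\lB)$ of \eqref{eq:phieff} with its Brainard composition $\leff$, the composite intercept $\varphi_\pi\xi_t$, the persistence $\rho(\lambda)$ of \eqref{eq:rho} (a function of the composite $A(1-\lambda)$ only), and the scaling product $\lstar\sigma_u$ of Proposition~\ref{prop:oe}(i). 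The first step is to write the field law as a function of these objects and of the latent pair $(s_t,D_t)$.

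The second step is to use the free directions these objects leave. One is the ridge $(\varphi_\pi,\lambda)\mapsto(\varphi_\pi/a,\,1-a(1-\lambda))$, which holds $\varphi_\pi(1-\lambda)$ fixed; it is admissible as long as $\varphi_\pi/a>1$ and $\lambda\in[0,1]$. Another is the trade between $\lambda$ and $\lB$ inside $\leff=\lambda+\lB-\lambda\lB$. A third is $\kappa$, which may range over the unbounded robust sets of Wall~I, so that $\lstar=1/(1+\ep\kappa)$ covers $(0,1)$, while $\sigma_u$ moves inversely along the ridge of Proposition~\ref{prop:oe}(i) so that $\lstar\sigma_u$ is unchanged. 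The last is $s_t\in[0,1]$, which Assumption~\ref{ass:comp} makes unobserved.

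\textbf{Optimality branch.} Choose $\lB=0$ and $\lambda=\lcb$. Then pick $(s_t,\kappa)$ with $s_t/(1+\ep\kappa)=\lcb$. This is always possible for $\lcb\in[0,1)$, for example $s_t=1$ and $\ep\kappa=1/\lcb-1$, with the obvious limit handling at $\lcb=0$. Finally set $\varphi_\pi=\varphi_\pi^{\mathrm{eff}}/(1-\lcb)$.

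\textbf{Non-optimality branch.} Keep the same composite but change $s_t$ (say $s_t=0$, so $\lstare=0\neq\lcb$ whenever $\lcb>0$), or move tolerance into $\lB>0$ so that the deliberate $\lambda$ differs from $\lstare$. If $\lcb=0$, take instead $s_t=1$ with a flat $\kappa$, so that $\lstare>0$.

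To show the two configurations give the same law, invoke Proposition~\ref{prop:oe}(ii). Any change in how the persistent component splits between accommodation $m_t$ and drift $\xi_t$ leaves the law of $\{d_t\}$ unchanged at a common root. Any change in the composition that alters the output loading can be absorbed into the latent demand disturbance, which keeps $s_t$ latent as Wall~III records.

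The main obstacle is the jump from $\{d_t\}$ to \emph{every} field observable. The scope paragraph after the proof of Proposition~\ref{prop:oe} itself notes that the output gap differs across the readings under the finite-coefficient closure. I would therefore state the corollary relative to the information in the estimable rule and the inflation history. Equality of the joint law with the rate and output included would be obtained by re-choosing $v_t$ (and with it $s_t$) so that the output path $y_t$ is preserved configuration by configuration. This works through the episode identity \eqref{eq:optlambda_episode}: $s_t$ enters only through the ratio $u_t/(u_t+\kappa' v_t)$, so a compensating $v_t$ exists for any target $s_t$. If that compensation fails at some boundary configuration, the fallback is the conditional statement in the corollary's last sentence: the verdict is identified only given maintained values of $(s_t,\lstar,\varphi_\pi,\lB)$. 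That statement follows directly from the two branches above.
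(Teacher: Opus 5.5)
Your core construction matches the paper's own sketch. The sketch writes optimality as the single equation $\lcb=s_t\lstar+\lB(1-s_t\lstar)$. It obtains the optimal configuration from $\lB=0$, $s_t\lstar=\lcb$ and the unwarranted one from $s_t=0$, $\lB=0$, and it closes with equivalence along the ridge of \eqref{eq:phieff}.

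Your separate case $\lcb=0$ is a genuine improvement, because there the sketch's two displayed configurations coincide and both are optimal. On the optimal side of that case, though, take $s_t=0$ rather than the limit $\ep\kappa\to\infty$, which is not an admissible configuration.

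You use $\lB>0$ only to manufacture non-optimality. The sketch also runs it the other way: with $s_t=0$ and $\lB=\lcb$, the same stance is warranted caution while sitting the full $\lcb$ above the full-information benchmark $s_t\lstar=0$. So not even that distance grades the verdict.

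The step that fails is your upgrade to the joint law, whether by ``re-choosing $v_t$ (and with it $s_t$)'' or by absorbing the composition change ``into the latent demand disturbance.''
\begin{itemize}
\item With $(\kappa,\sigma,\varphi_\pi,\varphi_y)$ held, the static block behind the episode identity inverts the observables period by period. The Phillips curve gives $u_t=d_t-\kappa y_t$ and the IS curve gives $v_t=y_t+\sigma\hat\imath_t$.
\item Holding $(d_t,y_t,\hat\imath_t)$ therefore pins both disturbances, and with them $s_t=u_t/(u_t+\kappa'v_t)$.
\item That some $v_t$ hits any target ratio is true but irrelevant, because every such $v_t$ moves $y_t$ or $\hat\imath_t$.
\item The composition becomes free against the joint history only through a structural parameter, chiefly the slope, which moves $u_t=d_t-\kappa y_t$. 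Even then, the implied $u_t$ and $\xi_t$ paths would still have to be shown admissible (independent, stationary) under Assumptions~\ref{ass:comp} and~\ref{ass:innov}.
\end{itemize}
The paper does not prove this either. Its scope paragraph after the proof of Proposition~\ref{prop:oe} says a joint statement about inflation, rates and output needs $(s_t,D_t)$ and the Phillips slope to stay latent. It takes that latency from Walls~I and~III of Appendix~\ref{app:walls} as the state of the evidence, not as a theorem. The ``every field observable'' clause therefore rests on maintained premises. Your fallback is what the argument, yours or the paper's, actually delivers: equivalence for the inflation history and the estimable rule, with the verdict conditional on maintained $(s_t,\lstar,\varphi_\pi,\lB)$.

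One bookkeeping inconsistency remains in your setup. You count $\rho(\lambda)$ among the identified objects, but $A(1-\lambda)=\kappa\sigma\varphi_\pi(1-\lambda)/(1+\sigma\varphi_y)$. With the composite held on the ridge, an identified $\rho$ therefore pins $\kappa\sigma/(1+\sigma\varphi_y)$. Neither your free direction in $\kappa$ nor the choice $\ep\kappa=1/\lcb-1$ then survives unless $\sigma$ or $\varphi_y$ absorbs the change. The paper sidesteps this by never inverting $\rho(\lambda)$ for structure (Remark~\ref{rem:forwardstatus}). Either drop $\rho(\lambda)$ from your list or add the compensating parameter.
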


\noindent\emph{Sketch.} Optimality is one equation, $\lcb=s_t\lstar+\lB(1-s_t\lstar)$---the same multiplicative composition that produces $\leff$---in objects of which equilibrium observables pin only the product $\varphi_\pi(1-\lambda)$ (Wall~II). With $s_t$ free on $[0,1]$ (Wall~III), $\lstar$ free on the image of the unbounded $\kappa$-sets, which reaches one at the flat limit (Wall~I), and $\lB$ free on $[0,1)$ (the Brainard reading of Appendix~\ref{app:optlambda}), both verdicts are attainable at any $\lcb<1$: setting $\lB=0$ and $s_t\lstar=\lcb$ makes the stance optimal; setting $s_t=0$ and $\lB=0$ makes the same stance pure unwarranted tolerance. Nor does the distance from a full-information benchmark grade the verdict on its own: at $s_t=0$ and $\lB=\lcb$ the stance is warranted caution---optimal while sitting the full $\lcb$ above the benchmark $s_t\lstar=0$---so even that distance separates nothing until the readings are maintained. Observational equivalence along the ridge (Wall~II) completes the argument.

The corollary fixes the paper's grammar of claims, and it says which data
are missing. Three grades run through everything that follows: the gap, its
second moments, and the occupancy of the region above target are
\emph{measured}; the accommodation share is \emph{bracketed only by published readings};
the episode verdict is \emph{undecidable}. And the missing data are not more
of the same: configurations along the ridge are separated only by
observables conditioned on states of announced magnitude and
composition---\emph{counterfactual menus}, in the language of
\citet{healy2026which}---which field histories never contain; what instruments
could construct them, in the laboratory and rarely in the field, is the
measurement program of Section~\ref{sec:measuring_tolerance}.

\paragraph{The walls, against the canonical monitoring solution.}
The profession has had a blueprint for policing exactly this object since \citet{svensson1997inflation}: make the bank's inflation forecast the intermediate target; have the bank reveal its model, information, and judgements; and let the public detect any \emph{implicit} target---his $z_t$, our $\lambda d_t+\xi_t$---by comparing the observed instrument with the one the revealed model prescribes, criticism then enforcing the official target. The mechanism's stated premise is that the public knows the model and its coefficients, and its benchmark inherits them: his optimal rule responds to inflation with $b_1=1/(\alpha_1\beta_2)$, $\alpha_1$ the Phillips slope, so at the flat, unidentified slope of Wall~I the benchmark is unboundedly sensitive to the slope error---and the comparison is in any case a rule comparison, Wall~II. Even executed with true coefficients, the extraction recovers the \emph{composite} deviation, never its split into tolerance and drift: the intercept entanglement of \eqref{eq:phieff}, Wall~II's second face, already present in his notation. He saw the objection---``the lack of knowledge, and resulting disagreement, about the appropriate macroeconomic model \ldots\ may be so substantial as to make both the implementation and monitoring of the target rule proposed here too difficult'' \citep[p.~1138]{svensson1997inflation}---and answered it with the incentive the framework gives the bank to learn its economy; \citet{vansteenberghe2026seventy} is the record of what that learning ran into. The walls therefore bind twice: they stop the econometrician's estimator, and they stop the public's monitor. What survives of the monitoring program, and what completes it, is taken up in Section~\ref{sec:living}.

\subsection{The band and its budget}
\label{app:budget}

The main text quotes one number from the band exercise---between three-fifths and three-quarters of the irreducible uncertainty in the episode optimum is the classification of the shock---and this subsection states and proves the proposition that produces it. The band itself is adopted from the published readings, not identified.

\paragraph{How much of the optimum is unknowable, and which input is responsible.}
Let the supply share be taken from the published readings as $s_t\in[\underline s,\overline s]$ with midpoint $\check{s}$, and the intensity as $[\underline\iota,\overline\iota]$ with midpoint $\check{\iota}$ (the check marks a midpoint and the bars the ends of an interval; the symbol $\iota$ is used here only for the interval of $\lstar$).

\begin{proposition}[The conditional band and the uncertainty budget of the episode optimum]\label{prop:budget}
The episode optimum \eqref{eq:optlambda_episode} is bracketed only to the band $\lstare\in[\underline s\,\underline\iota,\ \overline s\,\overline\iota]$. Sweeping each input alone across its interval, the supply share generates a fraction $\check{\iota}(\overline s-\underline s)\big/\!\big[\check{\iota}(\overline s-\underline s)+\check{s}(\overline\iota-\underline\iota)\big]$ of the resulting variation in the optimum. Under independent uniform priors on the two intervals,
\begin{equation}\label{eq:budget_var}
\Var\!\big(\lstare\big)=\check{\iota}^{2}\,\Var(s_t)+\check{s}^{2}\,\Var(\lstar)+\Var(s_t)\,\Var(\lstar),
\end{equation}
of which the supply share contributes $\check{\iota}^{2}\Var(s_t)\big/\Var(\lstare)$. All three statements are immediate from the product form $\lstare=s_t\lstar$.
\end{proposition}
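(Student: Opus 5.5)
The whole argument runs on the product form of the episode identity \eqref{eq:optlambda_episode}, $\lstare=s_t\lstar$, with both factors nonnegative. Write $s=s_t$ and $\iota=\lstar$, and treat the three statements in order: the band, the one-at-a-time sweep, and the variance decomposition.

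\emph{Band.} The map $(s,\iota)\mapsto s\iota$ is nondecreasing in each argument on the nonnegative orthant. On the rectangle $[\underline s,\overline s]\times[\underline\iota,\overline\iota]$ its minimum is therefore attained at the lower-left corner and its maximum at the upper-right corner. By continuity on a connected set, every intermediate value is attained. So the image is exactly $[\underline s\,\underline\iota,\ \overline s\,\overline\iota]$.

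\emph{Sweep.} I read ``sweeping each input alone'' as moving it across its interval while the other is held at its midpoint. Sweeping $s$ at $\iota=\check{\iota}$ moves the optimum by $\check{\iota}(\overline s-\underline s)$. Sweeping $\iota$ at $s=\check{s}$ moves it by $\check{s}(\overline\iota-\underline\iota)$. The stated fraction is the first movement divided by the sum of the two, which is just a definition once these ranges are computed.

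\emph{Variance.} Let $s$ and $\iota$ be independent, with means $\check{s}$ and $\check{\iota}$; the uniform priors give exactly these means. Independence gives
\[
\E[s^2\iota^2]=\E[s^2]\,\E[\iota^2]=(\Var s+\check{s}^2)(\Var\iota+\check{\iota}^2).
\]
Subtracting $(\E[s\iota])^2=\check{s}^2\check{\iota}^2$ leaves
\[
\Var(s\iota)=\check{\iota}^2\Var s+\check{s}^2\Var\iota+\Var s\,\Var\iota,
\]
which is \eqref{eq:budget_var}. The share attributed to $s$ is its first-order term $\check{\iota}^2\Var s$ divided by the total.

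The only real obstacle is interpretive rather than computational. The decomposition contains an interaction term $\Var s\,\Var\iota$ that belongs to neither input. The proposition's attribution assigns the supply share only its own first-order term and leaves the interaction unassigned. I would state this explicitly, noting that with $\Var s=(\overline s-\underline s)^2/12$ and $\Var\iota=(\overline\iota-\underline\iota)^2/12$ the interaction term is small. Without that remark, the ``three-quarters'' reading could be mistaken for a full partition of the variance.
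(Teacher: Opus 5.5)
Your proposal is correct and follows the paper's own argument: the variance identity is derived exactly as in the paper, from independence and $\E[s^2]\E[\iota^2]-\check{s}^2\check{\iota}^2$. Your explicit treatment of the band (monotonicity on the nonnegative rectangle) and of the sweep (the other input held at its midpoint, which reproduces the paper's $0.20$ against $0.12$) spells out what the paper calls immediate; the paper adds only a remark outside the proposition's hypothesis, that $s_t$ and $\lstar$ both fall with $\kappa$, so their positive correlation makes \eqref{eq:budget_var} a lower bound on the variance.
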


At the surge readings $[\underline s,\overline s]=[0.50,0.75]$ and the calibrated band $[\underline\iota,\overline\iota]=[0.72,0.91]$ of \eqref{eq:lstar-model}, the band is $\lstare\in[0.36,0.68]$. Prior-free, the supply share moves the optimum by $0.20$ against the intensity's $0.12$---$63\%$ of the one-at-a-time variation; under independent uniform priors it carries $74\%$ of the variance against $25\%$ for the intensity and below $1\%$ for their interaction. Either convention delivers the reading quoted in the main text: between three-fifths and three-quarters of the irreducible uncertainty in optimal tolerance is the inability to classify the shock.

\paragraph{Proof.} The episode optimum is the product $\lstare=s_t\,\iota$ with $\iota\equiv\lstar$. Take the two factors independent, with $\E s=\check{s}$ and $\E\iota=\check{\iota}$ under the uniform priors. For independent factors,
\[
\Var(s\iota)=\E[s^2]\E[\iota^2]-(\E s)^2(\E \iota)^2
=\check{\iota}^{\,2}\Var(s)+\check{s}^{\,2}\Var(\iota)+\Var(s)\Var(\iota) ,
\]
which is \eqref{eq:budget_var}. The three terms are the classification channel, the calibration channel, and their interaction; the supply share's contribution to the total is $\check{\iota}^{2}\Var(s)/\Var(\lstare)$. Independence is the only substantive assumption, and it does not hold exactly: $\lstar=1/(1+\ep\kappa)$ and $s_t=u_t/(u_t+\tfrac{\kappa}{1+\sigma\varphi_y}v_t)$ are both decreasing in the same $\kappa$, which is precisely the parameter the paper treats as unknown, so the induced correlation is positive. It cuts the safe way. Positive dependence raises the variance---to first order by $2\check{s}\check{\iota}\Cov(s,\iota)>0$, and on the draw below by more---so \eqref{eq:budget_var} is a \emph{lower} bound on the budget and the band it supports is conservative; on a draw with $\kappa\sim U[0.05,0.30]$, $\ep=6$ and equal impulse weights ($u=v$), the three-term formula understates $\Var(s\iota)$ by roughly three-tenths at
the demand block's $\sigma=1$, $\varphi_y=0.5$ (by a third with the demand
weight $\kappa'=\kappa$). Conditional independence of the estimated model's components---target, correction, innovation---is a different statement and does not license this step, which is why we take independence here as an explicit and signed approximation rather than as an inherited assumption.

\subsection{The perceived-rule band: NY Fed scenario matrices}
\label{app:spd_panel}

The instrument is the New York Fed's Survey of Primary Dealers and Survey of Market Participants, which since 2023 have asked respondents for the policy rate they expect under stated macroeconomic scenarios. A scenario matrix is a set of conditional rate paths, and the dispersion across respondents within a cell is a direct read on how far apart market participants are about the reaction function itself---not about the outlook.

\paragraph{The literature.} The perceived reaction function has been measured
in three ways, none of which conditions on the state. The first estimates it
from realized histories: \citet{hamilton2011estimating} recover the rule market
participants use from the joint response of rate and macroeconomic forecasts
to news, and \citet{bauer2024perceptions} estimate it from a panel of
professional forecasts, finding a perceived dependence of the funds rate on
economic conditions that varies substantially over the policy cycle and is
revised on monetary surprises---imperfect information about the rule itself.
The second reads it off forecast errors: \citet{cieslak2018short} and
\citet{schmeling2022monetary} document large, persistent errors in the
expected policy path, concentrated in downturns, when the public
underestimates how much the Fed eases---a rule learned slowly, from rare
states. The third asks respondents directly: \citet{carvalho2014people} infer
a Taylor-type rule from the co-movement of households' inflation, unemployment
and interest-rate answers, \citet{andre2022subjective} elicit the perceived
effects of a rate rise from hypothetical vignettes and find experts and
households disagreeing even on its sign, and the New York Fed's dealer survey
has itself been read for the perceived rule---\citet{femia2013effects} trace
in it how the 2011--12 guidance moved dealers' expectation of the unemployment
rate at which the Committee would first raise rates. Dispersion among
professionals about the policy path under common information is documented
for forward guidance by \citet{andrade2019forward} and priced, as uncertainty
about the path, in interest-rate options by \citet{bauer2022market}. What the
scenario matrix adds is the missing conditioning. Every respondent answers the
same counterfactual states at the same date, so the cross-respondent dispersion
of a cell is dispersion about the reaction function and not about the
outlook---the separating menu of \citet{healy2026which} that field histories
never supply---and the slope across cells is a finite difference each
respondent reports, not a coefficient estimated from the histories on which,
by \eqref{eq:phieff}, the rule is not identified.

\paragraph{The panel.} Fifteen survey-panel rows across six survey dates, with dealer, buy-side and combined readings. The nine-cell mean interquartile ranges run from $26.3$ to $95.8$ basis points. None of them approaches the $12.5$ basis-point grid on which respondents answer, which is the check that the dispersion is substantive rather than an artefact of the response scale.

\paragraph{Proximity, and what it is worth.} The buy side is at least as dispersed as the dealers in \textbf{four of six} instances, on the nine-cell mean and on the centre cell alike, and the two measures fail on overlapping but not identical surveys---July 2025 being the one instance that fails on both. On the nine-cell mean the exceptions are April 2024 ($33.3$ against $43.2$) and July 2025 ($29.9$ against $30.6$); on the centre cell they are July 2025 ($0$ against $25$) and April 2026 ($31$ against $44$). Two of the six comparisons are knife-edge---July 2025 turns on $0.7$bp against a $12.5$bp grid step, and April 2024's centre cell on a $25$-against-$25$ tie---so the count itself should not be leaned on. The state-pinned dispersion exceeds the matched-horizon unconditional dispersion in seven of the ten dealer and buy-side rows, and in eight of all thirteen rows with a nonzero comparator; the restriction that produces it is the ridge of \eqref{eq:phieff}: histories reach the rule only through $\varphi_\pi(1-\lambda)$, so common information cannot narrow the coefficient behind it.

\paragraph{What the panel does not support.} It does not support a trend \emph{in the nine-cell mean}: six survey dates over three years cannot date a narrowing of the band, and the apparent decline from 2023 to 2026 is within the dispersion of the cells themselves. The centre-cell series is a separate object and moves the other way---it rises from zero to $44$bp across the six dates---and that reading is weakly monotone, with three consecutive ties, on six points. It does not support a level comparison against the euro area, which has no equivalent instrument. And it does not identify the reaction function, which is the point---the band is evidence that the function is not agreed upon, not a measurement of it.

\paragraph{How the perceived responses are computed.} Each matrix reports,
for nine hypothetical year-end states---core PCE inflation and the
unemployment rate each at the current SEP median, fifty basis points below it
and fifty above it---the cross-respondent quartiles and count of the target
rate expected at that year-end. Write $m(\pi_r,u_c)$ for the median cell in
inflation row $r$ and unemployment column $c$. The perceived responses are
\[
\frac{\partial i}{\partial\pi}=\frac13\sum_{c}\big[m(\pi_{+50},u_c)-m(\pi_{-50},u_c)\big],
\qquad
\frac{\partial i}{\partial u}=\frac13\sum_{r}\big[m(\pi_r,u_{+50})-m(\pi_r,u_{-50})\big]:
\]
the high-inflation row minus the low-inflation row, averaged over the three
unemployment columns, and its transpose. The two scenarios in each difference
are one percentage point apart, so each number is the median respondent's
change in the expected target rate per point of the conditioning variable, in
the survey's own units---a finite difference across counterfactual states, not
a regression coefficient, and one that needs no model of the rule. Three
properties follow. The intercept common to the nine cells---the unconditional
level, the respondent's view of $r^{*}$, every state the question does not
pin---cancels in the difference, so the slope reading is free of the level
confound that the rule-residual route carries (Wall~II). The cells are
published medians, so the slope is the median respondent's rather than the
mean of individual slopes, and without the micro data it carries no standard
error; what is said of it is ordinal. And the horizon enters as partial
adjustment: a target rate three months away can absorb fewer meetings than
one fifteen months away, so a shorter horizon mechanically damps the slope,
which is why only matched-horizon comparisons are read---the two eight-month
instances, April--May 2024 and April 2026. Answers sit on a $12.5$ basis-point
grid and the published percentiles are interpolated, so a difference of a few
basis points at a single instance is noise. Script: \texttt{tabA\_nyfed\_scenario\_panel.py}.

\paragraph{The perceived responses.} On the two eight-month instances the perceived inflation response falls from $+0.96$ to $+0.58$ per point while the unemployment response stays within its range ($-0.54$ to $-0.67$); through $\varphi_\pi^{\mathrm{eff}}=\varphi_\pi(1-\lambda)$ that is the direction of a tolerance-regime switch, not its size---the baseline $\varphi_\pi$ is unidentified, and no reading of the tolerated share is imported here.

\begin{table}[htbp]\centering
\small
\resizebox{\linewidth}{!}{\begin{tabular}{lcccccccc}
\toprule
Survey & $n$ & Horizon & \multicolumn{3}{c}{Cross-dealer IQR (bp)} & Ratio & \multicolumn{2}{c}{Perceived response}\\
\cmidrule(lr){4-6}\cmidrule(lr){8-9}
 & & (months) & 9-cell mean & center cell & uncond.\ path & cond./uncond. & $\partial i/\partial\pi$ & $\partial i/\partial u$\\
\midrule
Mar.\ 2023     & 25 & 9  & 33.3 & 0  & 25.0 & 1.33 & $+0.83$ & $-0.58$\\
Sep.\ 2023     & 22 & 15 & 50.7 & 19 & 88.0 & 0.58 & $+1.04$ & $-0.79$\\
Apr./May 2024  & 24 & 8  & 43.2 & 25 & 32.0 & 1.35 & $+0.96$ & $-0.54$\\
Sep.\ 2024     & 23 & 3  & 34.7 & 25 & 0.0  & ---  & $+0.50$ & $-0.58$\\
Jul.\ 2025     & 24 & 5  & 30.6 & 25 & 25.0 & 1.22 & $+0.58$ & $-0.58$\\
Apr.\ 2026     & 26 & 8  & 26.3 & 44 & 50.0 & 0.53 & $+0.58$ & $-0.67$\\
\bottomrule
\end{tabular}
}
\caption{The perceived-rule band: dealer answers to the NY Fed scenario-matrix question.
Each survey asks for the year-end target rate under nine hypothetical states (2024
files: values recovered from the published PDFs, the machine-readable files being
zeroed at source; March 2023 predates the workbook format and was likewise read from
the published PDFs; later instances machine-read and cross-validated---the April 2026
matrix matches an independent PDF reading in all 27 cells). ``9-cell mean''
and ``center cell'' are cross-dealer interquartile ranges of the state-pinned answers;
``uncond.\ path'' is the IQR of the unconditional modal-path question at the same
horizon; the ratio uses these two (Sep.\ 2024 omitted: unconditional IQR zero, a full
hold consensus against 25--50bp conditional cells). Perceived responses are median cell
differences per 100bp of the scenario variable; horizons differ across instances, so
only matched-horizon slope comparisons are meaningful. Responses sit on a 12.5bp
grid; $n$ is dealers answering the matrix. Buy-side rows are included in the panel;
methods above. Script: \texttt{tabA\_nyfed\_scenario\_panel.py}.}
\label{tab:spd_panel}
\end{table}

\end{document}